    \title{New Algorithms and Lower Bounds for Streaming Tournaments}
    \author{Prantar Ghosh\footnote{(prantar.ghosh@gmail.com) Department of Computer Science, Georgetown University. Supported in part by NSF under award 1918989. Part of this work was done while the author was at DIMACS, Rutgers University, supported in part by a grant (820931) to DIMACS from the Simons Foundation.}
    \and 
    Sahil Kuchlous\footnote{(sahilkuchlous@college.harvard.edu) Harvard University. Research done in part as a participant in the DIMACS REU program 2023 at Rutgers University, supported by NSF grant CCF-2150186.}
    }
    \date{}
\newcommand{\xqedhere}[2]{\rlap{\hbox to#1{\hfil\llap{\ensuremath{#2}}}}}
\newcommand{\eps}{\varepsilon}
\newcommand{\mypar}[1]{\medskip\noindent{\bfseries #1.}~}
\newcommand{\bx}{\mathbf{x}}
\newcommand{\by}{\mathbf{y}}
\newcommand{\cA}{\mathcal{A}}
\newcommand{\cF}{\mathcal{F}}
\newcommand{\cO}{\mathcal{O}}
\newcommand{\cS}{\mathcal{S}}
\newcommand{\tO}{\widetilde{O}}
\newcommand{\tOmega}{\widetilde{\Omega}}
\newcommand{\tTheta}{\widetilde{\Theta}}
\renewcommand{\geq}{\geqslant}
\newcommand{\din}{d_{\text{in}}}
\newcommand{\dout}{d_{\text{out}}}
\newcommand{\scc}{\mathrm{SCC}}
\newcommand{\idx}{\textsc{index}\xspace}
\newcommand{\spidx}{\textsc{sp-index}\xspace}
\newcommand{\disj}{\textsc{disj}\xspace}
\newcommand{\tedge}{\textsc{t-edge}\xspace}
\newcommand{\fas}{\textsc{fas}\xspace}
\newcommand{\fast}{\textsc{fas-t}\xspace}
\newcommand{\fassiz}{\textsc{fas-size}\xspace}
\newcommand{\fassizt}{\textsc{fas-size-t}\xspace}
\newcommand{\hpath}{\textsc{ham-path}\xspace}
\newcommand{\hcyc}{\textsc{ham-cycle}\xspace}
\newcommand{\hpatht}{\textsc{ham-path-t}\xspace}
\newcommand{\hcyct}{\textsc{ham-cycle-t}\xspace}
\newcommand{\acyc}{\textsc{acyc}\xspace}
\newcommand{\sconn}{\textsc{str-conn}\xspace}
\newcommand{\sconnt}{\textsc{str-conn-t}\xspace}
\newcommand{\sccdect}{\textsc{scc-dec-t}\xspace}
\newcommand{\sccdec}{\textsc{scc-dec}\xspace}
\newcommand{\acyct}{\textsc{acyc-t}\xspace}
\newcommand{\sink}{\textsc{sink-dag}\xspace}
\newcommand{\sinkt}{\textsc{sink-dag-t}\xspace}
\newcommand{\setint}{\textsc{set-int}\xspace}
\newcommand{\reach}{\textsc{reach}\xspace}
\newcommand{\reacht}{\textsc{reach-t}\xspace}
\newcommand{\stdist}{\textsc{stdist}\xspace}
\newcommand{\stdistt}{\textsc{stdist-t}\xspace}
\newcommand{\uor}{\textsc{unique-or}\xspace}
\newcommand{\bzero}{\mathbf{0}}
\newcommand{\be}{\mathbf{e}}
\newcommand{\prantar}[1]{
}
\newcommand{\sahil}[1]{
}
\newcommand{\todo}[1]{
}
\newtheorem{theorem}{Theorem}[section]
\newtheorem{lemma}[theorem]{Lemma}
\newtheorem{corollary}[theorem]{Corollary}
\newtheorem{fact}[theorem]{Fact}
\newtheorem{remark}[theorem]{Remark}
\theoremstyle{definition}  \newtheorem{definition}[theorem]{Definition}
\algnewcommand{\LineComment}[1]{\Statex \hskip\ALG@thistlm \textcolor{blue}{$\triangleright$ #1}}
\algnewcommand{\LineCommentI}[1]{\Statex \hskip\ALG@thistlm \hskip\algorithmicindent \textcolor{blue}{$\triangleright$ #1}}
\begin{document}

   \maketitle \pagenumbering{roman}
   
    \begin{abstract}
    We study fundamental directed graph (digraph)
    problems in the streaming model. An initial 
    investigation by Chakrabarti, Ghosh, McGregor, and 
    Vorotnikova~[SODA'20] on streaming digraphs showed 
    that while most of these problems are provably hard 
    in general, some of them become tractable when 
    restricted to the well-studied class of 
    \emph{tournament} graphs where every pair of nodes 
    shares exactly one directed edge. Thus, we focus on 
    tournaments and improve the state of the art for 
    multiple problems in terms of both upper and lower bounds. 
    
    Our primary upper bound is a deterministic 
    single-pass semi-streaming algorithm (using $\tO(n)$ 
    space for $n$-node graphs, where $\tO(.)$ hides 
    polylog$(n)$ factors) for decomposing a tournament into 
    \emph{strongly connected components} (SCC). It is significant 
    from various angles: (i) it improves upon the previously best-known 
    algorithm by Baweja, Jia, and Woodruff [ITCS'22] in 
    terms of both space and passes: for $p\geq 1$, they 
    used $(p+1)$-passes and $\tO(n^{1+1/p})$-space, (ii) 
    it settles the single-pass streaming 
    complexities of fundamental problems like {\em strong 
    connectivity} and {\em reachability} on tournaments 
    as we show matching lower bounds for them, even for 
    randomized algorithms, and (iii) we use it as a 
    subroutine to obtain the first streaming algorithm 
    for \emph{Hamiltonian cycle} and improved algorithms 
    for \emph{Hamiltonian path} and \emph{feedback arc 
    set}. Further, we generalize the technique to obtain 
    \emph{optimal} single-pass algorithms for SCC decomposition 
    of a broader class of digraphs that are ``close'' to tournaments. 
    Our tight bounds demonstrate that the problem's complexity grows 
    smoothly with the ``distance'' from tournaments.

    While tournaments were indeed known to admit 
    efficient streaming algorithms for several problems, 
    there was a lack of \emph{strong} lower bounds for 
    these graphs. We prove the first $\Omega(n^2)$-space 
    lower bounds for this class, exhibiting that some 
    well-studied problems---such as (exact) 
    \emph{feedback arc set on tournaments} (\fast) and 
    \emph{$s,t$-distance}---remain hard here. In fact, we 
    obtain a stronger generalized lower bound on 
    space-approximation tradeoffs for \fast: any single-pass 
    $(1\pm \eps)$-approximation algorithm requires 
    $\Omega(n/\sqrt{\eps})$ space.  While the said work of 
    Chakrabarti et al.~gave an $\tO(n/\eps^2)$-space upper bound 
    for this problem, no non-trivial lower bound was known. Finally, 
    we settle the streaming complexities of two basic digraph problems 
    studied by prior work: \emph{acyclicity testing} of tournaments 
    and \emph{sink finding} in DAGs. As a whole, our collection of 
    results contributes significantly to the growing literature on streaming digraphs.
    \end{abstract}

    \newpage
    \tableofcontents

    \newpage
    \pagenumbering{arabic}

    \section{Introduction}\label{sec:intro}

A graph streaming algorithm reads an input graph by
making one or a few passes over a sequence of its 
edges, while maintaining a small summary that it uses to solve an underlying problem. While any 
memory sublinear in the number of edges is 
interesting, we typically aim for \emph{semi-streaming} space, i.e., $\tO(n)$ 
space\footnote{Throughout the paper, we use 
$\tO(f)$ to hide polylog$(f)$ factors.} for $n$-node graphs: Feigenbaum et 
al.~\cite{feigenbaumkmsz05} observed that many 
graph problems become tractable at this bound. The 
study of such algorithms is motivated by modern 
large graphs such as web graphs given by 
hyperlinks, social networks given by ``follows,'' 
citation networks, and biological networks. 
Chakrabarti, Ghosh, McGregor, and Vorotnikova 
\cite{ChakrabartiGMV20} observed that while many of 
these graphs are \emph{directed}, the graph 
streaming literature spanning over two decades had 
hitherto focused almost exclusively on undirected 
graphs (see \cite{McGregor14} for a survey), with 
very few exceptions. In light of this, they 
conducted a thorough investigation on streaming 
directed graphs (digraphs) and laid a foundation for their study in this model. 

Their findings, however, showed that most fundamental digraph problems are, in general, provably hard in streaming, perhaps justifying the lack of focus on these problems in prior work. At the same time, they found that restriction to \emph{tournament} graphs---where every pair of vertices shares exactly one directed edge---make many of these problems tractable yet non-trivial in streaming.  Observe that tournaments, by definition, are dense graphs with $\Theta(n^2)$ edges, and hence, one of the prime examples of graphs that motivate memory-efficient streaming algorithms. Furthermore, real-world graphs, e.g., ones representing comparison matrices that record pairwise preferences among a large set of items, are tournaments \cite{GassTournaments} and seek succinct sketches in the modern world of big data. Thus, the study of tournaments is well-motivated in the streaming model and in this work, we significantly contribute to this study.

Chakrabarti et al.~\cite{ChakrabartiGMV20} studied classical digraph problems including 
$s,t$-\textit{reachability}, \textit{topological sorting}, \textit{acyclicity testing}, and 
(approximate) \textit{feedback arc set}. These problems turn out to be hard for general digraphs: they need $\Omega(n^2)$ space in a single pass 
\cite{feigenbaumkmsz05, ChakrabartiGMV20} (i.e., we 
need to store almost the entire graph) and roughly 
$n^{1+\Omega(1/p)}$ space for $p$ passes 
\cite{GuruswamiO13, ChakrabartiGMV20}. 
Further, \cite{ChakrabartiGMV20} extended the multipass 
lower bounds to random-order streams. In contrast, they showed that the last three problems in the above list
admit single-pass semi-streaming algorithms on tournaments. However, two important questions remained: (i) ``\emph{what is the streaming complexity of $s,t$-reachability on tournaments?}'' and (ii) ``{\em which problems are hard on streaming tournaments?}'' 

Reachability on general digraphs, one of the most fundamental graph algorithmic problems, has received considerable attention in the streaming literature. It has strong lower bounds establishing $\Omega(n^{2-o(1)})$ space to be necessary even for $O(\sqrt{\log n})$ passes \cite{ChenKPSSY21-lognpass} (also see \cite{AssadiR20} who proved it for $2$ passes), while almost $\log n$ passes are needed for semi-streaming space \cite{GuruswamiO13}. On the upper-bound side, Assadi et al.~\cite{AssadiJJST22} designed an $O(\sqrt{n})$-pass semi-streaming algorithm. Closing this gap between the number of passes needed for semi-streaming space seems difficult and it is one of the major open questions in multipass streaming (see the very recent survey by Assadi \cite{AssadiMultipassSurvey}). Thus, resolving the complexity of the problem for the special class of tournaments seems to be a natural basic step before we try it for the general case. 

 Baweja, Jia, and Woodruff \cite{BawejaJW22} gave the first algorithm for reachability on tournaments: although they didn't explicitly mention it, a $(p+1)$-pass $\tO(n^{1+1/p})$-space algorithm follows from their SCC (strongly connected components) decomposition algorithm for tournaments that uses the same space-pass tradeoff. But this bound was not proven to be tight for either reachability or SCC decomposition,\footnote{For general digraphs, \cite{BawejaJW22} proved an $n^{1+\Omega(1/p)}$ space lower bound for $p$-pass algorithms solving SCC decomposition. This, however, doesn't extend to tournaments, and hence, doesn't complement the said upper bound.} and hence the former's streaming complexity on tournaments still remained unresolved. Additionally, one can ask more generally, ``\emph{what is the streaming complexity of SCC decomposition on tournaments?}''   

In this work, we answer this general question by designing a deterministic single-pass semi-streaming algorithm for SCC decomposition on tournaments. In fact, our algorithm works for a broader class of digraphs: those with \emph{at least} one directed edge between each pair of vertices. Our algorithm is very simple: during the stream we only need to store the degrees of each vertex! The post-processing phase and its analysis are more elaborate, and we demonstrate how we can derive the \emph{SCC-graph} (obtained by contracting each SCC into a supernode) from just the degree information.\footnote{We were informed via personal communication that it was known how to test strong connectivity using the degree information as it appeared as an IOITC (International Olympiad in Informatics Training Camp) problem, but we did not find any published result or documented version of it. Nevertheless, finding the SCC-graph requires considerably more work than just testing strong connectivity.} Consequently, we get single-pass semi-streaming algorithms for reachability and strong connectivity on tournaments. We prove matching lower bounds for each of these problems, thus settling their complexity as well.

We extend our algorithms to handle graphs that are ``almost'' tournaments. Precisely, for digraphs that need to add or delete a total of $k$ edges to form a tournament (i.e., are \emph{$k$-close} to tournament), we obtain $\tO(n+k)$-space solutions by using our tournament algorithms as subroutines. Additionally, our matching $\Omega(n+k)$-space lower bounds for these graphs show that this is as good as generalizations can get: it exhibits how the complexity of the problem increases smoothly with the input's ``distance'' from the tournament property and highlights the importance of the property in attaining sublinear-space solutions. 
We show further applications of our SCC decomposition framework for tournaments: it yields efficient streaming algorithms for problems like \emph{Hamiltonian cycle}, \emph{Hamiltonian path}, and {\em feedback arc set} on tournaments (see \Cref{sec:results} for details). 

As we continue to find efficient tournament algorithms for well-studied problems, it is natural to revisit the second important question mentioned above: which streaming problems remain hard on this class of graphs? Prior work did not show any strong lower bound on these graphs; in particular, we did not know any tournament problem with a single-pass $\Omega(n^2)$-space lower bound. We prove the first such lower bound for two well-studied problems, namely $s,t$-\emph{distance} and (exact) \emph{feedback arc set} on tournaments (\fast), and extend the latter result to a generalized lower bound for space-approximation tradeoffs. We prove this general result using communication complexity, as is standard for streaming lower bounds, but the underlying communication lower bound uses a stronger version of the \emph{direct-sum} argument \cite{chakrabartiswy01} that might be of independent interest.

Finally, we settle the streaming complexities of two basic problems studied by prior work: acyclicity testing of tournaments and sink finding in DAGs. Collectively, our results significantly advance the state of the art of streaming digraph problems.

\subsection{Our Contributions and Comparison to Prior Work}\label{sec:results}

\begin{table*}[!hbt]
\begin{minipage}{\textwidth} 
\centering
\begin{tabular}{c c c c c}
\toprule
{\bf Problem}
& {\bf Passes}  
& {\bf Space} 
& {\bf Reference}
& {\bf Notes}
\\
\midrule
\sccdect
& $1$
& $\Theta(n\log n)$
& \Cref{cor:sccdect}
& lower bound due to output size
\\
\sccdect 
& $p+1$ 
& $\tO(n^{1+1/p})$
& \cite{BawejaJW22}
& $p\geq 1$\\
 \sccdec  
 & $1$
 & $\tTheta(n+k)$
 & \Cref{cor:scc-gen-ub,cor:scc-gen-lb}
 & $k$ = closeness to tournament
\vspace{0.5em}
\\
\reacht, \sconnt
& $1$
& $\tO(n)$
& \Cref{cor:reacht,cor:strongconnt}
& \\
\reacht, \sconnt
& $p$
& $\Omega(n/p)$
& \Cref{thm:reach-lb,thm:strongconn-lb}
& $p\geq 1$
\\
\reacht, \sconnt  
& $p+1$ 
& $\tO(n^{1+1/p})$
& \cite{BawejaJW22}
& $p\geq 1$; \reacht result implicit\\
\reach, \sconn 
 & $1$
 & $\tTheta(n+k)$
 & \Cref{thm:reach-gen-ub,thm:reach-gen-lb}
 & $k$ = closeness to tournament
\vspace{0.5em}
\\
\hcyct
& $O(\log n)$
& $\tO(n)$
& \Cref{thm:hamcyc}
\vspace{0.5em}
\\
\fast
& $1$
& $\Omega(n^2)$
& \Cref{lem:exactfast}\\
$(1+\eps)$-approx.~\fast
& $1$
& $\Omega\left(\min\{n^2,n/\sqrt{\eps}\}\right)$
& \Cref{thm:fas-main}
& any $\eps>0$\\
$(1+\eps)$-approx.~\fast
& $1$
& $\tO\left(\min\{n^2,n/\eps^2\}\right)$
& \cite{ChakrabartiGMV20}
& any $\eps>0$
\vspace{0.5em}
\\
\stdistt
& $1$
& $\Omega(n^2)$
& \Cref{thm:stdist}
\vspace{0.5em}
\\
\acyct
& $p$
& $\tO(n/p)$
& \Cref{thm:acyc-t-ub}
& $p\geq 1$\\
\acyct
& $p$
& $\Omega(n/p)$
& \cite{ChakrabartiGMV20}, also \Cref{acyct-lbsimple}
& $p\geq 1$
\vspace{0.5em}
\\
\sink
& $p$
& $\Omega(n/p)$
& \Cref{thm:sinkfind-lb}
& $p\geq 1$\\
\sink
& $p$
& $O(n/p)$
& trivial
& $p\geq 1$\\
\bottomrule
\end{tabular}

\end{minipage}
\caption{%
Summary of our main upper and lower bounds, along with the state-of-the-art results for comparison. The problems are formally defined in \Cref{sec:results}. 
}
\label{table:results}
\end{table*}

\Cref{table:results} summarizes our main results. Below, we define each problem formally and discuss the context and comparisons to prior work. For each problem, the suffix ``\textsc{-t}'' indicates the version of the problem where the input is a tournament.

\mypar{SCC decomposition, strong connectivity, and reachability} Recall that a graph is called strongly connected if and only if each vertex is reachable from every other vertex. A strongly connected component (SCC) of a graph is a maximal subgraph that is strongly connected. In the streaming model, by SCC decomposition, we mean partitioning the vertex set into subsets where each subset induces an SCC (we don't need the edges contained inside the SCCs). Formally, the problem is defined as follows. 
\begin{description}
    \item \sccdec: Given a digraph $G=(V,E)$, output a partition $(V_1,\ldots,V_\ell)$ of $V$ such that each $V_i$ induces a strongly connected component of $G$. 
\end{description}
\noindent
We design a single-pass semi-streaming algorithm for \sccdect (\Cref{cor:sccdect}). Note that our algorithm not only outputs the partition, but the SCC-graph which is a DAG obtained by
contracting each SCC into a (super)node. Observe that 
since the SCC-graph of a tournament is an acyclic 
tournament, it can be represented simply as an ordering 
of its SCC's, where the edges between the components are 
implicit: all of them go from left to right. Our algorithm outputs this ordering. 

As a consequence, we get single-pass semi-streaming algorithms for checking reachability (\Cref{cor:reacht}) and strong 
 connectivity (\Cref{cor:strongconnt}) on tournaments. The general problems are defined below.

\begin{description}
    \item \sconn: Given a digraph $G=(V,E)$, output whether it is strongly connected.

    \item \reach: Given a digraph $G=(V,E)$ and nodes $s,t\in V$, output whether $t$ has a directed path from (i.e., is \emph{reachable} from) $s$. 
\end{description}
\noindent
Baweja, Jia, and Woodruff \cite{BawejaJW22} gave $\tO(n^{1+1/p})$-space algorithms for \sccdect and \sconnt using $p+1$ passes for any $p\geq 1$. An algorithm for \reacht using same space and passes is implicit, since it can be determined from the SCC-graph (which their algorithm also outputs). Observe that their algorithm needs almost $\log n$ passes to achieve semi-streaming space. Further, it needs at least $3$ passes to even attain sublinear ($o(n^2)$) space. In contrast, we achieve semi-streaming space in just a single pass. 

Further, \cite{BawejaJW22} showed an $n^{1+\Omega(1/p)}$ space lower bound for $p$-pass algorithms solving \sccdec or \sconn, while \cite{GuruswamiO13} gave a similar lower bound for \reach. Our results rule out the possibiity of extending these lower bounds to tournaments, and show a large gap between the complexity of each problem and its tournament version.

While semi-streaming space is clearly optimal for \sccdect since it is the space needed to simply present the output, it is not clear that the same holds for \sconnt and \reacht which have single-bit outputs. We prove matching $\Omega(n)$-space single-pass lower bounds for these problems (\Cref{thm:reach-lb,thm:strongconn-lb}). In fact, we prove more general lower bounds that show that semi-streaming space is optimal for them (up to polylogarithmic factors) even for polylog$(n)$ passes. It is important to note that while our upper bounds are deterministic, our lower bounds work even for randomized algorithms, thus completely settling the single-pass complexity of these problems (up to logarithmic factors). 

\mypar{Extension to almost-tournaments and tight bounds} A possible concern about our results
is that they are restricted to 
tournaments, a rather special class 
of digraphs. Indeed, our results are provably not generalizable to any arbitrary digraph since there exist graphs that don't admit sublinear-space solutions. Can we still cover a broader class of graphs? To address this, we first 
show that our algorithms work as is for any digraph without non-edges (hence allowing bidirected edges between arbitrary number of vertex pairs). Secondly, we generalize our algorithm to work for digraphs that are ``close to'' tournaments (but can have non-edges). A standard measure of closeness to a graph property $P$ (widely used in areas such as Property Testing) is the number of edges that need to be added or deleted such that the graph satisfies $P$. In similar vein, we define a digraph $G$ to be \emph{$k$-close} to a tournament if a total of at most $k$ edge additions and deletions to $G$ results in a tournament. We design an $\tO(n+k)$-space algorithm for any such digraph. Thirdly, and perhaps more importantly, we prove that this is tight: there exist digraphs $k$-close to tournaments where $\Omega(n+k)$ space is necessary. This exhibits a smooth transition in complexity based on the ``distance'' from the tournament property. We see this result as an important conceptual contribution of our work: ``tournament-like'' properties are indeed necessary to obtain efficient streaming algorithms for these problems and justifies the almost-exclusive focus of prior work \cite{ChakrabartiGMV20, BawejaJW22} as well as ours on the specific class of tournaments. 

Our algorithms for almost-tournaments use our scheme for (exact) tournaments as a subroutine. We call it $2^{\min(k,n)}$ times in the worst case, and we can do so without any error by leveraging the fact that our algorithm is deterministic. Although the streaming model doesn't focus on time complexity, the exponential runtime might be unsatisfactory. But we show that it is somewhat necessary, at least for algorithms that use a tournament algorithm as a blackbox (see \Cref{subsec:strconn-reach} for details). This is similar in spirit to a result of \cite{ChakrabartiGMV20} that justified the exponential runtime of their algorithm for approximate \fast. 

\mypar{Hamiltonian cycle and Hamiltonian path} We exhibit applications of our \sccdect algorithm to obtain efficient algorithms for the problems of finding Hamiltonian cycles and Hamiltonian paths.

\begin{description}
    \item \hcyc: Given a digraph $G=(V,E)$, find a directed cycle that contains all nodes in $V$ if one exists or output \textsc{none} otherwise.
    
    \item \hpath: Given a digraph $G=(V,E)$, find a directed path that contains all nodes in $V$ if one exists or output \textsc{none} otherwise.
\end{description}

\noindent
Note that although these problems are NP-hard in general \cite{Karp72}, they admit polynomial time solutions for tournaments \cite{beineke1978selected}. Further, it is known that every tournament contains a Hamiltonian path, but it may or may not contain a Hamiltonian cycle. A tournament is Hamiltonian if and only if it is strongly connected \cite{camion1959chemins}, and so we can use our \sconnt algorithm to determine the existence of a Hamiltonian cycle in a single-pass and semi-streaming space. Thus, the problem boils down to finding such a cycle if it exists. 

We give an $O(\log n)$-pass semi-streaming algorithm for \hcyct (\Cref{thm:hamcyc}). Although \cite{BawejaJW22} gave such an algorithm for \hpatht, to our knowledge, no streaming algorithm for \hcyct was known. We design our algorithm by using \sccdect as a primitive and devising a streaming simulation of a parallel algorithm by Soroker \cite{SorokerHamCycle}. Additionally, we prove that if one instead used \cite{BawejaJW22}'s \sccdect algorithm as the subroutine, it would lead to an $O(\log^2 n)$-pass algorithm in the worst case.

As another application of our SCC framework, we obtain improved algorithms for \hpatht when the input tournament is known to have somewhat small connected components. Observe that in the extreme case when the maximum size $s$ of an SCC is $1$, the tournament is a DAG, and a trivial algorithm that sorts the nodes by their indegrees finds the Hamiltonian path in a single semi-streaming pass. This is significantly better than the general $p$-pass $\tO(n^{1+1/p})$-space algorithm of \cite{BawejaJW22}. We demonstrate that even when $s$ is larger but still ``reasonably small'', specifically, whenever $s< n^{(p-1)/p}$ for some $p\geq 1$, we get better $p$-pass algorithms than \cite{BawejaJW22} and the space complexity grows smoothly with $s$. A similar algorithm for approximate \fast is also obtained. See \Cref{subsec:hamcyc} for details on these results. 

\mypar{Feedback arc set} The feedback arc set (FAS) in a digraph is a set of arcs (directed edges) whose deletion removes all directed cycles. Indeed, we are interested in finding the \emph{minimum} FAS. One version asks for only the \emph{size} of a minimum FAS.
\begin{description}
    \item \fassiz: Given a digraph $G=(V,E)$, output the size of a minimum FAS.
\end{description}
\noindent
For the version asking for the actual set, since the output can have size $\Theta(n^2)$ in the worst case, the streaming model considers an equivalent version where the output-size is always $\tO(n)$ \cite{ChakrabartiGMV20, BawejaJW22}. We define it below.

\begin{description}
    \item \fas: Given a digraph $G=(V,E)$, find an ordering of vertices in $V$ such that the number of back-edges (edge going from right to left in the ordering) is minimized. 
\end{description}
\noindent
The \fas and \fassiz problems, even on tournaments (\fast and 
\fassizt), are NP-hard \cite{CharbitTA07}. However, each of 
\fast and \fassizt admits a PTAS \cite{KenyonS07}. 
The \fas problem has a variety of applications including in rank aggregation \cite{KenyonS07}, machine learning \cite{BarYehudaGNR98}, social network analysis \cite{SimpsonST16}, and ranking algorithms \cite{GengLQALS08}. 
In the 
streaming model, \cite{ChakrabartiGMV20} and \cite{BawejaJW22} 
gave multipass polynomial-time approximation algorithms for 
these problems (see \Cref{sec:related} for details). We, 
however, focus on the single-pass setting. The best-known 
single-pass algorithm achieves $(1+\eps)$-approximation in 
only $\tO(n/\eps^2)$ space, albeit in exponential time 
\cite{ChakrabartiGMV20}. However, there was no complementary space 
lower bound for \fast or \fassizt. The said paper gave lower 
bounds only for the harder versions of the problems on general 
digraphs; these do not extend to tournaments. In fact, there 
was no lower bound even for \emph{exact} \fast. Note that 
although this is an NP-hard problem, an exponential-time but 
sublinear-space streaming algorithm is not ruled out. In fact, 
such algorithms for NP-hard problems do exist in the 
literature (see for instance, a deterministic $(1+\eps)$-approximation to correlation clustering (which is APX-hard)
\cite{AhnCGMW15, BehnezhadCMT23}, or algorithms for max 
coverage \cite{McGregorV17}). 

In light of this, and since 
the state-of-the-art $(1+\eps)$-approximation algorithm 
already uses exponential time, it is natural to ask whether it 
can be improved to solve exact \fast. 
We rule this out 
by proving a single-pass lower bound of $\Omega(n^2)$ space 
(\Cref{lem:exactfast}). No such lower bound was known for 
\emph{any} tournament problem prior to this.\footnote{It is, 
however, not hard to formulate a problem like $\tedge$ (see 
\Cref{sec:prelims}), for which such a lower bound is intuitive and easy to prove. Here, we mean that no ``textbook'' or well-studied digraph problem was known to have such a lower bound.} 
Additionally, we prove the same lower bound even for $\fassizt$ (\Cref{thm:fastsize-lb}) whose output is only a real 
value (rather than a vertex ordering like \fast).

We further extend our \fast lower bound to establish a smooth space-approximation tradeoff: given any $\eps>0$, a $(1+\eps)$ approximation to \fast requires $\Omega(\min\{n^2, n/\sqrt{\eps}\})$ space (\Cref{thm:fas-main}). Note that for $\eps\leq 1/n^2$, the lower bound follows directly from \Cref{lem:exactfast} since it is equivalent to the exact version. However, it does not follow that the bound holds for larger values of $\eps$. We prove that it is indeed the case for \emph{any} $\eps>0$.    

\mypar{Distance, acylicity testing, and sink-finding} It is natural to ask what other problems are hard on tournaments. In particular, is there any polynomial-time solvable well-studied problem that requires $\Omega(n^2)$ space in a single pass? We answer this in the affirmative by exhibiting the $s,t$-distance problem as an example.

\begin{description}
    \item \stdist: Given a digraph $G=(V,E)$ and vertices $s,t\in V$, find the distance, i.e., the length of the shortest directed path, between $s$ and $t$.
\end{description}
\noindent
We prove that $\Omega(n^2)$ space is necessary for \stdist in a single pass (\Cref{thm:stdist}). This also exhibits a contrast between \stdistt and \reacht in streaming: while checking whether $s$ has a path to $t$ is easy, finding the exact distance is hard.

Next, we revisit the basic problem of testing acylicity of a digraph studied by \cite{ChakrabartiGMV20, BawejaJW22}.
\begin{description}
    \item \acyc: Given a digraph $G=(V,E)$, is there a directed cycle in $G$? 
\end{description}
\noindent
We design a $p$-pass $\tO(n/p)$-space algorithm for \acyct for any $p\geq 1$ (\Cref{thm:acyc-t-ub}). This matches a $p$-pass $\Omega(n/p)$-space lower bound (up to logarithmic factors) for the problem, proven by \cite{ChakrabartiGMV20}. We also provide a simpler proof of the lower bound (\Cref{acyct-lbsimple}). Again, note that since our algorithm is deterministic and the lower bound holds for randomized algorithms, we fully settle the streaming complexity of the problem, even for multiple passes.

Finally, consider the sink (or equivalently, source) finding problem in DAGs. 
\begin{description}
    \item \sink: Given a DAG $G=(V,E)$, find a sink node, i.e., a node with outdegree 0, in $G$.
\end{description}
\noindent
We fully resolve the complexity of this problem by proving an $\Omega(n/p)$-space lower bound for any randomized $p$-pass algorithm (\Cref{thm:sinkfind-lb}). This means that the trivial $p$-pass $O(n/p)$-space algorithm which, in the $i$th pass, considers the $i$th set of $n/p$ nodes and checks whether any of them has outdegree zero, is optimal. 
    Note that Henzinger et al. \cite{HRR98} gave an $\Omega(n/p)$ space lower bound for sink finding in general digraphs that may not be DAGs, for which it is much easier to prove a lower bound since a sink may not exist. The lower bound is significantly more challenging to prove when the input is promised to be a DAG (ensuring the existence of a sink node).
    Further, our result establishes a gap between the complexities of \sink under general digraphs and tournaments: for \sinkt, \cite{ChakrabartiGMV20} showed that $\tO(n^{\Theta(1/p)})$-space is sufficient for $p$ passes. 

\mypar{Communication Complexity and Combinatorial Graph Theory} As a byproduct of our results, we resolve the communication complexity of multiple tournament problems (for their standard two-party communication versions) including \reacht, \sconnt, and \sccdect (see \Cref{sec:impli} for details), which might be of independent interest. Very recently, Mande et al.~\cite{mande2024communication} studied the communication complexity of several tournament problems; our results contribute to this study.

As further byproduct of our results, we make interesting observations on graph theoretic properties of tournaments that might be significant on their own. For instance, our results show that the indegree sequence of a tournament completely determines its SCCs. Again, for any two nodes $u$ and $v$, indegree of $u$ being smaller than that of $v$ implies a directed path from $u$ to $v$. See \Cref{sec:impli} for a compilation of such facts that, to the best of our knowledge, have not been documented in the literature. They should find applications in future combinatorial analysis or algorithmic design for tournaments.

\subsection{Other Related Work}\label{sec:related}

  Above we discussed prior works that are most relevant to ours. Here, we give an account of other related results.

  Coppersmith et al.~\cite{CoppersmithFR10} showed that simply sorting by indegree achieves a $5$-approximation to \fast, and hence, this implies a deterministic single-pass semi-streaming algorithm for $5$-approximation. Chakrabarti et al.~\cite{ChakrabartiGMV20} gave a one-pass $\tO(n/\eps^2)$-space algorithm for $(1+\eps)$-approximate \fast in exponential time and a $3$-approximation in $p$ passes and $\tO(n^{1+1/p})$ space in polynomial-time. The latter was improved by \cite{BawejaJW22} to a $(1+\eps)$-approximation under the same space-pass tradeoff and polynomial time. \cite{BawejaJW22} also established a space-time tradeoff for single-pass \fast algorithms. Guruswami, Velingker, and Velusamy \cite{GuruswamiVV17} studied the dual of the \fas problem, namely the {\em maximum acyclic subgraph} (MAS) problem, in the streaming model and showed that an algorithm that obtains a  better-than-$(7/8)$ approximation for MAS-size requires $\Omega(\sqrt{n})$ space in a single pass. Assadi et al.~\cite{AssadiKSY20} extended the lower bound to multiple passes, proving that $p$-pass $(1-\eps)$-approximation algorithms for MAS-size requires $\Omega(n^{1-\eps^{c/p}})$ space for some constant $c$. However, we do not know of any prior work that considered MAS specifically for streaming tournaments.

  The \acyc problem was considered by \cite{BawejaJW22} who gave a tight single-pass $\Omega(m \log (n^2/m))$ space lower bound for the problem, where $m$ is the number of edges. They showed a similar lower bound for testing reachability from a single node to all other nodes in a general digraph. Note that the $\Omega(n^{2-o(1)})$-space lower bound for \reach that Chen at al.~\cite{ChenKPSSY21-lognpass} proved for $O(\sqrt{\log n})$ passes, and Assadi and Raz \cite{AssadiR20} previously proved for $2$ passes, also apply to \acyc, as well as to the \fas (any multiplicative approximation) and topological sorting problems. For topological sorting of random graphs drawn from certain natural distributions, \cite{ChakrabartiGMV20} gave efficient random-order as well as adversarial-order streaming algorithms.
  
  Chakrabarti et al.~\cite{ChakrabartiGMV20} studied \sinkt on random-order streams and established an exponential separation between such streams and adversarial-order streams: the former allows a $\text{polylog}(n)$-space algorithm in just a single pass, but the latter necessitates $\Omega(n^{1/p}/p^2)$ space for $p$ passes. In a recent independent and parallel work, Mande, Paraashar, Sanyal, and Saurabh~\cite{mande2024communication} studied sink finding\footnote{They term it as \emph{source} finding, which is equivalent to sink finding.} in (not necessarily acyclic) tournaments in the two-party communication model and proved a tight bound of $\tTheta(\log^2 n)$ (where $\tTheta(f)$ hides polylog$(f)$ factors) on its (unrestricted round) communication complexity. In contrast, our lower bound proof for \sink shows its communication complexity to be $\Omega(n)$. Thus, this demonstrates an exponential gap between the communication complexity of sink-finding under tournaments and general digraphs (even if they are promised to be DAGs).

  Elkin \cite{Elkin17} gave multipass streaming algorithms for the problem of computing exact shortest paths on general digraphs, which implies algorithms for \stdist.  The SCC decomposition problem was studied by Laura and Santaroni \cite{lauraS11} on arbitrary digraphs under W-streams (allowing streaming outputs), where they designed an algorithm using $O(n\log n)$ space and $O(n)$ passes in the worst case. Other notable works on streaming digraphs include algorithms for simulating random walks \cite{SarmaGP11, Jin19, ChenKPSSY21-randomwalk}.

     \section{Preliminaries}\label{sec:prelims}

 \mypar{Notation and Terminology} All graphs in this paper are simple and directed. We typically denote a general digraph by $G = (V, E)$, where $V$ represents the set of vertices and $E \subseteq V \times V$ represents the set of directed edges. We usually denote a tournament by $T=(V,E)$. Throughout the paper, $n$ denotes the number of vertices of the graph in context. The notation $\tO(f), \tOmega(f),$ and $\tTheta(f)$ hide factors polylogarithmic in $f$. The outdegree and indegree of a vertex $v$ are denoted by $\dout(v)$ and $\din(v)$ respectively. For a positive integer $N$, the set $[N]:=\{1,\ldots,N\}$, and for integers $A\leq B$, the set $[A,B]:=\{A,\ldots,B\}$. When we say $\{u,v\}$ is a \emph{non-edge} in a digraph $G=(V,E)$, we mean that vertices $u,v\in V$ do not share any directed edge, i.e., $(u,v)\not\in E$ and $(v,u)\not\in E$. Again, we call a pair of edges $\{(u,v),(v,u)\}$ as \emph{bidirected edges}.

    \mypar{The Graph Streaming Model} The input graph $G=(V,E)$ is presented as follows. The set of $n$ nodes $V$ is fixed in advance and known to the algorithm. The elements of $E$ are inserted sequentially in a stream. The input graph and the stream order are chosen adversarially (worst case). We are allowed to make one or a few passes over the stream to obtain a solution. The goal is to optimize the space usage and the number of passes.
    
\smallskip
\noindent
     We use the following standard tool from the streaming literature.

\begin{fact}[Sparse recovery \cite{gilbertI10, DasV13}]\label{fact:sp-rec}
    There is a deterministic $O(k\cdot\text{polylog}(M,N))$-space algorithm that receives streaming updates to a vector $\bx\in [-M,M]^N$ and, at the end of the stream, recovers $\bx$ in polynomial time if $\bx$ has at most $k$ non-zero entries.
\end{fact}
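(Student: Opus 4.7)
The plan is to recognize deterministic $k$-sparse recovery as syndrome decoding of a BCH / Reed--Solomon code, which admits a deterministic polynomial-time decoder up to half the designed distance. First I would choose a prime power $q = \poly(M,N)$ large enough that (i) $q > N$, so that the index set $[N]$ embeds as distinct evaluation points $\alpha_1, \ldots, \alpha_N \in \FF_q$, and (ii) $q > 2Mk$, so that every integer-valued quantity we care to recover lies in an interval that injects into $\FF_q$. The sketch is the vector of $2k$ power sums
\[
    s_t \;=\; \sum_{j=1}^N x_j \, \alpha_j^t \pmod q, \qquad t=0,1,\ldots,2k-1 .
\]
Each $s_t$ is a linear functional of $\bx$, so a streaming update $x_j \gets x_j + \Delta$ triggers the $O(1)$-arithmetic update $s_t \gets s_t + \Delta\cdot \alpha_j^t$. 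We store $2k$ field elements, each of bit-length $O(\log q) = O(\polylog(M,N))$, for total space $O(k\cdot \polylog(M,N))$.

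For recovery, suppose the unknown $\bx$ has support $S = \{j_1,\ldots,j_s\}$ with $s\le k$ and nonzero values $v_1,\ldots,v_s$. Then $s_t = \sum_{i=1}^s v_i \alpha_{j_i}^t$, i.e.~$(s_0,\ldots,s_{2k-1})$ is precisely the vector of $2k$ syndromes of a Reed--Solomon codeword with error locators $\alpha_{j_i}$ and error magnitudes $v_i$. I would then invoke the classical two-step BCH decoder: (a) run Berlekamp--Massey on the syndromes to obtain, in time $\poly(k,\log q)$, the error-locator polynomial $\Lambda(z) = \prod_i(1-\alpha_{j_i} z)$, whose coefficients satisfy a Toeplitz system in the $s_t$; (b) find its roots by evaluating $\Lambda$ at each of $\alpha_1,\ldots,\alpha_N$ (brute force suffices and is deterministic polynomial time), thereby identifying $S$; (c) recover the $v_i$ by solving the $s\times s$ Vandermonde system $s_t=\sum_i v_i\alpha_{j_i}^t$ for $t=0,\ldots,s-1$, and lift each $v_i\in\FF_q$ back to $[-M,M]\subset\ZZ$ using the range guarantee from our choice of $q$. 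This recovers $\bx$ exactly.

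The correctness hinges on the BCH bound: any $2k$ columns of the Vandermonde matrix $(\alpha_j^t)_{t,j}$ over $\FF_q$ are linearly independent (since a nonzero polynomial of degree $<2k$ has at most $2k-1$ roots), so the $(v_i,\alpha_{j_i})$-pairs that reproduce a given syndrome vector are unique whenever $s\le k$. The main obstacle I anticipate is bookkeeping the parameters so that space, determinism, and polynomial running time hold simultaneously --- in particular, verifying that doing all arithmetic in $\FF_q$ keeps every $s_t$ of size $O(\log q)$ while still allowing integer lift-back at the end. This is where the $q > 2Mk$ bound is used; once it is in place, the rest is a parameter-chasing argument, since Berlekamp--Massey and root-finding of a degree-$k$ polynomial over $\FF_q$ are both standard deterministic polynomial-time primitives.
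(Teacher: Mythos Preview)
The paper does not prove this statement at all: it is stated as a cited \textbf{Fact} from \cite{gilbertI10, DasV13} and used as a black box. Your proposal supplies what the paper omits, and your route---viewing the $2k$ power-sum sketch as Reed--Solomon syndromes and decoding via Berlekamp--Massey followed by brute-force root finding and a Vandermonde solve---is correct and is indeed one of the standard constructions behind the cited references. The parameter bookkeeping you flag is straightforward: $q>2M$ already suffices for the integer lift-back (your $q>2Mk$ is harmless but stronger than needed), and a suitable prime $q$ in the range $(\max(N,2M),\,2\max(N,2M)]$ can be located deterministically in polynomial time via Bertrand's postulate plus AKS primality testing. So your plan both matches the intended content and makes it self-contained where the paper simply quotes the literature.
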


    \mypar{The Communication Model} All our streaming lower bounds are proven via reductions from problems in communication complexity. In this paper, we use the two-player communication model of Yao \cite{yao79}. Here, players Alice and Bob receive $\bx\in \{0,1\}^N$ and $\by\in \{0,1\}^N$ respectively, and must send messages back and forth to compute $\cF(\bx,\by)$ for some relation $\cF$ defined on $\{0,1\}^N \times \{0,1\}^N$. We consider the randomized setting where the players can communicate based on private random strings. The \emph{communication cost} of a randomized protocol for $\cF$ is defined as the maximum number of bits exchanged by the players, over all possible inputs $(\bx,\by)$ and all possible random strings. The \emph{randomized communication complexity} of a relation $\cF$, denoted by $R(\cF)$, is defined as the minimum communication cost over all randomized protocols that, given \emph{any} input $(\bx,\by)$, compute $\cF(\bx,\by)$ correctly with probability at least $2/3$. Analogously, the \emph{one-way randomized communication complexity} of $\cF$, denoted by $R^{\to}(\cF)$, is the minimum communication cost over all one-way communication protocols where Alice sends a single message to Bob, following which he must report the output.

    We use classical communication problems such as Index (\idx), Disjointness (\disj), and Set Intersection (\setint) to prove our lower bounds. The problems are formally defined below.  
    \begin{description}
        \item $\idx_N$: Alice holds a vector $\bx \in \{0, 1\}^N$, and Bob holds an index $i \in [N]$. The goal is to find $\bx_i$, the $i$th bit of $\bx$.

        \item $\disj_N$: Alice holds $\bx \in \{0, 1\}^N$ and Bob holds $\by \in \{0, 1\}^N$. The goal is to decide whether $\bx$ and $\by$ are \emph{disjoint} as sets, i.e., output whether there exists an index $i \in [N]$ such that $\bx_i = \by_i = 1$. We use the ``promise version'' (also known as \emph{unique disjointness}) where we are promised that the sets are either disjoint or have a \emph{unique} intersecting element $i$.

        \item $\setint_N$: Alice holds $\bx \in \{0, 1\}^N$ and Bob holds $\by \in \{0, 1\}^N$, where $\bx$ and $\by$ intersect at exactly one index, i.e., there is a unique $i \in [N]$ such that $\bx_i = \by_i = 1$. The goal is to output $i$.
    \end{description}

   We exploit the following known bounds on the communication complexity of the above problems. 

    \begin{fact}[\cite{Ablayev96}] \label{fact:idx-lb}
        The one-way randomized communication complexity $R^{\to}(\idx_N) = \Omega(N)$ .
    \end{fact}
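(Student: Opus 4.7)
The plan is to prove this textbook $\Omega(N)$ lower bound via Yao's minimax principle combined with a standard information-theoretic argument. First, I would fix the ``hard'' distribution $\mu$ by drawing $\bx \in \{0,1\}^N$ uniformly at random and, independently, $i \in [N]$ uniformly at random. By Yao's minimax principle it suffices to lower bound the length $c$ of Alice's message in any deterministic one-way protocol that errs with probability at most $1/3$ under $\mu$. Let $M = M(\bx) \in \{0,1\}^{\leq c}$ denote that message, and let $g(M,i)$ denote Bob's deterministic output.

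The key step is to sandwich the mutual information $I(\bx; M)$. On one hand, since the support of $M$ has size at most $2^c$, we have $I(\bx; M) \leq H(M) \leq c$. On the other hand, $\bx$ is uniform so $H(\bx) = N$, and subadditivity of entropy gives
\begin{equation*}
H(\bx \mid M) \;\leq\; \sum_{i_0=1}^N H(\bx_{i_0} \mid M).
\end{equation*}
For each coordinate $i_0$, set $\eps_{i_0} := \Pr_\bx[g(M(\bx), i_0) \neq \bx_{i_0}]$; because $i$ is independent of $(\bx,M)$, the overall error equals $\tfrac{1}{N}\sum_{i_0}\eps_{i_0}$, which is at most $1/3$. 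Fano's inequality for binary random variables yields $H(\bx_{i_0} \mid M) \leq H_2(\eps_{i_0})$, where $H_2$ is the binary entropy function, and concavity of $H_2$ then gives $\sum_{i_0} H_2(\eps_{i_0}) \leq N \cdot H_2(1/3)$. Chaining the inequalities,
\begin{equation*}
c \;\geq\; I(\bx; M) \;=\; N - H(\bx \mid M) \;\geq\; N\bigl(1 - H_2(1/3)\bigr) \;=\; \Omega(N),
\end{equation*}
since $H_2(1/3) < 1$ is a fixed constant.

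The one delicate point is the translation from the ``overall'' error guarantee under $\mu$ to per-coordinate error bounds that can feed Fano's inequality; averaging over $i$ together with concavity of $H_2$ handles this cleanly, and conditioning on $M$ (rather than $g(M,i_0)$) is valid because $g(M,i_0)$ is a deterministic function of $M$ for each fixed $i_0$. No further tools beyond Yao's minimax principle, subadditivity of entropy, and Fano's inequality are required, which is why this bound is elementary despite being tight up to constants.
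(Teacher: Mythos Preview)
Your argument is correct and is the standard information-theoretic proof of the one-way $\idx$ lower bound: Yao's principle to pass to a deterministic protocol under the uniform distribution, the upper bound $I(\bx;M)\le c$, the coordinate-wise lower bound via subadditivity and Fano (the clean way to see $H(\bx_{i_0}\mid M)\le H_2(\eps_{i_0})$ is $H(\bx_{i_0}\mid M)\le H(\bx_{i_0}\mid g(M,i_0))=H(\bx_{i_0}\oplus g(M,i_0)\mid g(M,i_0))\le H_2(\eps_{i_0})$), and then Jensen on $H_2$ combined with monotonicity of $H_2$ on $[0,1/2]$ to handle the averaging over coordinates.

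There is nothing to compare against in the paper: the statement is quoted as a known fact with a citation to \cite{Ablayev96} and is not proved there. Your write-up supplies exactly the textbook proof one would expect; the only cosmetic point is that when invoking concavity you are implicitly also using that $H_2$ is nondecreasing on $[0,1/2]$ (so $H_2(\tfrac{1}{N}\sum\eps_{i_0})\le H_2(1/3)$), which you may want to state explicitly.
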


    \begin{fact}[\cite{Razborov92, brodyckwy14}]\label{fact:disj-lb}
        The randomized communication complexity $R(\disj_N) = \Omega(N) = R(\setint_N)$.
    \end{fact}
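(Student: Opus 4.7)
The plan is to handle the two lower bounds separately: first establish $R(\disj_N) = \Omega(N)$ via information complexity, then obtain $R(\setint_N) = \Omega(N)$ by a short and generic reduction from unique disjointness with only $O(1)$ bits of overhead.

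For $R(\disj_N) = \Omega(N)$ under the unique-intersection promise, I would follow the information-complexity route of Bar-Yossef--Jayram--Kumar--Sivakumar. First, set up a ``collapsing'' hard distribution $\mu$ on input pairs whose marginal at each coordinate $i$ lies on $\{(0,0),(0,1),(1,0)\}$, and for yes-instances plant a single intersection at a uniformly random coordinate. Next, decompose $\disj_N(\bx,\by) = \bigvee_{i=1}^{N} \mathrm{AND}(\bx_i,\by_i)$, and apply a direct-sum theorem for conditional information cost showing that $\mathrm{IC}_{\mu}(\disj_N) \ge N \cdot \mathrm{IC}_{\zeta}(\mathrm{AND})$, where $\zeta$ is the single-coordinate marginal augmented with an auxiliary ``which-coordinate-is-special'' random variable that makes the $N$ coordinate instances look independent from each player's local view. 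Finally, one shows $\mathrm{IC}_{\zeta}(\mathrm{AND}) = \Omega(1)$ by a Hellinger (or KL) distance argument: if the information cost were $o(1)$, the transcript distributions on inputs $(1,0)$ and $(0,1)$ could be ``stitched'' via the cut-and-paste lemma into a transcript indistinguishable from the one on $(1,1)$, contradicting correctness on the unique-intersection instance. Summing over the $N$ coordinates and noting $R(\disj_N) \ge \mathrm{IC}_{\mu}(\disj_N)$ yields the bound.

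For $R(\setint_N) = \Omega(N)$, I would give a black-box reduction from unique-$\disj_N$. Given any randomized protocol $\Pi$ for $\setint_N$ using $c$ bits, construct a protocol for unique-$\disj_N$ as follows: on input $(\bx,\by)$, Alice and Bob run $\Pi$ on $(\bx,\by)$ to obtain an index $i$; Alice then sends $\bx_i$ and Bob sends $\by_i$ (two extra bits), and they output ``intersecting'' iff both are $1$. Correctness is immediate: on a yes-instance with unique intersection at $j$, $\Pi$ outputs $j$ with probability $\ge 2/3$ and both coordinate bits are $1$; on a no-instance, every index $i$ satisfies $\bx_i\by_i = 0$, so the output is ``disjoint'' with certainty. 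Hence $c + 2 \ge R(\text{unique-}\disj_N) = \Omega(N)$, proving the claim.

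The principal obstacle is the single-coordinate step, $\mathrm{IC}_{\zeta}(\mathrm{AND}) = \Omega(1)$: this is where the subtle use of the auxiliary ``special-coordinate'' variable and the cut-and-paste/Hellinger machinery is needed, and it is the place where all the earlier historical proofs (Kalyanasundaram--Schnitger, Razborov, Bar-Yossef et al., and the sharper bounds in \cite{brodyckwy14}) concentrate their technical effort. The direct-sum step and the reduction for $\setint$ are, by contrast, essentially mechanical once the single-coordinate bound is in hand.
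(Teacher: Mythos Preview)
Your sketch is a correct outline of the standard modern proofs: the information-complexity direct-sum argument for unique disjointness (Bar-Yossef--Jayram--Kumar--Sivakumar, sharpened in \cite{brodyckwy14}) and the two-bit verification reduction from unique-$\disj$ to $\setint$. However, there is nothing to compare against: the paper does not prove \Cref{fact:disj-lb} at all. It is stated as a cited fact from \cite{Razborov92, brodyckwy14} and used as a black box in the reductions of \Cref{thm:reach-lb}, \Cref{thm:strongconn-lb}, \Cref{acyct-lbsimple}, and \Cref{thm:sinkfind-lb}. So your proposal is not so much ``a different route from the paper'' as it is a proof of a result the paper deliberately imports without argument.
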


    We also use the communication complexity of a variant of the \idx problem called \emph{Sparse Index} (\spidx).

\begin{description}
    \item  $\spidx_{N,k}$: This is the $\idx_N$ problem with the promise that Alice's vector $\bx$ has at most $k$ 1's, i.e., $|\{i : \bx_i =1\}|\leq k$. 
\end{description}
   
\begin{fact}[\cite{ChakrabartiCGT14}]\label{fact:sparse-index}
    $R^\to (\spidx_{N,k}) = \Omega(k)$
\end{fact}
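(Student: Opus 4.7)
The plan is to prove the bound by a direct reduction from $\idx_k$ to $\spidx_{N,k}$, and then invoke \Cref{fact:idx-lb}. Throughout, I will assume $N \geq k$, which is the interesting regime (when $N < k$ the sparsity promise is vacuous and $\spidx$ degenerates to $\idx_N$, whose one-way complexity is $\Omega(N) = \Omega(\min\{N,k\})$, so the bound is immediate).

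The reduction is straightforward. Suppose Alice and Bob have an $\idx_k$ instance: Alice holds $\by \in \{0,1\}^k$ and Bob holds $j \in [k]$. Alice constructs $\bx \in \{0,1\}^N$ by setting $\bx_i = \by_i$ for $i \in [k]$ and $\bx_i = 0$ for $i \in [k+1, N]$. Then $\bx$ has at most $k$ ones (since $\by \in \{0,1\}^k$ trivially has at most $k$ ones), so $\bx$ satisfies the sparsity promise of $\spidx_{N,k}$. Bob uses the index $j \in [k] \subseteq [N]$ as his input. Now the players run any one-way randomized protocol $\Pi$ for $\spidx_{N,k}$ on the pair $(\bx, j)$; the output is $\bx_j = \by_j$, which solves the original $\idx_k$ instance with the same error probability and the same communication cost. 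Hence $R^\to(\spidx_{N,k}) \geq R^\to(\idx_k) = \Omega(k)$ by \Cref{fact:idx-lb}, which is the desired bound.

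There is essentially no obstacle in this argument: the entire proof is encapsulated in the observation that any $\idx_k$ instance is already a legal $\spidx_{N,k}$ instance after zero-padding Alice's input to length $N$. The only minor point to be careful about is distinguishing the two regimes $N \geq k$ and $N < k$, handled at the start of the proof, to ensure the stated $\Omega(k)$ bound is stated correctly for the parameter ranges of interest.
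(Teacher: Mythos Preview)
Your proof is correct. The paper itself does not prove this statement: it records it as a cited fact from \cite{ChakrabartiCGT14} with no accompanying argument. Your zero-padding reduction from $\idx_k$ is the standard way to obtain the bound and is entirely sound, including the handling of the $N<k$ regime.
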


    For convenience, we define a new communication problem $\tedge$, which is basically a version of the Index problem. 

    \begin{description}
        \item $\tedge_n$: Alice holds a tournament $T = (V, E)$ with $|V|=n$, and Bob holds a pair of vertices $u, v \in V$. The goal is to find the orientation of the edge between $u$ and $v$ in $T$.
    \end{description}
    
    This problem is equivalent to $\idx_{N}$ for $N=\binom{n}{2}$, and hence we get the following proposition.

    \begin{restatable}{proposition}{tedgelb} \label{prop:tedge-lb}
       The one-way randomized complexity $R^{\to}(\tedge_n) = \Omega(n^2)$. 
    \end{restatable}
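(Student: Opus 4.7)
The plan is to establish \Cref{prop:tedge-lb} via a direct reduction from $\idx_N$ with $N = \binom{n}{2}$, leveraging the remark in the excerpt that $\tedge_n$ is essentially a renaming of the Index problem on a vector indexed by vertex pairs. The key observation is that a tournament on $n$ vertices is specified by exactly $\binom{n}{2}$ bits, one per unordered pair, encoding the orientation of the corresponding edge.

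Concretely, I would fix an arbitrary bijection $\phi : [N] \to \binom{[n]}{2}$ between indices and unordered vertex pairs, and a reference total order on $[n]$. Given an $\idx_N$ instance $(\bx, i)$ where Alice holds $\bx \in \{0,1\}^N$ and Bob holds $i \in [N]$, I would have Alice build the tournament $T = ([n], E)$ by orienting, for each $j \in [N]$ with $\phi(j) = \{u,v\}$ and $u < v$, the edge as $(u,v)$ if $\bx_j = 1$ and as $(v,u)$ otherwise. Bob, given $i$, computes $\phi(i) = \{u,v\}$ and feeds $(u,v)$ as his input to any supposed one-way $\tedge_n$ protocol. The orientation returned reveals whether $(u,v) \in E$ or $(v,u) \in E$, which by construction equals $\bx_i$ (modulo the fixed convention Bob applies to reconstruct the bit).

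This reduction is entirely local to each player and uses no extra communication, so any one-way randomized protocol for $\tedge_n$ with cost $c$ yields a one-way randomized protocol for $\idx_N$ with the same cost $c$ and the same success probability. Invoking \Cref{fact:idx-lb} with $N = \binom{n}{2} = \Theta(n^2)$ gives $R^{\to}(\tedge_n) \ge R^{\to}(\idx_N) = \Omega(N) = \Omega(n^2)$, as claimed.

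There is no real obstacle here: the proof is a one-line reduction and the only care needed is to fix the encoding convention (the bijection $\phi$ and the reference ordering on vertices used to decide which direction corresponds to bit value $1$) so that Alice's encoding and Bob's decoding agree. Because the reduction preserves the message exactly, the bound transfers unchanged, and since randomness is handled identically in both models, the lower bound applies to randomized protocols as stated.
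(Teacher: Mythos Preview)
Your proposal is correct and follows essentially the same approach as the paper: a direct reduction from $\idx_N$ with $N=\binom{n}{2}$ via a fixed bijection between indices and unordered vertex pairs, with Alice encoding the orientations and Bob querying the pair corresponding to his index. The only difference is the arbitrary choice of which orientation corresponds to bit $0$ versus $1$, which is immaterial.
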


\begin{proof}
Given any communication protocol $\Pi$ for $\tedge_n$, we can use it to solve $\idx_N$, where $N = \binom{n}{2}$, as follows. Let $\bx \in \{0, 1\}^N$ be Alice's vector and let $i \in [N]$ be Bob's index in the $\idx_N$ problem. Alice constructs a tournament graph $T = ([n], E)$ based on $\bx$: she treats the vector $\bx$ as being indexed by pairs $(u, v)$, where $u, v \in [n]$ and $u<v$. She can do this using any canonical bijection from $[N]$ to $\binom{[n]}{2}$. Then, if $\bx_{(u, v)} = 0$, Alice adds an edge in $T$ from vertex $u$ to vertex $v$, and if $\bx_{(u, v)} = 1$, she adds an edge from $v$ to $u$. This completes the construction of $T$. Let $(u^*,v^*)\in \binom{[n]}{2}$ be the pair corresponding to the index $i\in [N]$ that Bob holds. Alice and Bob can now use the protocol $\Pi$ to identify the orientation of the edge between the pair of nodes $u^*$ and $v^*$. If the edge is oriented from $u^*$ to $v^*$, Bob outputs that $\bx_i = 0$, otherwise he announces that $\bx_i = 1$. The correctness is immediate from construction. By \Cref{fact:idx-lb}, since  $R^{\to}(\idx_N)=\Omega(N)=\Omega(n^2)$, we get that the communication cost of $\Pi$ must be $\Omega(n^2)$. Therefore, $R^{\to}(\tedge_n) = \Omega(n^2)$.
    \end{proof}

    \section{Streaming Algorithms for SCC Decomposition and Applications}

In this section, we first design an algorithm for \sccdect in \Cref{sec:scctou}. Then we generalize it to arbitrary digraphs and obtain optimal algorithms for ``almost'' tournaments in \Cref{sec:sccarb}. 
Next, in \Cref{subsec:strconn-reach}, we note the immediate algorithms implied for \sconn and \reach in tournaments and almost tournaments, and then present lower bounds establishing their optimality. Additionally, \Cref{sec:otherapps} gives an account of other applications of our algorithms. 
In \Cref{subsec:hamcyc},  we apply our \sccdect framework to design the first streaming algorithm for \hcyct. Finally, in \Cref{subsec:smallscc}, we demonstrate how our \sccdect algorithm can be used to design efficient algorithms for \hpatht and \fast.

\subsection{Finding the SCC-graph}

First we design an algorithm for digraphs with zero non-edges and then show how to extend it to general digraphs.
  \subsubsection{Tournaments and Other Digraphs with No Non-Edges}\label{sec:scctou}
  
    We present a single-pass semi-streaming algorithm for \sccdect. In fact, our algorithm works for a broader class of digraphs: all digraphs without non-edges (i.e., every pair of vertices either shares exactly one directed edge or bidirected edges). Further, our algorithm solves the more general problem of finding the SCC-graph. Let $G_\scc$ denote the SCC-graph of a digraph $G$. It is easy to see that the SCC-graph of any digraph is always a DAG. Hence, for a digraph $G$ with no non-edge, its SCC-graph $G_\scc$ is an acyclic \emph{tournament}: between any two SCCs, there must exist an edge, but edges cannot exist in both directions. Since any acyclic tournament has a unique topological ordering (obtained by sorting its nodes by indegree), so does $G_\scc$.  Thus, $G_\scc$ can be simply and succinctly represented by a chain or permutation of $G$'s SCCs. The edges of $G_\scc$ are implicit: they go from left to right. Henceforth, we identify $G_\scc$ with its topological ordering, and our algorithm outputs this permutation.

    Formally, we prove the following theorem. 

    \begin{restatable}{theorem}{sccmain}\label{thm:scc-main}
    Given an input digraph $G=(V,E)$ with no non-edge, there is a deterministic single-pass algorithm that uses $O(n \log n)$ bits of space and
    outputs a partition $\langle V_1,\ldots,V_\ell\rangle$ of $V$ such that each $V_i$ is an SCC of $G$, and for every pair $i,j\in [\ell]$ with $i<j$, all edges between $V_i$ and $V_j$ are directed from $V_i$ to $V_j$.    
    \end{restatable}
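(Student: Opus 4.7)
During the stream, for each vertex $v$ I would maintain counters $\din(v)$ and $\dout(v)$, using $O(n\log n)$ bits in total. In post-processing, I sort the vertices by indegree into $v_1,\ldots,v_n$ with $\din(v_1)\le\cdots\le\din(v_n)$ and then flag SCC boundaries using a simple algebraic test on running degree sums, outputting the consecutive blocks between boundaries.

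\textbf{Why indegree sorting already groups SCCs.} Let the topological ordering of SCCs be $V_1,\ldots,V_\ell$ and set $\sigma_j:=\sum_{i\le j}|V_i|$. For $v\in V_i$ I decompose $\din(v)=\sigma_{i-1}+\alpha(v)$, where $\alpha(v)$ is $v$'s indegree in the subgraph induced by $V_i$. Since no pair is a non-edge, every $u$ in an earlier SCC sends a directed edge into $v$, accounting for exactly $\sigma_{i-1}$. The key observation for the internal part is that each of the other $|V_i|-1$ vertices of $V_i$ contributes at most $1$ to $\alpha(v)$ -- even when the pair is bidirected, the single edge $(u,v)$ adds just $1$ to $\din(v)$. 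Hence $\alpha(v)\le|V_i|-1$, so $\din(v)\le\sigma_i-1$. On the other hand, any $w$ in a later SCC $V_j$ satisfies $\din(w)\ge\sigma_{j-1}\ge\sigma_i$. This yields the strict separation $\din(v)<\din(w)$, so the sorted order lists each $V_i$ as a contiguous block in topological order, reducing the problem to locating the block endpoints.

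\textbf{Locating cut points.} For any prefix $S=\{v_1,\ldots,v_k\}$, we have $\sum_{v\in S}(\dout(v)-\din(v))=|E(S,V\setminus S)|-|E(V\setminus S,S)|$. If $S$ is a union of complete initial SCCs, then no edge runs from $V\setminus S$ back into $S$ and every cross pair contributes exactly one forward edge, so this sum equals $k(n-k)$. Conversely, if $S$ cuts through an SCC $V_j$ with proper non-empty $A:=S\cap V_j$, strong connectivity of $V_j$ forces an edge from $V_j\setminus A$ into $A$, so $|E(V\setminus S,S)|\ge 1$; meanwhile $|E(S,V\setminus S)|\le k(n-k)$ since every cross pair contributes at most one forward edge, giving a strict inequality. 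Scanning $k=1,\ldots,n$ and flagging each $k$ at which $\sum_{i\le k}(\dout(v_i)-\din(v_i))=k(n-k)$ therefore identifies the SCC boundaries exactly, and emitting the blocks between consecutive boundaries gives the required topological partition.

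\textbf{Main obstacle.} The conceptual hurdle is the general ``no non-edge'' case: bidirected edges could a priori inflate internal indegrees and spoil the indegree separation that is clean for a pure tournament. The savior is that a bidirected pair still contributes only $1$ to $\din(v)$, preserving $\alpha(v)\le|V_i|-1$. The cut-point test must correspondingly be phrased via the symmetric difference $\dout-\din$ rather than $\din$ alone: in a pure tournament it collapses to the familiar condition $\sum_{i\le k}\din(v_i)=\binom{k}{2}$, but with bidirected edges the prefix indegree sum can exceed $\binom{k}{2}$ at cut points, so only the $\dout-\din$ formulation correctly tests closedness in the stated generality.
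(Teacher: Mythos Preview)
Your proposal is correct and takes essentially the same approach as the paper: store indegrees and outdegrees, sort by indegree, and use the test $\sum_{v\in S}(\dout(v)-\din(v))=|S|\cdot|V\setminus S|$ to detect SCC boundaries---this is exactly the paper's \Cref{lem:scc-inc-deg} and \Cref{lem:scc-cut-eqn}. The only cosmetic difference is that the paper's \Cref{alg:sccgraph} peels off the first SCC, updates indegrees, and recurses, whereas you scan all prefixes of the original sorted list against $k(n-k)$ directly; these are equivalent, and your remark that bidirected pairs cannot straddle distinct SCCs (so cross pairs contribute exactly one forward edge) is the key observation that makes the non-recursive formulation go through cleanly.
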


     \begin{algorithm}[!ht]
  \caption{A single-pass semi-streaming algorithm for finding SCC-graph of a digraph with no non-edge
    \label{alg:sccgraph}}
  \begin{algorithmic}[1]
  \Statex \textbf{Input}: Stream of edge insertions of an $n$-vertex digraph $G=(V,E)$ that has no non-edge
  \Statex
    \Statex \underline{\textbf{Initialize}:}
    \State $d^-(v) \gets 0$ for each $v\in V$
    \Comment{In-degree counters that count $\din(v)$}
    \State $d^+(v) \gets 0$ for each $v\in V$
    \Comment{Out-degree counters that count  $\dout(v)$; {\color{teal}{not required if $G$ is a tournament}}}
    
  \Statex 
  \Statex\underline{\textbf{Process} (edge $(u,v)$):}
  \State $d^-(v) \gets d^-(v)+1$ \Comment{Increase in-degree of $v$}
  \State $d^+(u) \gets d^+(u)+1$ \Comment{Increase out-degree of $u$; {\color{teal}{not required if $G$ is a tournament}}}
  
\Statex 
      \Statex \underline{\textbf{Post-processing}:}
      \State $S \gets \emptyset$ \Comment{The current SCC}
      \State $G_\scc \gets \emptyset$ \Comment{The SCC-graph so far, stored as a list of SCCs in topological order}
      \State $n' \gets n$ \Comment{The size of the ``remaining graph'' $G\setminus G_\scc$}
      \State $c \gets 0$ \Comment{Initialize counter}
      \State $\langle v_1, \ldots, v_n \rangle \gets$ vertices sorted such that $d^-(v_1)\leq \ldots \leq d^-(v_n)$ \Comment{Ties broken arbitrarily}
      \For{$i$ = 1 to $n$}:
      \State Append $v_i$ to $S$ \Comment{Add vertex to current SCC}
      \State $c \gets c+ d^+(v_i)-d^-(v_i)$ \Comment{Update counter; {\color{teal}{if $G$ is a tournament, $d^+(v_i)=n'-1-d^-(v_i)$}}}
      \If{$c = |S| \cdot (n'-|S|)$} \Comment{Check if SCC is complete}
      \State Append $S$ to $G_\scc$ \Comment{Add SCC to the SCC-graph}
      \State $n' \gets n'-|S|$ \Comment{Update size of remaining graph}
      \For{$j$ = $i+1$ to $n$} \Comment{Update in-degrees in remaining graph}
      \State $d^-(v_j) \gets d^-(v_j) - |S|$
      \EndFor
      \State $S \gets \emptyset$; $c \gets 0$ \Comment{Empty $S$ and reset counter}
      \EndIf
      \EndFor
      \State Output $G_\scc$
      
  \end{algorithmic}
\end{algorithm}

The algorithm is given in \Cref{alg:sccgraph}. We prove its correctness using a series of lemmas and then prove the above theorem. For convenience, we define ``SCC-cuts'' that play an important role in the proof.

    \begin{definition}[SCC-cut]\label{def:scc-cut}
        Given a digraph $G=(V,E)$ with no non-edge,  for $U\subseteq V$, we call the cut ($U$, $V \setminus U$) an \emph{SCC-cut} if $U$ is the union of a prefix of SCCs in $G_\scc$. 
    \end{definition}

The next three lemmas prove important properties of SCC-cuts.

    \begin{restatable}{lemma}{scccutorient} \label{lem:scc-cut-orient}
        Given a digraph $G=(V,E)$ without any non-edge, a cut ($U$, $V \setminus U$) is an SCC-cut if and only if every edge between $U$ and $V \setminus U$ is oriented from $U$ to $V \setminus U$ in $G$. 
    \end{restatable}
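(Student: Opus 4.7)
The plan is to prove both directions of the biconditional separately, relying on the fact, stated in the preceding text, that $G_\scc$ is an acyclic tournament (and therefore has a unique topological ordering) whenever $G$ has no non-edges. A key preliminary observation I would establish first: for any two distinct SCCs $C_1, C_2$ of $G$, every edge of $G$ between $C_1$ and $C_2$ is oriented in the same direction, matching the orientation of the single edge between them in $G_\scc$. Indeed, if edges existed in both directions, one could stitch them together with internal paths of $C_1$ and $C_2$ to obtain a cycle that merges $C_1$ and $C_2$ into one strongly connected set, contradicting the maximality of SCCs.

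For the forward direction, suppose $U$ is the union of a prefix of SCCs in $G_\scc$, and pick any $u \in U$ and $v \in V \setminus U$. Since $U$ is a union of entire SCCs, $u$ and $v$ lie in distinct SCCs $C_u$ and $C_v$, with $C_u$ preceding $C_v$ in the topological order. The no-non-edge hypothesis guarantees at least one edge between $u$ and $v$, and by the preliminary observation it must follow the $G_\scc$-orientation from $C_u$ to $C_v$, i.e., from $u$ to $v$.

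For the backward direction, suppose every edge crossing the cut is oriented from $U$ to $V \setminus U$. I would first argue that $U$ is a union of entire SCCs: if some SCC $C$ were split between $U$ and $V \setminus U$, strong connectivity of $C$ would force at least one edge within $C$ to go from a node in $C \cap (V \setminus U)$ to a node in $C \cap U$, contradicting the hypothesis. Next, I would show that the SCCs comprising $U$ form a prefix of the topological order of $G_\scc$: for any $C_1 \subseteq U$ and $C_2 \subseteq V \setminus U$, the unique $G_\scc$-edge between them inherits the direction of any (and there is at least one) crossing edge of $G$, hence goes from $C_1$ to $C_2$. Consequently every SCC in $U$ precedes every SCC in $V \setminus U$ in the unique topological ordering, so $U$ is precisely a prefix.

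The only subtle point I anticipate is the first step of the backward direction: one must invoke strong connectivity of $C$ carefully to guarantee that splitting $C$ across the cut produces edges inside $C$ in \emph{both} crossing directions, not just one (so that at least one of them runs the ``wrong'' way). Once this is in place, the remainder is a routine application of the acyclic-tournament structure of $G_\scc$.
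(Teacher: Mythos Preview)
Your proposal is correct and follows essentially the same approach as the paper: both prove the forward direction via the topological ordering of $G_\scc$ (you route it through an explicit ``all edges between two SCCs point the same way'' observation, the paper argues by contradiction that a reverse edge would be a back-edge in the ordering), and both prove the backward direction by first showing $U$ is a union of whole SCCs and then using the tournament structure of $G_\scc$ to conclude those SCCs form a prefix. The only cosmetic difference is that you isolate the preliminary observation as a separate step, whereas the paper absorbs it into the topological-ordering argument.
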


    \begin{proof}
        Let $G_\scc$ be the ordering of $G$'s SCCs as described above.
        First suppose that ($U$, $V \setminus U$) is 
        an SCC-cut of $G$. Assume for the sake of 
        contradiction that there is an edge $e$ from a node 
        $v\in V \setminus U$ to a node $u\in U$. By \Cref{def:scc-cut}, all nodes in $U$ form a prefix of 
        $G_\scc$; this means the SCC containing $u$ appears in 
        $G_\scc$ before the SCC containing $v$. 
        Hence, $e$ must be a back-edge induced by the ordering $G_\scc$, 
        contradicting the fact 
        that $G_\scc$ is a topological ordering. 
        Thus, every edge between $U$ and $V \setminus 
        U$ in $G$ must be from $U$ to $V \setminus U$.
        
        Next, suppose that every edge between an 
        arbitrary vertex subset $U$ and its 
        complement $V \setminus U$ is oriented from 
        $U$ to $V \setminus U$ in $G$. This implies 
        that there is no path from a vertex in 
        $V\setminus U$ to a vertex in $U$. Therefore, 
        for each SCC $S$ of $G$, it must be that either $S\subseteq U$ or $S\subseteq V \setminus U$. Now, assume 
        for the sake of contradiction that there 
        exist SCCs $S$ and $S'$ of $G$ such that 
        $S\subseteq U$, $S'\subseteq V\setminus U$, and $S'$ 
        appears before $S$ in $G_\scc$. Since 
        $G_\scc$ is a topological ordering of an acyclic
        tournament, there must be an edge from $S'$ 
        to $S$, and hence, from $V \setminus U$ to 
        $U$, leading to a contradiction. Thus, all 
        SCCs of $G$ in $U$ must appear in $G_\scc$ 
        before all SCCs of $G$ in $V\setminus U$. 
        Hence, by \Cref{def:scc-cut}, ($U$, $V \setminus U$) is an SCC-cut of $G$.  
    \end{proof}

    \begin{restatable}{lemma}{scccuteqn} \label{lem:scc-cut-eqn}
        In a digraph $G=(V,E)$ with no non-edge, a cut $(U, V \setminus U)$ is an SCC-cut if and only if 

        \begin{equation}\label{eq:cutedges}
        \sum_{u \in U} (\dout(u)-\din(u)) = |U| \cdot |V \setminus U|    
        \end{equation}
         
    \end{restatable}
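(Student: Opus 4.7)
The plan is to translate the equation into a purely combinatorial statement about cut edges and then apply the preceding lemma. Let $k = |U|$ and set $f := |\{(u,v) \in E : u \in U, v \in V \setminus U\}|$ and $b := |\{(v,u) \in E : u \in U, v \in V \setminus U\}|$, i.e., the counts of forward and backward edges across the cut $(U, V \setminus U)$. A routine double-count (summing $\dout$ and $\din$ over $U$, noting that intra-$U$ edges contribute equally to both) gives
\[
\sum_{u \in U} \big(\dout(u) - \din(u)\big) \;=\; f - b,
\]
so the identity \eqref{eq:cutedges} is equivalent to $f - b = k(n-k)$.

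Next, I will exploit the ``no non-edge'' hypothesis to bound $f$ and $b$. For each of the $k(n-k)$ pairs in $U \times (V\setminus U)$, there is at least one directed edge between its endpoints (possibly two, if the pair is bidirected). In particular, $f \le k(n-k)$ and $b \ge 0$, since $f$ counts at most one forward edge per pair.

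For the forward direction, suppose $(U, V\setminus U)$ is an SCC-cut. By \Cref{lem:scc-cut-orient}, every edge across the cut is oriented from $U$ to $V\setminus U$; in particular $b = 0$, and no cross-pair can be bidirected. Combined with the fact that every pair has at least one edge, this forces $f = k(n-k)$, so $f - b = k(n-k)$, yielding \eqref{eq:cutedges}. For the converse, suppose \eqref{eq:cutedges} holds, i.e., $f - b = k(n-k)$. The bounds $f \le k(n-k)$ and $b \ge 0$ then force $f = k(n-k)$ and $b = 0$ simultaneously. In particular, every edge between $U$ and $V \setminus U$ is oriented from $U$ to $V \setminus U$, so \Cref{lem:scc-cut-orient} declares $(U, V \setminus U)$ an SCC-cut.

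There is no real obstacle here beyond bookkeeping; the only subtle point is remembering that the ``no non-edge'' assumption allows bidirected pairs, which is exactly why the tight bound $f \le k(n-k)$ combined with the ``at least one edge per pair'' lower bound collapses the equality case to the orientation condition of \Cref{lem:scc-cut-orient}.
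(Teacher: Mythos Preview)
Your proof is correct and follows essentially the same approach as the paper: both establish the identity $\sum_{u\in U}(\dout(u)-\din(u)) = e(U,V\setminus U) - e(V\setminus U,U)$ (your $f-b$), then combine the trivial bounds $e(U,V\setminus U)\le |U|\cdot|V\setminus U|$ and $e(V\setminus U,U)\ge 0$ with \Cref{lem:scc-cut-orient} to handle both directions. The only cosmetic difference is that the paper states the double-count identity for arbitrary $X\subseteq V$ in the converse direction rather than up front.
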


      \begin{proof}
        Consider an SCC-cut $(U, V \setminus U)$ of $G$. By 
        \Cref{lem:scc-cut-orient}, we know that every 
        edge between $U$ and $V \setminus U$ is 
        oriented from $U$ to $V \setminus U$. Since 
        $G$ has no non-edge, there are exactly $|U| \cdot 
        |V \setminus U|$ such edges. Thus, the number of edges that 
        \emph{exit} $U$ (i.e., go from a vertex in $U$ to one outside $U$) is $|U| \cdot 
        |V \setminus U|$. Again, since there are no 
        edges oriented from $V \setminus U$ to $U$, 
        every edge entering a vertex in $U$ must be directed from another vertex in $U$. Thus, 
        $\sum_{u \in U} \dout(u)$ edges originate from some 
        vertex in $U$, and out of them, precisely 
        $\sum_{u \in U} \din(u)$ edges enter some 
        vertex in $U$. Therefore, the number of edges 
        that exit $U$ is exactly $\sum_{u \in U} 
        (\dout(u)-\din(u))$. Hence, we conclude
       that \cref{eq:cutedges} holds.
    
         Next, consider a cut $(U, V \setminus U)$ that satisfies \cref{eq:cutedges}. Note that 
         for \textit{any} $X\subseteq V$, we have 
         $$ \sum_{x \in X} (\dout(x)-\din(x)) = e(X,V\setminus X) - e(V\setminus X, X),$$
         where for any $A,B\subseteq V$, the function $e(A,B)$ is defined as the number of edges $(u,v)$ with $u\in A$ and $v\in B$.
         Therefore, by 
         \cref{eq:cutedges}, we have $|U| \cdot |V 
         \setminus U| = e(U,V\setminus U) - e(V\setminus U, U)$. Since $e(U,V\setminus U)\leq |U| \cdot |V 
         \setminus U|$, the equality only holds when $e(U,V\setminus U)= |U| \cdot |V 
         \setminus U|$ and $e(V\setminus U, U) = 0$. This precisely means that every edge between $U$ and $V \setminus U$ is oriented from $U$ to $V \setminus U$. By \Cref{lem:scc-cut-orient}, we know that $(U, V \setminus U)$ is an SCC-cut.
    \end{proof}

    \begin{restatable}{lemma}{sccincdeg} \label{lem:scc-inc-deg}
        Given a digraph $G=(V,E)$ without any non-edge and an SCC-cut $(U, V \setminus U)$ of $G$, we have $$\forall u \in U,~v \in V \setminus U:~~\din(u) < \din(v)$$ 
    \end{restatable}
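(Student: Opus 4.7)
The plan is to argue the inequality by establishing a clean numerical gap: for every $v \in V \setminus U$, the indegree $\din(v)$ is at least $|U|$, while for every $u \in U$, the indegree $\din(u)$ is at most $|U|-1$. The value $|U|$ acts as a natural separator.

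First I would invoke \Cref{lem:scc-cut-orient} to assert that every edge between $U$ and $V \setminus U$ is oriented from $U$ to $V \setminus U$. Combined with the hypothesis that $G$ has no non-edge (so every pair of distinct vertices has at least one directed edge between them), this means that for each pair $(u,v) \in U \times (V \setminus U)$, the edge $(u,v)$ is present in $E$ while $(v,u)$ is absent. This is the only structural fact needed.

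Next, for a fixed $v \in V \setminus U$, each $u \in U$ contributes exactly one incoming edge to $v$, so $\din(v) \geq |U|$ just from edges crossing the cut. For a fixed $u \in U$, no vertex in $V \setminus U$ can contribute to $\din(u)$ because all cut edges point away from $U$; hence every edge counted by $\din(u)$ originates from $U \setminus \{u\}$, giving $\din(u) \leq |U|-1$. Chaining these bounds yields $\din(u) \leq |U|-1 < |U| \leq \din(v)$, which is precisely the claim.

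I do not anticipate any real obstacle here: the lemma is essentially a one-line counting consequence of \Cref{lem:scc-cut-orient} together with the no-non-edge hypothesis. The only subtlety worth flagging in the writeup is that bidirected edges are allowed within $U$ or within $V \setminus U$, but they do not affect the bound since $\din(u)$ still counts at most one incoming edge per other vertex of $U$. No case analysis on the tournament versus non-tournament structure is required beyond this remark.
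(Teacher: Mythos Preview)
Your proposal is correct and follows essentially the same argument as the paper: invoke \Cref{lem:scc-cut-orient} to orient all cut edges from $U$ to $V\setminus U$, then use the no-non-edge hypothesis to conclude $\din(v)\ge |U|$ for $v\in V\setminus U$ and $\din(u)\le |U|-1$ for $u\in U$. Your extra remark about bidirected edges within $U$ is a harmless clarification but not strictly needed, since the graph is simple and each vertex of $U\setminus\{u\}$ contributes at most one in-edge to $u$ regardless.
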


    \begin{proof}
        Since $(U, V \setminus U)$ is an SCC-cut, every edge between $U$ and $V \setminus U$ is oriented from $U$ to $V \setminus U$ (\Cref{lem:scc-cut-orient}). Thus, since $G$ has no non-edge, for all $v \in V\setminus U$, we have $\din(v) \geq |U|$. Again, since a vertex in $U$ can receive edges only from other vertices in $U$, we have that $\din(u) < |U|$ for all $u \in U$.  Therefore, we conclude $$\forall u \in U,~v \in V \setminus U:~~\din(u) < \din(v)\qedhere$$ 
    \end{proof}

We are now ready to prove \Cref{thm:scc-main}.

    \begin{proof}[Proof of \Cref{thm:scc-main}] First, note that \Cref{alg:sccgraph} stores only the indegree and the outdegree of each node, and hence the space usage is clearly $O(n\log n)$ bits. Thus, it only remains to prove the correctness, i.e., the list of components it returns is indeed the SCC-graph $G_\scc$ of $G$. 
    
    Observe that the count $c$ computed by the algorithm is precisely $\sum_{u\in S}(\dout(u)-\din(u))$, for the set that the current state of $S$ contains. This is because whenever a new vertex $u$ is added to $S$, we add $\dout(u)-\din(u)$ to $c$ (see lines 9-11). Further, whenever the if condition on line 12 is satisfied, we empty $S$ and set $c$ to 0 at the end (line 17). We can thus use the count $c$ to check whether $S$ is an SCC of $G$ (by \Cref{lem:scc-cut-eqn}).
        
        We induct on the number of SCCs in $G$. Consider the base case when $G$ has a single SCC. By \Cref{def:scc-cut}, $G$ has only one SCC-cut $(V,\emptyset)$. 
        At any iteration $i<n$, the set $S$ contains a proper subset of $V$, and hence $(S, V\setminus S)$ cannot be an SCC-cut. Then, by \Cref{lem:scc-cut-eqn}, $c = \sum_{u \in S} (\dout(u)-\din(u))$ cannot equal $|S|\cdot (n-|S|)$. Since $n'=n$, the if condition on line 12 will not be satisfied. Therefore, at iteration $n$, the set $S$ contains all nodes, i.e., $S=V$. This means $(S, V\setminus S)$ is an SCC-cut, and by \Cref{lem:scc-cut-eqn}, $c$ must equal $|S|\cdot (n'-|S|)$ now. Thus, the if condition on line 12 will be satisfied, and $S=V$ will be added to $G_\scc$, after which the algorithm terminates. Hence, it will be the only set in the returned $G_\scc$, which is correct. This proves the base case.

         Next, assume by induction hypothesis that \Cref{alg:sccgraph} works correctly if the input tournament has $\ell$ SCC's for some $\ell\geq 1$. Now consider a tournament $G$ with $\ell+1$ SCCs, where $\langle V_1, \ldots, V_{\ell+1} \rangle$ represents $G_\scc$. Let $L$ be the list obtained by sorting the nodes of $G$ by in-degree (as in line 7). By \Cref{lem:scc-inc-deg}, we know that for all $u \in V_1$ and $v \in V \setminus V_1$, $\din(u) < \din(v)$. Hence, the first $|V_1|$ vertices in $L$ constitute $V_1$. Thus, when $S$ contains a prefix of $L$ of size $<|V_1|$, it basically contains a proper subset of $V_1$, meaning that the cut $(S,V\setminus S)$ cannot be an SCC-cut. Therefore, the if condition on line 12 will not be satisfied. Thus, $S$ will keep on adding nodes in the order given by $L$ until it contains $V_1$. Now, $(S, V \setminus S)$ is an SCC-cut. By \Cref{lem:scc-cut-eqn}, $c$ now equals $|S| \cdot (n'-|S|)$ (where $n'$ equals $n$, the size of $G$, because it wasn't updated since it was initialized to $n$ in line 5), and so we add $S = V_1$ to $G_\scc$, the list of SCCs (line 13). 
         
         We now show that lines 14-17 sets all the variables to their correct initial values for the tournament $G\setminus V_1$. Line 14 decrements $n'$ by $|S|$, and this is the size of $G\setminus V_1$ since $|S|=|V_1|$ and $n'$ was initially $n=|G|$. In line 15-16, we decrement the indegree $d^-(v)$ by $|S|$ for each $v$ in $G\setminus V_1$. This is the correct indegree of each such $v$ since  \Cref{lem:scc-cut-orient} implies that $v$ had an incoming edge from each vertex in $V_1$. Thus, when $V_1$ is removed, each $d^-(v)$ drops by exactly $|V_1|=|S|$. Finally, we empty $S$ and reset the counter $c$ to $0$. Note that the outdegrees $d^+(v)$ stay the same for each node $v$ in the remaining graph; this is because they had no outneighbor in $S$. Thus, as we continue to the next iteration, we have the variables $n'$, $c$, $S$, $d^-(v)$, and $d^-(v)$ for each $v\in T\setminus V_1$ set to their correct values for the tournament $G\setminus V_1$. Note that the list $L$ remains in the same order since each remaining vertex has its indegree dropped by the same value. The algorithm will now run on a graph with $\ell$ SCCs, with knowledge of its indegree and outdegree sequence. By the induction hypothesis, it will correctly find $\langle V_2, \ldots, V_{\ell+1} \rangle$ as the SCC graph of $G\setminus V_1$, which it will append after $\langle V_1\rangle$. Thus, our output is $\langle V_1, \ldots, V_{\ell+1} \rangle$, which is indeed the SCC-graph of $G$. By induction, we conclude that given any tournament $G$, our algorithm correctly finds the SCC-graph $G_\scc$.
\end{proof}

\begin{remark}
Since tournaments form a subclass of digraphs with no non-edges, \Cref{thm:scc-main} applies to them in particular. Also note that for tournaments, \Cref{alg:sccgraph} does not even need to store both indegrees and outdegrees; just one of them, say the set of indegrees, is enough since we can derive the outdegree values from them.   
\end{remark} 

\begin{corollary}\label{cor:sccdect}
     There is a deterministic single-pass $O(n\log n)$-space algorithm for \sccdect.
\end{corollary}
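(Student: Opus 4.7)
The plan is to obtain this corollary as an immediate consequence of \Cref{thm:scc-main}. First I would observe that a tournament, by definition, has exactly one directed edge between each pair of distinct vertices, so in particular it contains no non-edges. Hence the class of tournaments is a subclass of the digraphs to which \Cref{thm:scc-main} applies, and invoking that theorem on a tournament input yields a deterministic single-pass algorithm that outputs a partition $\langle V_1,\dots,V_\ell\rangle$ of $V$ into SCCs using $O(n\log n)$ bits of space.

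Next I would argue that the partition produced is exactly the output required by the \sccdect problem as defined in \Cref{sec:results}. The theorem actually gives us strictly more information than the problem requires, since it also guarantees that the listed order $V_1,\dots,V_\ell$ is a topological ordering of the SCC-graph; we simply ignore this extra ordering information if only the partition is needed. The space bound $O(n\log n)$ bits and the deterministic single-pass guarantee transfer directly from \Cref{thm:scc-main} to this corollary.

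Finally, I would point to the remark following \Cref{thm:scc-main}, which notes that on tournaments the algorithm does not even need to store both indegrees and outdegrees: the outdegree $\dout(v)$ equals $n-1-\din(v)$, so maintaining the indegrees alone suffices. This does not change the asymptotic space complexity but confirms that the $O(n\log n)$ bound is achieved in a very clean way on tournaments. There is no genuine obstacle here; the corollary is essentially a restatement of \Cref{thm:scc-main} specialized to tournaments, and the proof is a one-line reduction.
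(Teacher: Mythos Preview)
Your proposal is correct and mirrors the paper's own treatment: the paper derives this corollary directly from \Cref{thm:scc-main} via the preceding remark that tournaments are a subclass of digraphs with no non-edges (and that indegrees alone suffice). There is nothing to add; your one-line reduction is exactly what is intended.
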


Observe that this space bound is optimal since $\Omega(n\log n)$ bits are needed to simply present the output of \sccdect.

\subsubsection{Generalization to Arbitrary Digraphs}\label{sec:sccarb}

We extend our algorithm to get sublinear-space solutions for a broader class of graphs. These are \emph{almost tournaments} which are $o(n^2)$ edges ``away'' from being a tournament. First, we formally define ``$k$-closeness to tournaments''. 
\begin{definition}[$k$-close to tournament]
    A digraph $G=(V,E)$ is called \emph{$k$-close to tournament} if a total of at most $k$ edges can be added to or deleted from $G$ such that the resulting digraph is a tournament. 
\end{definition}

In other words, a graph is $k$-close to tournament iff the total number of non-edges and bidirected edges it contains is at most~$k$. We obtain the following result for such graphs.

\begin{theorem} \label{thm:scc-gen}
    Given an input $n$-node digraph $G = (V, E)$ that is $k$-close to tournament, there is a deterministic single-pass streaming algorithm that finds the SCC-graph of $G$ in $\tO(n+k)$ space. 
\end{theorem}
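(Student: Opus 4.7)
Plan: I would extend \Cref{alg:sccgraph} by augmenting the stream with a sparse-recovery sketch that identifies the $\le k$ \emph{abnormal pairs} (non-edges and bidirected pairs) of $G$, and then enumerating tournament completions of $G$ in the post-processing phase, invoking \Cref{alg:sccgraph} as a subroutine on each.

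Streaming phase: Maintain $\din(v),\dout(v)$ for every $v\in V$ in $O(n\log n)$ bits, exactly as in \Cref{alg:sccgraph}. In parallel, run \Cref{fact:sp-rec} on a vector $\bx\in\ZZ^{\binom{V}{2}}$ initialized to $-1$ everywhere, incrementing $\bx_{\{u,v\}}$ on each directed edge $(u,v)$ or $(v,u)$ seen in the stream. At the end, $\bx_{\{u,v\}}$ equals $0$ for a normal pair, $+1$ for a bidirected pair, and $-1$ for a non-edge, so $\bx$ has at most $k$ non-zero coordinates. \Cref{fact:sp-rec} thus recovers the complete list of abnormal pairs (with types) in $\tilde{O}(k)$ space, giving total streaming space $\tilde{O}(n+k)$.

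Post-processing phase: Enumerate the at most $2^{\min(k,n)}$ tournament completions $T$ of $G$: each non-edge admits two orientations and each bidirected pair admits two choices of a retained direction. For each resolution, derive $T$'s degree sequence from $G$'s degrees plus the resolution, and run \Cref{alg:sccgraph} in memory to obtain $T$'s SCC chain $\langle V_1,\dots,V_\ell\rangle$. Since every bidirected pair of $G$ must lie within a common SCC of $G$ yet may cross $V_i$'s of $T$, coarsen the chain by merging any $V_i$'s that are linked through a bidirected pair, yielding a partition $\langle W_1,\dots,W_m\rangle$. Call this partition \emph{admissible} if the $W_j$'s can be topologically ordered so that no edge of $G$ goes backward---a check that uses only the chain order of $T$ (which fixes the direction of every normal inter-$V_i$ edge) and the stored abnormal-pair list. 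Output the \emph{finest} admissible partition across all resolutions, which we claim equals $G$'s SCC partition.

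Main obstacle: The two principal tasks are (a) showing that at least one resolution produces an admissible partition whose parts equal $G$'s SCCs, and (b) showing that $G$'s SCC partition is the finest admissible one across all resolutions. For (a), take a topological order of $G$'s true SCC-graph, orient every inter-SCC non-edge forward, and inside each SCC $V_i^*$ choose orientations of intra-SCC non-edges and retentions of bidirected directions so as to keep $V_i^*$ strongly connected in $T$; in degenerate cases where no such choice exists (e.g., a two-vertex bidirected-only SCC), the merge step restores $V_i^*$ as a single merged part. For (b), any admissible partition must be a coarsening of $G$'s SCC partition, since a strict refinement of an SCC necessarily exposes a cut through a strongly connected subgraph and thereby creates a back-edge, contradicting admissibility; meanwhile, $G$'s SCC partition is itself admissible, so it is the finest. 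The resulting exponential runtime in $\min(k,n)$ matches the lower bound stated in \Cref{sec:results} for approaches that invoke the tournament subroutine as a blackbox.
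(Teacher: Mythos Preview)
Your streaming phase matches the paper's (degree counters plus a $k$-sparse recovery sketch to extract the abnormal pairs), but your post-processing takes a genuinely different route. The paper never deletes bidirected edges: for each candidate set $S$ (enumerated by increasing size) it orients only the \emph{non}-edges outward from $S$, obtaining a digraph $G'$ with no non-edges (but possibly bidirected pairs), on which \Cref{alg:sccgraph} applies directly; \Cref{lem:source-scc} then gives the clean equivalence ``$S$ is a source SCC of $G$ iff $S$ is the source SCC of $G'$,'' and source SCCs are peeled off one by one. You instead force a true tournament by also dropping one direction of every bidirected pair, enumerate all such completions, and try to reassemble $G$'s SCCs by merging through bidirected pairs.

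There is a real gap in part~(a). You assert that inside each true SCC $V_i^*$ one can ``choose orientations\ldots so as to keep $V_i^*$ strongly connected in $T$,'' and that ``in degenerate cases\ldots the merge step restores $V_i^*$ as a single merged part.'' Neither claim is established. The first can fail whenever the strong connectivity of $V_i^*$ depends on using both directions of some bidirected pair. For the second, take $V_i^*=\{a,b,c\}$ with a bidirected pair $\{a,b\}$ and single edges $(b,c),(c,a)$: the completion keeping $(b,a)$ has $T$-SCC chain $\langle\{b\},\{c\},\{a\}\rangle$, and merging via the lone bidirected pair yields $\{a,b\},\{c\}$, not $V_i^*$. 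In this instance the \emph{other} completion works, but you give no argument that for every strongly connected $H$ \emph{some} tournament completion has merged-via-bidirected partition equal to $\{V(H)\}$; this is a nontrivial combinatorial lemma you have not proved. Without it you cannot conclude that the finest admissible merged partition across all resolutions equals $G$'s SCC partition (indeed, ``finest'' may be ill-defined if only incomparable strict coarsenings arise). The paper sidesteps the whole issue because its completion $G'$ only \emph{adds} edges to $G$, so strong connectivity is preserved monotonically and the source-SCC test is exact.
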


To prove the theorem, we first need the following lemmas. For a digraph $G$, a \emph{source SCC} of $G$ is a source node in $G_\scc$. 

\begin{restatable}{lemma}{sourcescca}\label{lem:source-scc}
     Let $G=(V,E)$ be a digraph and $F$ be the set of its non-edges. Consider any non-empty $S\subseteq V$ that does
     not contain a source SCC of $G$ as a proper subset. For 
     each non-edge $f =\{u,v\} \in F$, we add the edge $(u,v)$ 
     to $G$ if $u\in S$ and $v\not\in S$, and otherwise, we 
     add either $(u, v)$ or $(v, u)$ (arbitrarily). Let $G'$ 
     be the resultant graph. Then $S$ is a source SCC in $G$ 
     if and only if it is a source SCC in $G'$. 
\end{restatable}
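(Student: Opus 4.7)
The plan is to prove the two directions of the biconditional separately, using throughout the characterization of a source SCC $X$ of a digraph $H$ as a vertex set such that the induced subgraph $H|_X$ is strongly connected and no edge of $H$ enters $X$ from its complement. The crucial structural observation is that $G'$ is obtained from $G$ by only adding edges, and the orientation rule is designed so that every new edge crossing the cut between $S$ and $V \setminus S$ is oriented out of $S$. In particular, the set of edges entering $S$ from outside is the same in $G$ and in $G'$.

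For the forward direction, I will argue that if $S$ is a source SCC of $G$, then $G|_S$ strongly connected implies $G'|_S$ strongly connected (since edges are only added), and the observation above combined with the original absence of incoming edges into $S$ ensures no edge of $G'$ enters $S$ from outside. Hence $S$ remains a source SCC of $G'$. This direction should be routine.

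The main obstacle lies in the reverse direction: strong connectivity of $G'|_S$ does not immediately descend to $G|_S$, because the arbitrarily oriented non-edges inside $S$ may have been essential for connecting things up in $G'$. This is precisely where the hypothesis on $S$ is used. Suppose $S$ is a source SCC of $G'$; then $G$ has no edge entering $S$ from outside. Assume for contradiction that $G|_S$ is not strongly connected, and pick a source SCC $S' \subsetneq S$ of the SCC-DAG of $G|_S$. I will verify that $S'$ itself is a source SCC of $G$: it is strongly connected in $G$ by construction, it receives no edges from $S \setminus S'$ (as a source SCC within $G|_S$), and it receives no edges from $V \setminus S$ (since $S$ itself receives none). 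This contradicts the hypothesis that $S$ contains no source SCC of $G$ as a proper subset. Hence $G|_S$ must be strongly connected; combined with the lack of incoming edges from outside, $S$ must be a single SCC of $G$ (any larger SCC containing $S$ would require an incoming edge into $S$), and in fact a source SCC.
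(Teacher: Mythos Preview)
Your proof is correct. The forward direction matches the paper's; for the reverse direction the paper argues the contrapositive via a three-case split (whether $S$ fails to be strongly connected, is a proper subset of some SCC, or is a non-source SCC), whereas you argue directly: from the observation that the set of edges entering $S$ is unchanged, you reduce to showing $G|_S$ is strongly connected, and for that you take a source SCC $S'$ of the induced subgraph $G|_S$ and verify it is a source SCC of the whole of $G$, contradicting the hypothesis. Your route is cleaner in that it replaces the case analysis by a single application of your source-SCC characterization; the paper's Case~1 instead reasons about a source SCC of all of $G$ (not of $G|_S$) and argues disjointness from $S$, which is a bit more roundabout. Both arguments ultimately hinge on the same structural fact that no new edge enters $S$ in $G'$, so Cases~2 and~3 of the paper are implicitly absorbed in your first deduction that $G$ has no edge into $S$.
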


\begin{proof}
    For the ``only if'' direction, suppose that $S$ is a source SCC of $G$. First note that adding edges to a digraph cannot lead to removal of nodes from an SCC, and hence, nodes in $S$ stay in the same SCC in $G'$. Next, note that each edge in $G'$, which was added to $G$ between $S$ and $V \setminus S$, is oriented from $S$ to $V \setminus S$. Since all other edges in $G'$ between $S$ and $V\setminus S$ were already oriented outwards from $S$ in $G$ (since $S$ is a \emph{source} SCC of $G$), no new vertex can join $S$ to form a bigger SCC, and $S$ must still be a source SCC of $G'$. 

    For the ``if'' direction, we prove the contrapositive. Suppose that $S$ is not a source SCC of $G$. Then, we show that $S$ cannot be a source SCC of $G'$ either. We divide the analysis into 3 cases.

\mypar{Case 1} \textit{$S$ is not strongly connected in $G$}. Assume to the contrary that the addition of edges to $G$ turns $S$ into a source SCC in $G'$. Let $S'$ be any source SCC of $G$. Since $S$ does not contain $S'$ as a subset (by the premise of the lemma), there must be a vertex $u\in S'$ such that $u\not\in S$. Now assume that $S$ contains a vertex $v\in S'$. Since $u$ and $v$ are strongly connected in $G$, they are strongly connected in $G'$. Thus, since $S$ is an SCC in $G'$ and it contains $v$, it must also contain $u$: this is a contradiction. Hence, $S$ must be completely disjoint from $S'$, i.e., it must be disjoint from all source SCC's of $G$. This means there must be an edge from $V\setminus S$ to $S$ in $G$, and hence, in $G'$. This is a contradiction to the assumption that $S$ is a source SCC in $G'$. Therefore, $S$ cannot be a source SCC in $G'$.   

\mypar{Case 2} \textit{$S$ is a strict subset of an SCC of $G$}. Let $S'$ be the SCC of $G$ that contains $S$. Again, since addition of edges cannot remove nodes from an SCC, $S'$ remains strongly connected in $G'$. Hence $S$ cannot be a maximal connected component in $G'$, i.e., it cannot be an SCC in $G'$, let alone a source SCC.

    \mypar{Case 3} \textit{$S$ is a non-source SCC of $G$}. There must be some SCC $S'$ in $G$ such that there is an edge from $S'$ to $S$. Thus, there is an edge from $V \setminus S$ to $S$ in $G$, and hence, in $G'$. Therefore, $S$ cannot be a source SCC in $G'$. 
\end{proof}

\begin{restatable}{lemma}{sourcesccb}\label{lem:source-scc-2}
     Let $G=(V,E)$ be a digraph and $F$ be the set of its non-edges. Suppose we are only given the set $F$ and the indegrees and outdegrees of all nodes in $G$. Then, for any non-empty subset of nodes $S\subseteq V$ that does
     not contain a source SCC of $G$ as a proper subset, we can deterministically check whether $S$ is a source SCC of $G$.  
\end{restatable}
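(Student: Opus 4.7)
The plan is to reduce the check to one we can perform purely from the given data by routing through the tournament $G'$ of \Cref{lem:source-scc}. Given $S$ and the non-edge set $F$, I would (conceptually) construct $G'$ as prescribed by that lemma: orient each non-edge crossing the cut $(S, V\setminus S)$ from the $S$-side to the $V\setminus S$-side, and orient the remaining non-edges arbitrarily. \Cref{lem:source-scc} then yields that $S$ is a source SCC of $G$ if and only if $S$ is a source SCC of $G'$.

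The key equivalence I would then establish is: $S$ is a source SCC of $G$ if and only if $(S, V\setminus S)$ is an SCC-cut of $G'$. The forward direction is immediate since a source SCC forms a prefix in the SCC-graph. For the backward direction, if $(S, V\setminus S)$ is an SCC-cut of $G'$, then \Cref{lem:scc-cut-orient} applied to $G'$ gives that every edge of $G'$ between $S$ and $V\setminus S$ is oriented from $S$ to $V\setminus S$; since $G$ is a subgraph of $G'$ as edge sets, the same is true in $G$. Consequently each SCC of $G$ is contained entirely in $S$ or in $V\setminus S$, and $S$ is a union of a prefix of SCCs of $G$ in some topological ordering; in particular, $S$ must contain at least one source SCC of $G$, and the premise that $S$ does not properly contain a source SCC then forces $S$ to equal that source SCC. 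This backward direction is the main obstacle, since $G'$ may have a different SCC structure from $G$ (most notably a unique source SCC), so one must carefully leverage the ``no reverse edges'' structural consequence of the SCC-cut in $G'$ together with the premise on $S$ in $G$.

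It then remains to verify the SCC-cut condition on $G'$ using only the given data. By \Cref{lem:scc-cut-eqn} applied to $G'$, $(S, V\setminus S)$ is an SCC-cut if and only if $\sum_{u\in S}(\dout_{G'}(u) - \din_{G'}(u)) = |S|\cdot|V\setminus S|$. A direct accounting shows that the left-hand side equals $\sum_{u\in S}(\dout_G(u) - \din_G(u)) + |\{f\in F : f\text{ crosses }(S, V\setminus S)\}|$: each non-edge crossing the cut contributes $+1$ to the sum (via its oriented copy leaving $S$), while each non-edge with both endpoints in $S$ contributes $0$, since one endpoint gains an out-edge and the other an in-edge, both inside $S$, and these cancel. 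Since the algorithm is given $F$ and all indegrees and outdegrees in $G$, this quantity is computable deterministically, completing the check.
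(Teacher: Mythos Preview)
Your proposal is correct, and it shares the paper's key reduction step---passing via \Cref{lem:source-scc} to the completed graph $G'$ and then working with the degrees of $G'$, which are computable from the degrees of $G$ together with $F$---but it diverges in how the test on $G'$ is carried out. The paper simply feeds the degree sequence of $G'$ into \Cref{alg:sccgraph}, reads off the full SCC-graph of $G'$, and declares $S$ a source SCC of $G$ iff $S$ equals the (unique) source SCC of $G'$; correctness then follows in one line from \Cref{lem:source-scc}. You instead bypass the full decomposition and test only the single SCC-cut equation of \Cref{lem:scc-cut-eqn} for the cut $(S,V\setminus S)$ in $G'$, together with the explicit accounting identity for how the sum $\sum_{u\in S}(\dout-\din)$ changes when the non-edges are filled in. This buys you a leaner check (one arithmetic identity rather than running an algorithm), but the price is that you must supply your own backward implication: an SCC-cut of $G'$ only tells you $S$ is a \emph{prefix} of SCCs, not a single source SCC, so you correctly push the ``no reverse edges across the cut'' conclusion back down to $G$ (using $E(G)\subseteq E(G')$) and then invoke the premise on $S$ in $G$ to collapse the prefix to a single source SCC. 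The paper avoids this extra step because \Cref{alg:sccgraph} already isolates the source SCC of $G'$ exactly. Both approaches are valid; yours is a bit more self-contained, the paper's a bit more modular.
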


\begin{proof}
   Given the sets $S$ and $F$, we construct the graph $G'$ as described in \Cref{lem:source-scc} and calculate the resulting indegrees and outdegrees in $G'$. Observe that $G'$ has no non-edges. We thus use \Cref{alg:sccgraph} to find the SCC-graph of $G'$ and return that $S$ is a source SCC of $G$ if and only if it is the source SCC of $G'$. By \Cref{lem:source-scc}, this algorithm is correct.
\end{proof}

We are now ready to prove \Cref{thm:scc-gen}.

\begin{proof}[Proof of \Cref{thm:scc-gen}]
    Consider the following algorithm.

    \begin{description}
        \item[Before the stream:] Initialize the in-degree and out-degree of every vertex to 0. Initialize a sparse recovery sketch $\cS$ (\Cref{fact:sp-rec}) for a vector $\bx\in \{-1,0,1\}^{\binom{n}{2}}$ indexed by unordered pairs of vertices $\{u, v\}$. Initially $\bx_{\{i,j\}} = 1$ for all $\{i,j\}\in \binom{[n]}{2}$.
        
        \item[During the stream:] For edge $(u, v)$, increment the out-degree of $u$ and the in-degree of $v$, and update $\mathcal{S}$ by decrementing $\bx_{\{u, v\}}$ by 1.

        \item[After the stream:] Use $\mathcal{S}$ to recover the non-zero entries of $\bx$. For every unordered pair of vertices $\{u, v\}$, if $\bx_{\{u, v\}} = 1$ then add $\{u, v\}$ to the set of non-edges $F$. Iterate through subsets $S \subseteq V$ by increasing size, and check whether $S$ is a source SCC of $G$. If it is, we add $S$ to our list of SCC's and recurse on $G\setminus S$ with the indegrees updated as follows. For each $v \in V\setminus S$, let $\din^{(S)}(v):=|\{u\in S: \{u,v\}\not\in F\}|$. Then we decrement the current value of $\din(v)$ by $\din^{(S)}(v)$. The outdegrees of nodes in $V\setminus S$ are not updated. 

        The algorithm terminates when the remaining graph becomes empty. Suppose that the list of SCC's we obtain is $\langle S_1, ..., S_\ell \rangle$ (in this order). Construct $G_\scc$ as follows: for all $i,j\in [\ell]$ with $i<j$, add an edge from $S_i$ to $S_j$ if there exists $u \in S_i$ and $v \in S_j$ such that $\{u,v\}\not\in F$. Return the resultant SCC-graph as $G_\scc$.
    \end{description}

    Note that the space bound of $\tO(n + k)$ space follows since counting the degrees takes $\tO(n)$ space and the sketch $\cS$ takes $O(k\cdot \text{polylog} n)$ space (\Cref{fact:sp-rec}). It remains to prove the correctness of the algorithm.
    Since we initialize the vector $\bx$ to the all $1$s vector and decrement $\bx_{\{u,v\}}$ whenever a directed edge between the vertices $u$ and $v$ appears, we get that at the end of the stream, the number of edges between a pair of vertices $\{u, v\}$ is equal to $1-\bx_{\{u, v\}}$. Hence, the entries where $\bx_{\{u, v\}} = 1$ are precisely the non-edges. Further, the non-zero entries of $\bx$ correspond to either the non-edges (when $\bx_{\{u, v\}}=1$) or the bidirected edges (when $\bx_{\{u, v\}}=-1$), and since $G$ is $k$-close to being a tournament, $\bx$ has at most $k$ non-zero entries. Hence, $\cS$ works correctly and we correctly identify $F$.
    
    Next, note that we iterate over subsets $S$ by increasing size to check whether each is a source SCC of $G$, and whenever the answer is positive, we remove $S$ and recurse on the remaining graph. Thus, no $S$ we check can contain a source SCC of $G$ as a strict subset. This implies that we can correctly apply \Cref{lem:source-scc-2} and identify the source SCC in the remaining graph. 

    To see that we recurse on $G\setminus S$ with its accurate set of indegrees and outdegrees, observe that $\din^{(S)}(v)$ counts the number of edges directed from some node in $S$ to $v$: for each $u\in S$ with $\{u,v\}\not\in F$, we know that there is an edge between $u$ and $v$, and since $S$ is a source SCC of $G$, it must be oriented from $u$ to $v$. Hence, removal of $S$ decreases the indegree of $v$ by precisely $\din^{(S)}(v)$. Again, since there was no edge from any $v\in V\setminus S$ to $S$, there is no change in the outdegrees.   
    
    Finally, since we obtain the list $\langle S_1, ..., S_\ell \rangle$ by recursively removing source SCC's from the remaining graph and appending them to the list, we know that $\langle S_1, ..., S_\ell \rangle$ is a topological ordering of $G_\scc$. Thus, edges can only go from left to right in this ordering: for all $i < j$, there is an edge from $S_i$ to $S_j$ if and only if, for some $u \in S_i$ and some $v \in S_j$, the pair $\{u,v\}$ does not appear in the set of non-edges $F$. Therefore, SCC-graph returned by our algorithm is indeed $G_\scc$.
\end{proof}

In particular, we get the following corollary from \Cref{thm:scc-gen}.

\begin{corollary}\label{cor:scc-gen-ub}
    Given an input $n$-node digraph $G = (V, E)$ that is $k$-close to tournament, there is a deterministic $\tO(n+k)$-space streaming algorithm for \sccdec. 
\end{corollary}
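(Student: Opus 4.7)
The plan is to derive this corollary essentially as an immediate consequence of \Cref{thm:scc-gen}. Recall that \sccdec only asks for the partition $(V_1, \ldots, V_\ell)$ of $V$ into vertex sets, each inducing an SCC of $G$; it does not require the edges of $G_\scc$. Since \Cref{thm:scc-gen} already produces the full SCC-graph of $G$ in $\tO(n+k)$ space, reading off the vertex partition from the output gives a valid solution to \sccdec.

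More concretely, I would invoke \Cref{thm:scc-gen} as a black box. During the stream, run its algorithm with $\tO(n+k)$ bits of space (sparse-recovery sketch for non-edges/bidirected pairs of size $\tO(k)$, plus $\tO(n)$ bits for degree counters). At the end of the stream, the post-processing phase of \Cref{thm:scc-gen} produces a topologically ordered list $\langle S_1, \ldots, S_\ell\rangle$ together with the induced edges of $G_\scc$; simply discard the edge information and output the list $\langle S_1, \ldots, S_\ell\rangle$ as the desired partition. Correctness is inherited from \Cref{thm:scc-gen}: each $S_i$ is an SCC of $G$ and the $S_i$'s partition $V$. Determinism is preserved since \Cref{thm:scc-gen}'s algorithm is itself deterministic (the sparse recovery tool of \Cref{fact:sp-rec} is deterministic, and so is the post-processing based on \Cref{lem:source-scc,lem:source-scc-2} and \Cref{alg:sccgraph}).

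There is essentially no obstacle here: the only subtlety worth noting is that \sccdec's output size is $\Theta(n\log n)$ bits, so the space bound $\tO(n+k)$ is consistent with the output-writing cost. Hence the corollary follows with no additional argument beyond invoking \Cref{thm:scc-gen} and projecting the SCC-graph onto its vertex partition.
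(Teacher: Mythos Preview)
Your proposal is correct and matches the paper's approach: the corollary is stated immediately after \Cref{thm:scc-gen} with the remark ``In particular, we get the following corollary from \Cref{thm:scc-gen},'' i.e., it is taken as an immediate consequence without further argument. Your added remarks about projecting the SCC-graph onto its vertex partition, inherited determinism, and output-size consistency are all valid but go beyond what the paper spells out.
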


A matching lower bound of $\Omega(n+k)$ space follows immediately from the lower bounds for \sconn or \reach (\Cref{thm:reach-gen-lb}) that we prove in \Cref{subsec:strconn-reach}. 

\begin{corollary}\label{cor:scc-gen-lb}
     A single-pass streaming algorithm that solves \sccdec on any digraph that is $k$-close to tournament requires $\Omega(n+k)$ space. 
\end{corollary}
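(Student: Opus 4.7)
The plan is to derive the lower bound by a trivial reduction from the strong-connectivity lower bound \Cref{thm:reach-gen-lb} (which, per the cross-reference, is proved in \Cref{subsec:strconn-reach} and establishes $\Omega(n+k)$ space for \sconn on digraphs $k$-close to tournament). The key observation is that \sconn is a special case of \sccdec in terms of output: a digraph is strongly connected if and only if its SCC partition has exactly one part.

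Concretely, I would argue as follows. Suppose toward a contradiction that there is a single-pass streaming algorithm $\cA$ for \sccdec on digraphs $k$-close to tournament using $o(n+k)$ space. Given any instance of \sconn on such a digraph, run $\cA$ on the same stream using the same space, then at the end of the stream inspect the output partition $(V_1,\ldots,V_\ell)$ and report ``yes'' iff $\ell = 1$. This check uses no additional streaming memory (it is a post-processing step on the already-stored state), and by correctness of $\cA$ the answer is exactly whether $G$ is strongly connected. This would yield an $o(n+k)$-space single-pass algorithm for \sconn on digraphs $k$-close to tournament, contradicting \Cref{thm:reach-gen-lb}.

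Strictly speaking, I should be slightly careful about the output-size bookkeeping: the partition can be large, but the streaming-space lower bound only concerns the working memory maintained during the pass, not the size of the output written at the end; the ``$\ell=1$'' test can be done by maintaining a single flag during output emission, so no extra working space is needed. There is no real obstacle here — the entire argument is a one-line reduction, and the burden of proof has already been discharged by \Cref{thm:reach-gen-lb}. The same argument works verbatim using \reach instead of \sconn, since the SCC partition determines whether any fixed pair $s,t$ lies in the same component, which in combination with a suitable reduction graph also suffices.
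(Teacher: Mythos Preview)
Your proposal is correct and is essentially the same approach as the paper's: the paper simply states that the bound follows immediately from the lower bounds for \sconn (or \reach) given in \Cref{thm:reach-gen-lb}, and you have spelled out the trivial reduction (solve \sconn by checking whether the SCC partition returned by an \sccdec algorithm has a single part). Your aside about using \reach instead is slightly loose---the SCC partition alone does not determine reachability in general---but in the specific reduction instance of \Cref{thm:reach-gen-lb} reachability coincides with $s$ and $t$ lying in the same component, so it goes through as well.
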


 Note that the algorithm described in the proof of \Cref{thm:scc-gen} makes $\tO(2^n)$ calls to the protocol in \Cref{lem:source-scc-2}, each of which takes $\tO(n)$ time to check. Thus, the algorithm takes $\tO(2^n)$ time in the worst case. Even for the easier problem of \sconn, we show that such a runtime in necessary in some sense, as long as we use a subroutine for tournaments to solve the problem for almost-tournaments (see \Cref{thm:sconn-oracle-lb}).

\subsection{Reachability and Strong Connectivity}\label{subsec:strconn-reach}\label{sec:conn}

Here, we complement the upper bounds implied from the previous section for \reach and \sconn (see \Cref{sec:results} for formal definition) with matching lower bounds. First we prove the lower bounds for tournaments and then for arbitrary digraphs. Additionally, we show that the runtime given by our \sccdec algorithm for almost tournaments can be improved for the special cases of \reach and \sconn.

\subsubsection{Tight Lower Bounds for Tournaments}\label{subsec:tou-lb}

Observe that \Cref{thm:scc-main} immediately implies single-pass semi-streaming algorithms for \sconnt and \reacht. Given the SCC-graph $\langle V_1, \ldots, V_\ell \rangle$ of $T$, a node $t$ is reachable from another node $s$ in $T$ iff $i \leq j$, where $V_i$ is the SCC containing $s$ and $V_j$ is the SCC containing $t$. Again, a tournament $T$ is strongly connected iff the SCC-graph contains a single SCC. 
Thus, we get the following corollaries.

\begin{restatable}{corollary}{reachable}\label{cor:reacht}
    There is a deterministic single-pass $O(n\log n)$-space algorithm for \reacht.
\end{restatable}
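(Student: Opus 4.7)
The plan is to derive this corollary directly from \Cref{thm:scc-main}, which already gives a deterministic single-pass $O(n\log n)$-space algorithm that outputs the SCC-graph $\langle V_1,\ldots,V_\ell\rangle$ of the input tournament $T$. Since a tournament has no non-edges, the SCC-graph is an acyclic tournament, and the returned ordering is its unique topological order: for every $i<j$, all edges between $V_i$ and $V_j$ point from $V_i$ to $V_j$. This is precisely the structural property I will exploit.

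First, I would run \Cref{alg:sccgraph} on the input stream, maintaining only the indegree counters (which suffice for tournaments), using $O(n\log n)$ bits. At the end of the stream, the post-processing step produces the ordered partition $\langle V_1,\ldots,V_\ell\rangle$. Next, since $V$ is known in advance and $s,t\in V$ are part of the input query, I can, in post-processing, identify in $O(n)$ time the indices $i,j\in[\ell]$ with $s\in V_i$ and $t\in V_j$, simply by scanning the partition.

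The reachability test is then trivial: output \emph{yes} if and only if $i\le j$. For correctness, note that if $i=j$ then $s$ and $t$ lie in the same SCC of $T$, so there is a directed path from $s$ to $t$ by definition. If $i<j$, then pick any $u\in V_i$ and $v\in V_j$; by \Cref{thm:scc-main}, the edge between $u$ and $v$ points from $V_i$ to $V_j$, so one can concatenate a path inside $V_i$ from $s$ to some exit vertex, the edge to some entry vertex of $V_j$, and a path inside $V_j$ to $t$. Conversely, if $i>j$, then no edge between the two SCCs goes from $V_j$ to $V_i$, and more generally no edge in $T$ goes from a later SCC in the ordering to an earlier one, so there cannot be any directed path from $s\in V_i$ to $t\in V_j$.

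There is essentially no obstacle: the entire burden of the argument has been discharged by \Cref{thm:scc-main}, and the only remaining content is the observation that reachability in an acyclic tournament (namely $T_{\scc}$) reduces to comparing positions in its topological order. The space bound $O(n\log n)$ inherits directly from \Cref{thm:scc-main}, and the algorithm is deterministic because \Cref{alg:sccgraph} is.
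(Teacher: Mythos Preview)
Your proposal is correct and follows essentially the same approach as the paper: the paper also derives the corollary directly from \Cref{thm:scc-main} by noting that $t$ is reachable from $s$ in $T$ iff $i\le j$, where $V_i$ and $V_j$ are the SCCs containing $s$ and $t$ in the ordered output $\langle V_1,\ldots,V_\ell\rangle$. Your write-up is somewhat more detailed than the paper's one-line justification, but the argument is the same.
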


\begin{restatable}{corollary}{strongconn}\label{cor:strongconnt}
    There is a deterministic single-pass $O(n\log n)$-space algorithm for \sconnt.
\end{restatable}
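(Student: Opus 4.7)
The plan is to invoke \Cref{thm:scc-main} as a black box and then inspect its output. Since every tournament is, in particular, a digraph with no non-edges, \Cref{thm:scc-main} applies directly and yields a deterministic single-pass algorithm using $O(n\log n)$ bits that correctly returns the ordered SCC-decomposition $\langle V_1, \ldots, V_\ell\rangle$ of the input tournament $T$.

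To decide strong connectivity, I would then use the elementary characterization that a digraph is strongly connected if and only if it consists of a single SCC. So after running the algorithm of \Cref{thm:scc-main}, it suffices to check whether the returned list has length $\ell = 1$ (equivalently, whether $V_1 = V$) and report ``yes'' precisely in that case. Correctness is immediate from \Cref{thm:scc-main}, and the space, pass, and determinism guarantees are inherited from that theorem.

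In fact, one need not execute the full recursive post-processing of \Cref{alg:sccgraph}. It is enough to sort the vertices by indegree, scan them in that order while maintaining the running counter $c = \sum_{u \in S}(\dout(u) - \din(u))$ on the current prefix $S$, and check whether the SCC-cut condition $c = |S| \cdot (n - |S|)$ of \Cref{lem:scc-cut-eqn} is triggered at any proper non-empty prefix $S \subsetneq V$. If it is, then $T$ has a nontrivial SCC-cut and so is not strongly connected; otherwise, $T$ is strongly connected. Since there is no real obstacle — the corollary is an immediate specialization of \Cref{thm:scc-main} — the only thing worth being careful about is that we test the cut condition strictly before the full set $V$ is reached, so as to distinguish the connected case from the trivially satisfied endpoint.
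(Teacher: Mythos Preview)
Your proposal is correct and matches the paper's own argument essentially verbatim: the paper also derives \Cref{cor:strongconnt} by invoking \Cref{thm:scc-main} and observing that $T$ is strongly connected iff the returned SCC-graph has $\ell=1$. Your additional remark about short-circuiting the post-processing is a harmless refinement and does not diverge from the paper's approach.
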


We now prove the space bound for these problems to be tight (up to polylogarithmic factors) even for polylog$(n)$ passes. Note that while our algorithms work for any digraph without non-edges, our lower bounds apply even for the easier versions where the inputs are promised to be tournaments.

\begin{restatable}{theorem}{reachlb}\label{thm:reach-lb}
     Any randomized $p$-pass algorithm that solves \reacht requires $\Omega(n/p)$ space.
\end{restatable}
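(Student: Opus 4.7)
The plan is to reduce from $\disj_{n-2}$, whose two-way randomized communication complexity is $\Omega(n)$ by \Cref{fact:disj-lb}. The standard streaming-to-communication simulation converts a $p$-pass streaming algorithm for \reacht with space $s$ into a $(2p-1)$-round two-party protocol of total cost $(2p-1)s$ bits (Alice simulates the algorithm on her prefix of the stream, sends the $s$-bit memory to Bob for the suffix, and the memory is bounced back and forth across subsequent passes), so the $\Omega(n)$ communication bound will yield the desired $s=\Omega(n/p)$.

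The heart of the argument is constructing a ``hard tournament'' from a $\disj_{n-2}$ instance $(\bx,\by)\in\{0,1\}^{n-2}\times\{0,1\}^{n-2}$. I would build $T(\bx,\by)$ on the $n$ vertices $\{s,t,v_1,\ldots,v_{n-2}\}$ as follows. Alice orients $\{s,v_i\}$ as $s\to v_i$ iff $\bx_i=1$ (else $v_i\to s$); Bob orients $\{v_i,t\}$ as $v_i\to t$ iff $\by_i=1$ (else $t\to v_i$); the edge $\{s,t\}$ is fixed as $t\to s$. Crucially, Alice (who knows $\bx$) also orients every edge among the $v_i$'s: for a pair $\{v_i,v_j\}$ with $\bx_i=0$ and $\bx_j=1$, orient $v_i\to v_j$ (``non-active into active''); among pairs of two actives or two non-actives, use a canonical tie-breaker, e.g., $v_i\to v_j$ iff $i<j$.

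For correctness I would argue as follows. If some $i^*$ has $\bx_{i^*}=\by_{i^*}=1$ (the non-disjoint case), then $s\to v_{i^*}\to t$ is a length-two path, so reach$(s,t)$ holds. Conversely, if $\bx$ and $\by$ are disjoint, then $s$'s out-neighbours are exactly the active vertices $\{v_i:\bx_i=1\}$; from any such $v_i$, the ``non-active into active'' rule forces every out-edge within the $v$-subtournament to land on another active vertex, and the edge $\{v_i,t\}$ is oriented $t\to v_i$ since disjointness forces $\by_i=0$. Hence the reachable set from $s$ is trapped in $\{s\}\cup\{v_i:\bx_i=1\}$ and excludes $t$. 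With this in hand, Alice streams her edges (the $\{s,v_i\}$'s, $\{s,t\}$, and all $\{v_i,v_j\}$'s) and Bob streams his $\{v_i,t\}$'s, while the memory of the streaming algorithm is exchanged at each of the $p$ passes; this realises a $\disj_{n-2}$ protocol of cost $(2p-1)s$, and \Cref{fact:disj-lb} then yields $s=\Omega(n/p)$.

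The main obstacle is ensuring that Alice's asymmetric ``non-active into active'' orientation, combined with the tournament constraint forcing an edge between every pair of vertices, truly blocks every $s$-to-$t$ path in the disjoint case. The key structural observation I will rely on is that the set $U:=\{s\}\cup\{v_i:\bx_i=1\}$ is closed under forward reachability in $T(\bx,\by)$: every edge between an active and a non-active vertex is directed into the active side by construction, and the canonical rule among actives keeps reachability inside $U$. Consequently the only way to leave $U$ is an edge $v_i\to t$ with $v_i$ active, which requires $\bx_i=\by_i=1$ and is thus precluded exactly when the inputs are disjoint.
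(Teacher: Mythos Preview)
Your argument is correct, and it takes a genuinely different---in fact simpler---route than the paper. The paper builds a tournament on $2N+4$ vertices with two parallel ``chains'' $a_1,\ldots,a_{N+1}$ and $b_1,\ldots,b_{N+1}$ where Alice controls the $a$-chain and Bob the $b$-chain, and an intersection bit creates a cut separating $s$ from $t$. Your construction uses only $N+2$ vertices and a single layer of $v_i$'s, with the key trick being that Alice (who knows $\bx$) orients all internal edges so that non-active vertices point into active ones; this makes the active set $\{s\}\cup\{v_i:\bx_i=1\}$ an out-closed set whose only potential exit is via an edge $v_i\to t$, which requires a common index. Your version is more economical and the cut argument is cleaner; the paper's more elaborate gadget pays off elsewhere, since its construction is reused with small modifications to prove the \sconnt lower bound (\Cref{thm:strongconn-lb}), whereas your asymmetric construction (where Alice owns almost all edges) does not adapt as directly to strong connectivity.
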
 

 \begin{proof}

    \begin{figure}[H]
        \centering
        
        \includegraphics[scale=0.7]{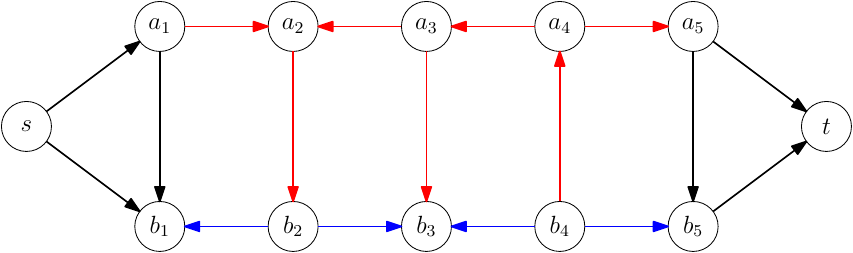}
        
        \caption{The graph constructed in the lower bound for \reacht for $N=4$ when Alice has the vector $\bx=0110$ and Bob has the vector $\by=1010$. Edges that Alice (resp. Bob) adds are colored red (resp. blue), and input-independent edges are colored black. Other edges independent of the input have been excluded for clarity, and always go from right to left.}
        
        \label{fig:stconn-t-lb}
    \end{figure}

        We reduce from $\disj_N$ (see \Cref{sec:prelims} for definition). Alice and Bob construct a tournament $T = (V, E)$ with $n = 2N+4$ vertices, where vertices are labeled $\{s, a_1, b_1, \ldots, a_{N+1}, b_{N+1}, t\}$ (see \Cref{fig:stconn-t-lb}). Alice, based on her input $\bx$, adds edge $(a_{i+1}, a_i)$ if $\bx_i = 1$, and edge $(a_i, a_{i+1})$ if $\bx_i = 0$. Similarly, Bob, based on his input $\by$, adds edge $(b_{i+1}, b_i)$ if $\by_i = 1$, and edge $(b_i, b_{i+1})$ if $\by_i = 0$. Additionally, for $2 \leq i \leq N$, Alice adds edge $(a_i, b_i)$ if $\bx_i = 1$, and edge $(b_i, a_i)$ if $\bx_i = 0$. Finally, independent of the inputs, one player (say Bob) adds edges $(s, a_1), (s, b_1), (a_1, b_1), (a_{N+1}, b_{N+1}), (a_{N+1}, t)$ and $(b_{N+1}, t)$. He also adds edges from $\{a_i, b_i\}$ to $s$ for $i \neq 1$, from $t$ to $\{a_i, b_i\}$ for $i \neq N+1$, from $t$ to $s$, and from $a_i$ to $b_j$ and $b_i$ to $a_j$ for all $j < i$. This completes the construction of $T$. It is easy to check that $T$ is a tournament.  

        Let us show that $T$ contains a path from $s$ to $t$ if and only if $\bx$ and $\by$ are disjoint as sets. First suppose that $\bx$ and $\by$ are disjoint. We will use induction to show that we can get to either $a_i$ or $b_i$ for all $i$. As the base case, note that we can get from $s$ to both $a_1$ and $b_1$. Assume that we can get from $s$ to either $a_i$ or $b_i$. Since $\bx$ and $\by$ are disjoint, we know that either $\bx_i = 0$ or $\by_i = 0$. Thus, there is either an edge from $a_i$ to $a_{i+1}$ or from $b_i$ to $b_{i+1}$. If we can get to $a_i$ or $b_i$ and have edge $(a_i, a_{i+1})$ or edge $(b_i, b_{i+1})$ respectively, our inductive hypothesis is satisfied. Next, consider the case where we can get to $a_i$ but do not have the edge $(a_i, a_{i+1})$. In this case, we know that $\bx_i = 1$, so Alice must have added the edge $(a_i, b_i)$. Thus, we can get to $b_i$, and there must be an edge from $b_i$ to $b_{i+1}$, satisfying the inductive hypothesis. This leaves the case where we can get to $b_i$ but do not have the edge $(b_i, b_{i+1})$. This implies that $\by_i = 1$, so $\bx_i = 0$. Thus, Alice must have added the edge $(b_i, a_i)$ and $(a_i, a_{i+1})$, satisfying the inductive hypothesis. By induction, we can conclude that there is a path from $s$ to either $a_{N+1}$ or $b_{N+1}$, and we know that there is an edge from both $a_{N+1}$ and $b_{N+1}$ to $t$, so there is a path from $s$ to $t$. 

        For the other direction, assume that $\bx$ and $\by$ are not disjoint, and let $i$ be the index at which they intersect. Consider the cut in $T$ that partitions the vertices into subsets $L = \{s, a_1, b_1, \ldots, a_i, b_i\}$ and $R = \{a_{i+1}, b_{i+1}, \ldots, a_{N+1}, b_{N+1}, t\}$. Note that there are no edges from $L$ to $R$; Alice adds the edge $(a_{i+1}, a_i)$, Bob adds the edge $(b_{i+1}, b_i)$, and every other edge in the cut is input-independent and goes from $R$ to $L$. Thus, there is no path from $s$ to $t$. 

        This implies that we can use a streaming algorithm for \reacht on $n = 2N+4$ vertices to solve $\disj_N$. Since $\disj_N$ requires $\Omega(N)$ total bits of communication (see \Cref{fact:disj-lb}), \reacht must require $\Omega(n/p)$ space in $p$ passes.
    \end{proof}

We obtain a similar lower bound for \sconnt.

 \begin{restatable}{theorem}{sclb}\label{thm:strongconn-lb}
     Any randomized $p$-pass algorithm that solves \sconnt requires $\Omega(n/p)$ space.
\end{restatable}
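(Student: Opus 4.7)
The plan is to reduce from $\disj_N$ using a tournament construction very similar to that of \Cref{thm:reach-lb}, with two small modifications: replace the input-independent edges $(s, b_1)$ and $(a_{N+1}, t)$ by their reverses $(b_1, s)$ and $(t, a_{N+1})$, respectively. These reversals ensure that the boundary vertices $b_1$ and $a_{N+1}$ cannot become a sink or source (they would otherwise be such when $\by_1 = 1$ or $\bx_N = 1$, since the remaining default edges incident on them all point the wrong way), which would immediately prevent strong connectivity regardless of whether the inputs are disjoint. Crucially, both reversed edges are internal to a single side of every cut we will use below (because $s, b_1 \in L$ always and $t, a_{N+1} \in R$ always), so they do not affect the cut analysis.

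For the non-disjoint direction, I would reuse the cut argument from the proof of \Cref{thm:reach-lb} essentially verbatim: if $\bx, \by$ intersect at some index $i \in [N]$, then for $L = \{s, a_1, b_1, \ldots, a_i, b_i\}$ and $R = V \setminus L$, every edge crossing $(L, R)$ is oriented from $R$ to $L$. The two modified edges are internal and do not violate this property. Hence $s \in L$ cannot reach $t \in R$, so $T$ is not strongly connected.

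For the disjoint direction, I would show every vertex lies in a single SCC $C$ containing both $s$ and $t$. By essentially the same induction as in \Cref{thm:reach-lb}, $s$ reaches one of $a_{N+1}, b_{N+1}$, and since $a_{N+1} \to b_{N+1}$ and $b_{N+1} \to t$ are present, $s$ reaches $t$; combined with the direct edge $t \to s$, this gives $s, t \in C$. Interior vertices $a_i, b_i$ with $2 \le i \le N$ lie in $C$ via the direct edges $t \to v$ and $v \to s$. The boundary vertices are handled by short paths: $s \to a_1$ and $a_1 \to b_1$ give $s$-reachability of $a_1, b_1$, while $b_1 \to s$ (new) and an $a_1 \to s$ path through either $a_2$ (if $\bx_1 = 0$) or $b_1$ (if $\bx_1 = 1$, which forces $\by_1 = 0$ by disjointness, so $b_1 \to b_2 \to s$) complete the return paths; symmetrically, $t \to a_{N+1}$ (new), together with $a_{N+1} \to b_{N+1} \to t$, places $a_{N+1}, b_{N+1}$ in $C$.

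The main obstacle is the simultaneous verification that the two reversed edges (i) restore strong connectivity in every disjoint instance via case analysis on the boundary input bits, and (ii) do not introduce any new $L \to R$ edge for any valid intersection index. Once correctness is established, a $p$-pass streaming algorithm for \sconnt using space $S$ yields a $(2p-1)$-round randomized protocol for $\disj_N$ of cost $O(pS)$, so by \Cref{fact:disj-lb} we obtain $S = \Omega(N/p) = \Omega(n/p)$, since our construction uses $n = 2N + 4$ vertices.
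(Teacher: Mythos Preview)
Your proposal is correct and follows essentially the same approach as the paper: reduce from $\disj_N$ by lightly modifying the \reacht construction so that the boundary vertices $a_1,b_1,a_{N+1},b_{N+1}$ cannot become accidental sources or sinks in the disjoint case, while leaving the $(L,R)$-cut argument for the non-disjoint case intact. The only difference is the specific fix: the paper makes the edges $\{a_1,b_1\}$ and $\{a_{N+1},b_{N+1}\}$ depend on Bob's bits $\by_1,\by_N$, whereas you reverse the input-independent edges $(s,b_1)$ and $(a_{N+1},t)$; your variant is arguably cleaner since all modifications remain input-independent (and note that when $\bx_1=1$ you can take the shorter return path $a_1\to b_1\to s$ directly via your new edge, without invoking disjointness).
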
 

\begin{proof}
        The construction of the graph $G$ is very similar to that in the \reach lower bound (proof of \Cref{thm:reach-lb}), with one minor modification. If $\by_1 = 0$, which is when Bob adds edge $(b_1, b_2)$, he also adds edge $(a_1, b_1)$. On the other hand, if $\by_1 = 1$, which is when Bob adds edge $(b_2, b_1)$, he also adds edge $(b_1, a_1)$. Similarly, if $\by_N = 0$, which is when Bob adds edge $(b_N, b_{N+1})$, he also adds edge $(b_{N+1}, a_{N+1})$, and if $\by_N = 1$, which is when Bob adds edge $(b_{N+1}, b_N)$, he also adds edge $(a_{N+1}, b_{N+1})$. Note that these edges were fixed in the earlier construction. The reason for this modification is to avoid accidental sources or sinks, as will become more clear further into the proof. 
        
        Let us show that $G$ is strongly connected 
        if and only if $\bx$ and $\by$ are disjoint. 
        Assume that $\bx$ and $\by$ are disjoint. As 
        shown in the proof above, this implies that 
        there is a path from $s$ to $t$ in $G$. 
        Moreover, we know by our construction that 
        there is an edge from $t$ to $s$. Thus, $s$ 
        and $t$ are in a single SCC. Next, note that 
        for indices $2 \leq i \leq N$, both $a_i$ 
        and $b_i$ have an edge from $t$ and to $s$. 
        Thus, all such vertices are also in the same 
        SCC as $s$ and $t$. It remains to prove that $a_1,b_1,a_{N+1},b_{N+1}$ are in the same SCC.

        We have either $(b_1,b_2)\in G$ or $(b_2,b_1)\in G$. First consider the former case. Since the edges $(s,b_1), (b_1,b_2), (b_2,s) \in G$, we conclude that $b_1$ is in the same SCC as $s$ and $b_2$. Again, due to the modification in our 
        construction stated at the beginning of the 
        proof, $G$ has the edge $(a_1, 
        b_1)$. Since $(s,a_1), (a_1,b_1)\in G$ and $s$ and $b_1$ are in the same SCC, so is $a_1$. In the other case when $(b_2,b_1)\in G$, by construction we know that $\by_1= 1$. Since $\bx$ and $\by$ are disjoint, $\bx_1=0$ and hence $(a_1,a_2)\in G$. By similar logic as above, we first derive that $a_1$ is in the same SCC as $s$, and since $(b_1,a_1)\in G$ in this case (modification in this proof), we get that so is $b_1$.
        
        Finally, let us consider vertices $a_{N+1}$ and $b_{N+1}$, for which a symmetric logic holds. Both $a_{N+1}$ and $b_{N+1}$ have outgoing edges to $t$. $G$ contains either $(b_N, b_{N+1})$ or $(b_{N+1}, b_{N})$. In the former case, we get $b_{N+1}$ is a part of the same SCC, and since $G$ contains $(b_{N+1}, a_{N+1})$, so is $a_{N+1}$. In the other case when $(b_{N+1}, b_{N})$, we know $\by=1$ and so $\bx=0$. Hence, $G$ contains $(a_N, a_{N+1})$, which first tells us that $a_{N+1}$ is in the same SCC. Then, since $(a_{N+1}, b_{N+1})\in G$, so is  $b_{N+1}$. Thus, we see that every vertex of $G$ is in the same SCC, so $G$ is strongly connected.

        For the reverse direction, assume that $\bx$ and $\by$ are not disjoint. As shown in the proof of \Cref{thm:reach-lb}, this implies that there is no path from $s$ to $t$, so by definition $G$ is not strongly connected. 

        This implies that we can use a streaming algorithm for checking strong connectivity to solve $\disj_N$. Since $\disj_N$ requires $\Omega(N)$ total bits of communication (see \Cref{fact:disj-lb}), \sconn must require $\Omega(n/p)$ space in $p$ passes.
    \end{proof}

\subsubsection{Tight Lower Bounds for Arbitrary Digraphs}\label{subsec:arbdig-lb} 

Observe that \Cref{thm:scc-gen} immediately implies an $\tO(n+k)$-space algorithm for \sconn and \reach on digraphs that are $k$-close to tournaments. Here, we prove that the space bound is tight (up to polylogarithmic factors) for these problems. Hence, the same lower bound applies to \sccdec, as we noted in \Cref{cor:scc-gen-lb}. This demonstrates how the complexities of the \sccdec, \sconn, and \reach problems smoothly increase with the ``distance'' from the tournament property.

\begin{restatable}{theorem}{reachgenlb}\label{thm:reach-gen-lb}
     A single-pass streaming algorithm that solves \reach or \sconn on any digraph that is $k$-close to tournament requires $\Omega(n+k)$ space. 
\end{restatable}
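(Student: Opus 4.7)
The plan is to decompose the $\Omega(n+k)$ bound into an $\Omega(n)$ part and an $\Omega(k)$ part. The $\Omega(n)$ part is free: every tournament is trivially $k$-close to a tournament for every $k \geq 0$, so any single-pass streaming algorithm for \reach or \sconn on $k$-close-to-tournament digraphs must in particular solve \reacht or \sconnt, and hence \Cref{thm:reach-lb} and \Cref{thm:strongconn-lb} immediately yield $\Omega(n)$.

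For the $\Omega(k)$ part, I would reduce from the known $\Omega(m^2)$-space single-pass lower bound for \reach on general $m$-vertex digraphs. Choose $m = \lfloor \sqrt{2k} \rfloor$, which lies in $[1,n]$ since $k \leq \binom{n}{2}$. Given an arbitrary $m$-vertex digraph $H = (V_H, E_H)$ with source $s_H$ and sink $t_H$, I would build an $n$-vertex graph $G$ by placing $V_H$ together with $n-m$ dummy vertices $D$, arranging edges so that every non-edge and every bidirected edge of $G$ lies strictly inside $V_H$. Since $V_H$ has only $\binom{m}{2} \leq k$ unordered pairs, the total deviation count is at most $k$, i.e., $G$ is $k$-close to a tournament.

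For the \reach case, keep the induced subgraph on $V_H$ identical to $H$ (with non-edges for missing pairs and bidirected edges where $H$ has both orientations), orient every $V_H$-to-$D$ edge from $V_H$ to $D$, and fill $D$ with an arbitrary tournament. The dummies then behave as pure sinks, so $s_H$'s reach set in $G$ restricted to $V_H$ coincides with its reach set in $H$, giving the required equivalence. For \sconn, I would first augment $H$ into $H'$ by adding every edge $v \to s_H$ and $t_H \to v$ for $v \in V_H$; standard reasoning then gives that $H'$ is strongly connected iff $s_H$ reaches $t_H$ in $H$. I would then embed $H'$ into $G$ as above, but route the dummies on a directed cycle through $s_H$, namely $s_H \to d_1 \to d_2 \to \cdots \to d_{n-m} \to s_H$, with the remaining $V_H$-to-$D$ and intra-$D$ edges chosen to complete a tournament.

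A short case analysis for \sconn will then show that every vertex of $V_H$ reaches $s_H$ in $G$ (directly via the augmentation edges), $s_H$ reaches all of $D$ through the cycle, and $s_H$ reaches a vertex $u \in V_H$ in $G$ iff it does in $H'$; hence $G$ is strongly connected iff $s_H$'s reach set in $H'$ is all of $V_H$, i.e., iff $s_H$ reaches $t_H$ in $H$. The main obstacle is exactly this last equivalence: one must ensure that the dummy cycle does not artificially merge SCCs of $H'$, and routing the cycle specifically through $s_H$---the vertex that, after augmentation, is reached by every vertex of $V_H$---is crucial for this. Once both reductions are in place, the $\Omega(m^2) = \Omega(k)$ bound on general-digraph reach transfers directly, and combined with the $\Omega(n)$ piece we obtain the claimed $\Omega(n+k)$ lower bound.
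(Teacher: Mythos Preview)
Your proof is correct and takes a genuinely different route from the paper's.

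For the $\Omega(n)$ piece, both you and the paper invoke the tournament lower bounds (\Cref{thm:reach-lb}, \Cref{thm:strongconn-lb}), so that part coincides.

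For the $\Omega(k)$ piece the approaches diverge. The paper reduces directly from $\spidx_{N^2,k}$ with a tailored bipartite construction on $2N+2$ vertices: Alice inserts the edge $(u_a,v_b)$ iff $\bx_{a,b}=0$, so the non-edges of the resulting graph are exactly the (at most $k$) ones of the sparse vector, and Bob's handful of extra edges makes $s$--$t$ reachability equivalent to $\bx_{i,j}=0$. A single construction then handles both \reach and \sconn. Your approach is instead a black-box embedding: take any hard $m$-vertex instance of general-digraph \reach with $m=\Theta(\sqrt{k})$, pad with $n-m$ dummy vertices arranged as a tournament against $V_H$, and observe that all deviations from the tournament property live inside the $\binom{m}{2}\le k$ pairs of $V_H$. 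This is more modular---no new hard-instance analysis---at the price of a somewhat more delicate \sconn argument (the augmentation $H\to H'$ plus the dummy cycle through $s_H$). The paper's version is more self-contained and yields exactly the right non-edge count; yours recycles existing machinery and makes clear conceptually why ``distance from tournament'' is the right parameter.

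Two small points to tighten: (i) in your \sconn construction, say explicitly that every remaining $V_H$--$D$ pair is oriented from $V_H$ to $D$, so the only $D\to V_H$ edge is the cycle edge $d_{n-m}\to s_H$; the phrase ``chosen to complete a tournament'' leaves this open, and the wrong choice would let dummies shortcut into $V_H\setminus\{s_H\}$ and break the equivalence. (ii) The augmentation edges $v\to s_H$ and $t_H\to v$ could in principle duplicate edges already in $H$; in the standard layered hard instance for \reach this does not happen (there $s$ has no in-edges and $t$ no out-edges), so it suffices to point to that instance rather than claim a fully black-box reduction.
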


\begin{proof}

    \begin{figure}[H]
        \centering
        
        \includegraphics[scale=0.8]{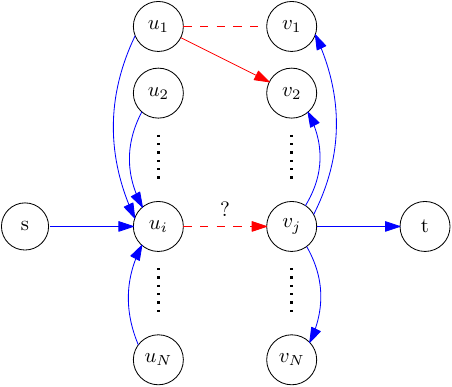}
        
        \caption{The graph constructed in the lower bound for \reach and \sconn where $x_{1, 1} = 1$ and $x_{1, 2} = 0$. Edges that Alice (resp. Bob) adds are colored red (resp. blue). The dotted lines represent the non-edges. All other edges have been excluded for clarity; edges incident on $s$ or $t$ go from right to left, and edges between $u_a$'s where $a \neq i$ are oriented arbitrarily, as are edges between $v_b$'s for $b \neq j$.}
        
        \label{fig:stconn-lb}
    \end{figure}

    We reduce from $\spidx_{N^2, k}$. The same construction will work for both \reach and \sconn. For the rest of the proof we use a canonical bijection from $[N^2]$ to $[N] \times [N]$, indexing by pairs $(i, j)$ where $i, j \in [N]$. Alice and Bob construct a graph $G = (V, E)$ on $n = 2N+2$ vertices (See \Cref{fig:stconn-lb}). Let us label these vertices $s, u_1, ..., u_{N}, v_1, ..., v_{N}, t$. Given a vector $\bx$, for each $a,b\in [N]$, Alice adds the edge $(u_a, v_b)$ if $\bx_{a, b} = 0$. Given an index $(i, j)$, Bob adds the following edges.
    \begin{itemize}
        \item $(s, u_i)$, $(v_j, t)$, $(v_j, s)$, $(t, u_i)$, and $(t, s)$.
        \item $(u_{i'}, s)$, $(t, u_{i'})$ and $(u_{i'}, u_i)$ for all $i' \neq i$ and $(v_{j'}, s)$, $(t, v_{j'})$ and $(v_j, v_{j'})$ for all $j' \neq j$.
        \item $(u_{i'}, u_{i''})$ for all $i' < i''$ where $i', i'' \neq i$ and $(v_{j'}, v_{j''})$ for all $j' < j''$ where $j', j'' \neq j$. 
    \end{itemize}
    This completes the construction of graph $G$, and it is easy to verify that the number of unordered pairs of vertices $u, v \in V$ such that $(u, v) \not \in E$ and $(v, u) \not \in E$ is precisely $k$, the number of 1s in $\bx$. If there is a path from $s$ to $t$ in $G$, Bob will announce that the value at the index is 0, and if there is not he will announce that the value at the index is 1. 

    Let us first show that $G$ contains a path from $s$ to $t$ if and only if $\bx_{i, j} = 0$. If $\bx_{i, j} = 0$, we know that there is a path from $s$ to $t$ using the edges $(s, u_i)$, $(u_i, v_j)$ and $(v_j, t)$. Next, consider the case where $\bx_{i, j} = 1$. By construction, we know that the only edge out of $s$ is to vertex $u_i$ and the only edge to vertex $t$ is from $v_j$. Thus, it suffices to show that there is no path from $u_i$ to $v_j$. By construction, we know that the only edges out of $u_i$ are of the form $(u_i, v_{j'})$ for some $j' \neq j$. However, we also know that the only edges out of all such $v_{j'}$ are either to $v_{j''}$, where $j'' \neq j$, or to $s$, which we have already considered. Thus, there is no path from $u_i$ to $v_j$, and therefore no path from $s$ to $t$. 

    This implies that we can use a streaming algorithm for \reach on a graph $G$ with $n = 2N+2$ vertices and $k$ non-edges to solve $\spidx_{N^2, k}$. Since $\spidx_{N^2, k}$ requires $\Omega(k)$ bits of one-way communication (see \Cref{fact:sparse-index}), \reach must require $\Omega(k)$ space in a single pass. For $k \leq n$, the construction in \Cref{thm:reach-lb} gives an $\Omega(n)$ lower bound since the graph can be modified to have $k$ non-edges by removing some of the input-independent edges that go from right to left, following which the reduction still works. Thus, \reach on digraphs $k$-close to tournaments requires $\Omega(n+k)$ space in a single pass.

    To get the same bound for \sconn, note that $G$ is strongly connected if and only if $\bx_{i, j} = 0$. If $\bx_{i, j} = 1$, there is no path from $s$ to $t$ (as shown above), meaning that $G$ is not strongly connected. Otherwise, if $\bx_{i, j} = 0$, then we have the cycle $(s,u_i,v_j,t,s)$. Hence, $s, u_i, v_j, t$ are in the same SCC. For any other vertex $w$ in $G$, observe that there is an edge from $t$ to $w$ and from $w$ to $s$. Hence, every such $w$ is in the same SCC as well, i.e., $G$ is strongly connected. Thus, similar to \reach, we apply \Cref{fact:sparse-index} to get a single-pass lower bound of $\Omega(k)$ and, for $k< n$, modify the graph in the proof of \Cref{thm:strongconn-lb}  to have $k$ non-edges and apply the same reduction arguments to obtain a lower bound of $\Omega(n)$. Hence, the lower bound for \sconn on digraphs $k$-close to tournaments is strengthened to $\Omega(n+k)$.
\end{proof}

\subsubsection{Improved and ``Optimal'' Time Complexity}\label{subsec:time} 

Recall that our \sccdec algorithm for digraphs $k$-close to tournaments (\Cref{thm:scc-gen}) takes $\tO(2^n)$ time. We show that this runtime can be improved for small $k$ when solving \sconn and \reach (rather than the general \sccdec problem): our modified algorithms use the same space but take $\tO(2^k\cdot n)$ time if $k<n$. Subsequently, we show that the runtime of this algorithm is optimal in some sense (see \Cref{thm:sconn-oracle-lb}). 

\mypar{Faster Algorithms} First we prove the following theorem.

\begin{restatable}{theorem}{reachgenub}\label{thm:reach-gen-ub}
    Given an input $n$-node digraph $G = (V, E)$ that is $k$-close to a tournament, there are deterministic single-pass streaming algorithms that solve \reach and \sconn on $G$ using $\tO(n+k)$ space and $\tO(2^{\min(k, n)})$ time for each problem. 
\end{restatable}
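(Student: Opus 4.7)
\medskip\noindent\textbf{Proof plan.} My plan is to extend the streaming setup of \Cref{thm:scc-gen} but, for the specific problems \reach and \sconn, to replace its $2^n$-subset enumeration in post-processing with a $2^{k_2}$-enumeration over \emph{orientations of the non-edges} (where $k_2$ is the number of non-edges of $G$, so $k_2 \le k$). During the stream I would maintain exactly the same state as in \Cref{thm:scc-gen}: the degrees $\din(v), \dout(v)$ for all $v$, together with a sparse-recovery sketch that, at the end of the stream, reveals the $\le k$ problematic pairs $P$ (entries equal to $+1$ are non-edges and those equal to $-1$ are bidirected pairs). By \Cref{fact:sp-rec} this uses $\tO(n+k)$ space.

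The structural claim that justifies the reduction is the following: \emph{$s$ reaches $t$ in $G$ iff $s$ reaches $t$ in $G'$ for every orientation of the non-edges of $G$, where $G'$ denotes the resulting digraph with no non-edges; an analogous equivalence holds for strong connectivity.} The ``only if'' direction is immediate, since orienting a non-edge only adds an edge and hence every $s$-to-$t$ path in $G$ survives in $G'$. For ``if'', let $R$ be the set of vertices reachable from $s$ in $G$; if $t \notin R$, orient every non-edge $\{u,v\}$ with $u \in R$, $v \notin R$ as $(v,u)$ and orient the remaining non-edges arbitrarily. Then no edge of $G'$ leaves $R$ (edges of $G$ do not, by definition of $R$, and the newly oriented non-edges do not, by construction), so $s$ still fails to reach $t$ in $G'$. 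The \sconn statement follows by applying the \reach statement to any violating pair.

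Given the claim, the post-processing would enumerate the $2^{k_2} \le 2^k$ orientations of the non-edges in $P$. For each orientation I would temporarily adjust the stored degrees (for each newly oriented edge $(u,v)$, increment $\dout(u)$ and $\din(v)$ by $1$) and invoke the post-processing phase of \Cref{alg:sccgraph}---which depends only on the degree sequence---to obtain in $\tO(n)$ time the SCC-ordering of the resulting digraph~$G'$. From this ordering I answer \reach (compare SCC indices of $s$ and $t$) or \sconn (check whether there is a single SCC), and I output YES iff every orientation agrees YES, which is correct by the structural claim. The space remains $\tO(n+k)$, and the running time is $2^{k_2}\cdot\tO(n) = \tO(2^k)$ when $k \le n$. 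When $k > n$, I would instead invoke the $\tO(2^n)$-time algorithm of \Cref{thm:scc-gen}, which already computes the full SCC-graph and hence resolves both \reach and \sconn; combining the two cases yields the claimed bound $\tO(2^{\min(k,n)})$.

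I expect the main technical step to be the reverse direction of the structural claim: the argument must commit to a concrete orientation of each non-edge crossing the cut $(R, V \setminus R)$ and must also verify that non-edges with both endpoints on the same side of this cut cannot ``leak'' $t$ into the reach set regardless of how they are oriented---this in turn follows because any escape from $R$ would require an edge leaving $R$, which by construction no oriented non-edge provides. Once this lemma is established, the algorithmic wrap-up is straightforward given \Cref{alg:sccgraph} and the streaming infrastructure already built for \Cref{thm:scc-gen}.
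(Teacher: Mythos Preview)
Your proposal is correct and follows essentially the same algorithmic approach as the paper: maintain degrees and a sparse-recovery sketch during the stream, recover the set of non-edges, enumerate all $2^{|F|}$ completions, run \Cref{alg:sccgraph}'s post-processing on each, and answer YES iff every completion does; fall back to \Cref{thm:scc-gen} when $k>n$.

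The one noteworthy difference is in how you justify the structural claim (the paper's \Cref{lem:streach}). The paper proves it via a topological-ordering detour: it first shows (\Cref{lem:toporder}) that in any DAG, if $t$ is unreachable from $s$ then some topological order places $t$ before $s$, applies this to $G_\scc$, and then orients each non-edge forward with respect to that order. Your argument is more direct and more elementary: take $R$ to be the reach-set of $s$ in $G$ and orient every non-edge crossing $(R,V\setminus R)$ into $R$. This avoids the auxiliary lemma entirely and makes the ``no edge leaves $R$'' conclusion immediate. Both arguments yield the same completion-based characterization, so the algorithm and its analysis are unchanged; your route just trims one lemma.
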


We define the \emph{completion} of a digraph that helps in formalizing the proof. Informally, a completion of a digraph $G$ is any digraph $G'$ that is obtained by filling each non-edge of $G$ with an arbitrary edge. Thus, $G'$ has no non-edge. In particular, if $G$ is a tournament, so is $G'$. 

\begin{definition}[completion of a digraph]
    Given any digraph $G=(V,E)$, a completion $G'=(V,E')$ of $G$ is a digraph such that $E'\supseteq E$ and for every non-edge $f=\{u,v\}\in G$, exactly one of the edges $(u,v)$ and $(v,u)$ is in $E'$.
\end{definition}

We begin by proving the following series of lemmas.

\begin{lemma}\label{lem:toporder}
    Given a DAG $G = (V, E)$ with $|V|=n$ and a pair of vertices $s, t \in V$, if $t$ is not reachable from $s$, then there exists a topological ordering $\sigma : V \to [n]$ such that $\sigma(t) < \sigma(s)$. 
\end{lemma}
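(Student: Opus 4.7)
The plan is to construct an explicit topological ordering that places $t$ before $s$. Let $R \subseteq V$ denote the set of all vertices reachable from $s$ in $G$, including $s$ itself. By hypothesis, $t \notin R$.

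The key observation is that $R$ is \emph{closed under outgoing edges}: if $u \in R$ and $(u,v) \in E$, then $v$ is reachable from $s$ (via the path to $u$ followed by the edge $(u,v)$), so $v \in R$. Equivalently, there are no edges in $E$ from $R$ to $V \setminus R$; every edge crossing the cut $(V \setminus R,\, R)$ goes from $V \setminus R$ into $R$.

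Now I would build $\sigma$ as follows. Since $G$ is a DAG, the induced subgraphs $G[V \setminus R]$ and $G[R]$ are also DAGs, and hence each admits a topological ordering. Let $\sigma_1$ be a topological ordering of $G[V \setminus R]$ using the indices $\{1, \ldots, |V \setminus R|\}$, and let $\sigma_2$ be a topological ordering of $G[R]$ using the indices $\{|V \setminus R|+1, \ldots, n\}$. Define $\sigma$ by $\sigma(v) = \sigma_1(v)$ if $v \in V \setminus R$ and $\sigma(v) = \sigma_2(v)$ if $v \in R$. To verify that $\sigma$ is a topological ordering of $G$, take any edge $(u,v) \in E$: if $u$ and $v$ lie in the same part, the edge is respected by $\sigma_1$ or $\sigma_2$; if they lie in different parts, then by the closure property above we must have $u \in V \setminus R$ and $v \in R$, so $\sigma(u) \le |V \setminus R| < \sigma(v)$.

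Finally, since $t \in V \setminus R$ and $s \in R$, we get $\sigma(t) \le |V \setminus R| < \sigma(s)$, as required. There is no real obstacle here; the only thing to be careful about is checking that the concatenation of the two topological orderings really does respect all edges of $G$, which is exactly where the closure of $R$ under outgoing edges is used.
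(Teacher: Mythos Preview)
Your proof is correct and follows essentially the same strategy as the paper: partition $V$ according to reachability, argue that no edges cross the partition in the wrong direction, and concatenate topological orderings of the parts. The only difference is cosmetic: the paper uses a three-way split $\langle \mathrm{IN}(t),\, \mathrm{REM},\, \mathrm{OUT}(s)\rangle$ (vertices that can reach $t$, the remainder, and vertices reachable from $s$), whereas you use the simpler two-way split $\langle V\setminus R,\, R\rangle$ with $R=\mathrm{OUT}(s)$. Your version is slightly more economical, since the set $\mathrm{IN}(t)$ is not actually needed for the conclusion.
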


\begin{proof}
    Assume that there is no path from $s$ to $t$. Let $\text{IN}(t) \subseteq V$ be the vertices from which $t$ is reachable, and let $\text{OUT}(s) \subseteq V$ be the vertices which are reachable from $s$. Note that $t \in \text{IN}(t)$ and $s \in \text{OUT}(s)$. Also, since $t$ is not reachable from $s$, $\text{IN}(t) \cap \text{OUT}(s) = \emptyset$. Let $\text{REM} = V \setminus (\text{IN}(t) \sqcup \text{OUT}(s))$.
    
    Let us show that $\langle \text{IN}(t), \text{REM}, \text{OUT}(s)\rangle$ is a valid topological ordering of $G$, where each subset is ordered as a topological ordering of its induced subgraph. There can not be an edge from a vertex $u \in \text{REM}$ to a vertex $v \in \text{IN}(t)$, since then $u$ would be in $\text{IN}(t)$. Similarly, there can not be an edge from a vertex $u \in \text{OUT}(s)$ to a vertex $v \in \text{REM}$, since then $v$ would be in $\text{OUT}(s)$. Finally, there can not be an edge from a vertex $u \in \text{OUT}(s)$ to a vertex $v \in \text{IN}(t)$, since then there would be a path from $s$ to $t$ via this edge. Thus, this is a valid topological ordering in which $t$ appears before $s$, so such a topological ordering must exist if $t$ is not reachable from $s$.
\end{proof}

\begin{lemma}\label{lem:streach}
    Given a digraph $G = (V, E)$ and a pair of vertices $s, t \in V$, the node $t$ is reachable from the node $s$ in $G$ if and only if $t$ is reachable from $s$ in all possible completions $G' = (V, E')$ of $G$. 
\end{lemma}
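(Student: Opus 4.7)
The plan is to prove both directions; one is trivial and the other reduces to constructing a single ``bad'' completion. In the forward direction, suppose $t$ is reachable from $s$ in $G$ via a directed path $P$. For any completion $G'=(V,E')$, we have $E' \supseteq E$, so $P$ is also a path in $G'$, and $t$ is reachable from $s$ in $G'$.

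For the reverse direction I would argue the contrapositive: if $t$ is not reachable from $s$ in $G$, I will exhibit one completion $G'$ in which $t$ remains unreachable from $s$. Let $R \subseteq V$ be the set of vertices reachable from $s$ in $G$ (so $s \in R$ and $t \notin R$). The key observation is that in $G$ there is no edge from $R$ to $V \setminus R$ (by definition of $R$), so $R$ is ``outward-closed.'' I will construct $G'$ so that this outward-closedness is preserved. Concretely, for each non-edge $\{u,v\}$ of $G$, I orient it in $G'$ as follows: if $u \in R$ and $v \notin R$, add $(v,u)$; if $u \notin R$ and $v \in R$, add $(u,v)$; and if both endpoints lie in $R$ or both lie in $V \setminus R$, orient the edge arbitrarily.

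Given this construction, I would verify that the set of vertices reachable from $s$ in $G'$ is still exactly $R$. Since every edge of $G$ is retained, $R$ is contained in the reachable set. For the reverse inclusion, note that no edge of $G'$ goes from $R$ to $V\setminus R$: edges originally in $E$ have this property by the definition of $R$, and the non-edges between $R$ and $V\setminus R$ were oriented from $V\setminus R$ into $R$ by construction. Hence any walk from $s$ stays inside $R$, so the reachable set in $G'$ equals $R$. Because $t \notin R$, the vertex $t$ is not reachable from $s$ in this $G'$, completing the contrapositive.

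The only genuine content is the careful orientation rule and the closedness argument; I do not expect any real obstacle, since Lemma~\ref{lem:toporder} is not even needed for this particular statement (it is a strictly combinatorial fact about reachability, not about topological orderings). The proof above will be short and direct.
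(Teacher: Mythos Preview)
Your proof is correct and is actually more elementary than the paper's. Both directions are handled the same way in the forward case, and both argue the contrapositive for the reverse. The difference is in how the ``bad'' completion is built: the paper passes to the SCC-graph $G_{\scc}$, invokes Lemma~\ref{lem:toporder} to get a topological ordering $\sigma$ with $\sigma(\scc(t)) < \sigma(\scc(s))$, and then orients every non-edge forward with respect to $\sigma$; this guarantees $\sigma$ remains a topological ordering of $G'_{\scc}$, so no $s$--$t$ path can appear. You instead take the reachable set $R$ from $s$, observe it is outward-closed, and orient all cross non-edges from $V\setminus R$ into $R$. Your argument avoids both the SCC machinery and Lemma~\ref{lem:toporder} entirely, and the verification that $R$ stays closed is immediate. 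The paper's route has the mild advantage that the constructed completion comes with a global topological order on SCCs (which could be reused), but for this lemma alone your approach is cleaner and shorter.
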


\begin{proof}
    First, note that if there is a path from $s$ to $t$ in $G$, then this path must be also present in all $G'$ since $E' \supseteq E$. Thus, let us show that the other direction also holds. Assume that $G$ does not contain a path from $s$ to $t$. For a vertex $v \in V$, let $\scc(v)$ be the SCC of $G$ containing $v$. Since $G$ does not contain a path from $s$ to $t$, we know that $G_{\scc}$ does not contain a path from $\scc(s)$ to $\scc(t)$. Thus, since $G_{\scc}$ is acyclic, by \Cref{lem:toporder} there is a topological ordering $\sigma$ of $G_{\scc}$ such that $\sigma(\scc(t)) < \sigma(\scc(s))$. Consider the completion $G' = (V, E')$ obtained as follows. For a pair of vertices $\{u, v\} \in F$, where $F$ is the set of $G$'s non-edges, add $(u, v) \in E'$ if $\sigma(\scc(u)) \leq \sigma(\scc(v))$ and $(v, u) \in E'$ if $\sigma(\scc(v)) \leq \sigma(\scc(u))$. Note that none of the additional edges are back edges according to $\sigma$, implying that $\sigma$ is also a valid topological ordering of $G'_{\scc}$. Hence, we get a completion of $G'$ where there is no path from $s$ to $t$, completing the proof. 
\end{proof}

We are now ready to prove \Cref{thm:reach-gen-ub}.

\begin{proof}[Proof of \Cref{thm:reach-gen-ub}]
    First, note that if $k > n$, we can use the algorithm in \Cref{thm:scc-gen} to obtain the SCC-graph of $G$, allowing us to solve \reach on $G$ in $\tO(n + k)$ space and $\tO(2^n)$ time. For $k < n$, consider the following algorithm.
    
    \begin{description}
        \item[Before the stream:] Initialize the in-degree and out-degree of every vertex to 0. Initialize an instance of sparse recovery $\mathcal{S}$ (\Cref{fact:sp-rec}) on a vector $\bx\in \{-1,0,1\}^{\binom{n}{2}}$ indexed by unordered pairs of vertices $\{u, v\}$. Initially $\bx_{\{i,j\}} = 1$ for all $\{i,j\}\in \binom{[n]}{2}$.
        
        \item[During the stream:] For edge $(u, v)$, increment the out-degree of $u$ and the in-degree of $v$, and update $\mathcal{S}$ by decrementing $\bx_{\{u, v\}}$ by 1.

        \item[After the stream:] Use $\mathcal{S}$ to recover the non-zero entries of $\bx$. For every unordered pair of vertices $\{u, v\}$, if $\bx_{\{u, v\}} = 1$ then add $\{u, v\}$ to the set of non-edges $F$. Iterate over all possible completions $G'$ obtained by adding $(u, v)$ or $(v, u)$ for every unordered pair of vertices $\{u, v\} \in F$. Update the degrees of $G$ to obtain the degrees of $G'$. Use the algorithm from \Cref{cor:reacht} to check if $t$ is reachable from $s$ in $G'$. If there exists a $G'$ in which $s$ is not reachable from $t$, output false. If there is no such $G'$, output true.
    \end{description}

    Observe that the algorithm only takes $\tO(n + k)$ space since counting the degrees takes $\tO(n)$ space and the sketch $\cS$ takes $\tO(k)$ space (see \Cref{fact:sp-rec}). Similar to the proof of \Cref{thm:scc-gen}, we argue that the number of edges between a pair of vertices $\{u, v\}$ is equal to $1-\bx_{\{u, v\}}$. Hence, the pairs $\{u, v\}$ with $\bx_{\{u, v\}} = 1$ are the non-edges. Also, the non-zero entries of $\bx$ correspond to either the non-edges (when $\bx_{\{u, v\}}=1$) or the bidirected edges (when $\bx_{\{u, v\}}=-1$): this means $\bx$ has at most $k$ non-zero entries since $G$ is $k$-close to tournament. Hence, $\cS$ works correctly and we correctly identify $F$. Then, the correctness of the algorithm follows directly from \Cref{lem:streach} since we know that $t$ is reachable from $s$ in $G$ if and only if $t$ is reachable from $s$ in all completions $G'$.

    Finally, note that our algorithm only requires checking $O(2^k)$ possible completions since there are at most $k$ non-edges. For each completion, we take $\tO(n)$ time to check reachability (see \Cref{alg:sccgraph}). Thus, this algorithm takes $\tO(2^k)$ time, giving an overall runtime $\tO(2^{\min(n, k)})$ for any $k$. 

    The proof for \sconn is along very similar lines and we outline it. We run the same protocol before and during the stream, but during post-processing we return that $G$ is strongly connected if and only if every possible completion $G'$ of $G$ is strongly connected. The correctness follows along similar lines. If $G$ is strongly connected, then so is each of its completions $G'$. Again, if $G$ is not strongly connected, then there exist vertices $s$ and $t$ in $G$, such that there is no directed path from $s$ to $t$. Then, as in the proof of \Cref{lem:streach}, we can construct a completion $G'$ with no path from $s$ to $t$, meaning that $G'$ is not strongly connected. The runtime follows by similar arguments as we go over all possible $2^k$ completions.
\end{proof}

\mypar{A Query Lower Bound} To justify the exponential runtime, we show a query lower bound. Let us consider \sconn. Suppose, similar to us, an $\tO(n+k)$-space algorithm for \sconn runs on a graph $G$ and uses a sketch $\cS$ for \sconnt (which provably needs $\Omega(n)$ space) plus a sketch that recovers the non-edges of $G$ (which needs $\Omega(k)$ space for graphs that are $k$-close to tournaments). Again along the lines of our algorithm, it queries $\cS$ regarding the strong connectivity of certain completions of $G$ and tries to derive the solution for $G$ only from the answers to its queries. We prove that it then needs to make $\Omega(2^{\min(k,n)})$ queries, the bound attained by our algorithm. Formally, we establish the following theorem. 

\begin{restatable}{theorem}{sconnoraclelb}\label{thm:sconn-oracle-lb}
    Suppose we have an unknown digraph $G$ that is $k$-close to a tournament. Assume that we are given access to $F$, the set of $G$'s non-edges, and an oracle $\cO$ that, when queried with an orientation of edges corresponding to $F$, returns whether the resultant completion of $G$ is strongly connected or not. Then, detecting whether $G$ is strongly connected requires $\Omega(2^{\min(k,n)})$ queries to $\cO$.
\end{restatable}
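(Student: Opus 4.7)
The plan is an adversary argument based on a rich family of candidate digraphs that share the same non-edge set $F$ but give the same oracle answers on many orientations.

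For $k \le (n-2)/2$, I would take $n' = 2k+2$ vertices $\{s,t,x_1,y_1,\ldots,x_k,y_k\}$, set $F = \{\{x_i,y_i\} : i \in [k]\}$, and (if $n > n'$) extend by splicing any extra vertices into an internal Hamiltonian cycle. For each target $\tau^* \in \{0,1\}^k$, define the cut $A_{\tau^*} = \{s\} \cup \{a_i^{\tau^*} : i \in [k]\}$ where $a_i^{\tau^*} = x_i$ if $\tau^*_i = 1$ and $a_i^{\tau^*} = y_i$ otherwise, and let $B_{\tau^*} = V \setminus A_{\tau^*}$. Build $G^{(\tau^*)}$ by (i) orienting the edges inside $A_{\tau^*}$ (resp.\ inside $B_{\tau^*}$) to form a strongly connected tournament, e.g., a Hamiltonian cycle through $s$ (resp.\ $t$) with the remaining chords filled transitively, and (ii) orienting every non-$F$ cross-edge from $B_{\tau^*}$ into $A_{\tau^*}$. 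By construction, the $i$-th non-edge points from $A_{\tau^*}$ to $B_{\tau^*}$ in $G^{(\tau^*)}_\tau$ iff $\tau_i \ne \tau^*_i$, so the cut $(A_{\tau^*},B_{\tau^*})$ is blocking in $G^{(\tau^*)}_\tau$ iff $\tau = \tau^*$.

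Next I would show this is the only potentially blocking cut. For any proper non-empty $C \subseteq V$: if $C \cap A_{\tau^*}$ is a non-empty proper subset of $A_{\tau^*}$, the internal Hamiltonian cycle in $A_{\tau^*}$ exits $C \cap A_{\tau^*}$, yielding an edge from $C$ to $V \setminus C$; the same argument works symmetrically inside $B_{\tau^*}$; and the case $C = B_{\tau^*}$ has many outgoing edges because every non-$F$ cross-edge is oriented $B_{\tau^*} \to A_{\tau^*}$. Hence $G^{(\tau^*)}_\tau$ is strongly connected iff $\tau \ne \tau^*$. For the ``always strongly connected'' graph $G^+$, I would take any one of the $G^{(\tau^*)}$'s and flip a single fixed cross-edge (say, replace $t \to s$ with $s \to t$) so that $(A_{\tau^*},B_{\tau^*})$ always carries an outgoing fixed edge; the same cut-by-cut argument then shows every completion of $G^+$ is strongly connected.

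The adversary answers \emph{strongly connected} to every oracle query (consistent with $G^+$). If the algorithm has queried only $Q \subsetneq \{0,1\}^k$, then picking any $\tau^* \notin Q$ leaves both $G^+$ (strongly connected) and $G^{(\tau^*)}$ (not strongly connected) consistent with every answer given, so the algorithm cannot decide. Hence at least $2^k$ queries are necessary whenever $k \le (n-2)/2$. For $k > (n-2)/2$, I would keep $k' = \lfloor (n-2)/2 \rfloor$ ``active'' non-edges and add $k-k'$ dummy non-edges, each placed between two non-consecutive vertices on an internal Hamiltonian cycle of $A_{\tau^*}$ or $B_{\tau^*}$, so that either orientation of a dummy preserves that part's internal strong connectivity and the strong-connectivity status of the completion depends only on the $k'$ active coordinates. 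Re-running the argument above then yields $\Omega(2^{k'})$, giving the claimed $\Omega(2^{\min(k,n)})$ bound overall. The main technical obstacle is the cut-by-cut verification establishing that the only potentially blocking cut in each $G^{(\tau^*)}_\tau$ is $(A_{\tau^*},B_{\tau^*})$; this is exactly what dictates the specific structure of the internal Hamiltonian cycles and the uniformly inward orientation of the fixed cross-edges, since a careless choice would admit extra blocking cuts that break the $2^k$-orientation adversary family.
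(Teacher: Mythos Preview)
Your approach is correct and is essentially the paper's argument in different clothing: the paper reduces from the decision-tree complexity of $\uor_{2^k}$, whereas you give a direct adversary argument, but these are equivalent since the $\uor$ lower bound is itself an adversary argument. Your family $\{G^+\}\cup\{G^{(\tau^*)}:\tau^*\in\{0,1\}^k\}$ corresponds exactly to the paper's family $\{G_{\bzero}\}\cup\{G_{\be_i}\}$, and your cut $(A_{\tau^*},B_{\tau^*})$ plays the same role as the paper's $(L,R)$ built from vertices $\{s_1,s_2,u_1,\ldots,v_1,\ldots,t_1,t_2\}$ with non-edges $\{u_j,v_j\}$.

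One technical wrinkle: your handling of the large-$k$ regime via dummy non-edges ``placed between two non-consecutive vertices on an internal Hamiltonian cycle of $A_{\tau^*}$ or $B_{\tau^*}$'' does not work as stated. The non-edge set $F$ must be the same across the whole adversary family, but $A_{\tau^*}$ varies with $\tau^*$; a dummy pair $\{p,q\}$ lying inside $A_{\tau^*}$ for one $\tau^*$ becomes a cross-pair for another, and orienting it toward $B$ there would destroy the unique blocking cut. The paper handles this regime with a genuinely different construction: it packs $\Theta(k)$ extra non-edges among the first few $\{u_j,v_j\}$ vertices and has the simulated oracle first check whether any of those extra edges crosses $R\to L$ (returning ``strongly connected'' if so), so that only queries consistent with the intended cut ever touch the $\uor$ instance. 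In your framework the simplest repair is to note that ``$k$-close to a tournament'' permits fewer than $k$ non-edges, so you may simply drop the dummies; the resulting $\Omega(2^{k'})$ with $k'=\Theta(\min(k,n))$ matches what the paper's construction actually delivers.
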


To prove this, we will use reductions from the following problem in Boolean decision tree complexity. In this model, we have query access to the indices of a binary vector, and need to compute some function of it by making as few queries as possible (see the survey by \cite{BuhrmanW02} for formal definitions and other details). Consider an input vector $\bx \in
\{0,1\}^N$. Let $\bzero$ denote the all-zero vector 
and, for $i \in [N]$, let $\be_i$ denote the vector 
whose $i$th bit is $1$ and other bits are $0$. 
Then the problem $\uor_N$ is given by:
\[
  \uor_N(\bx) = \begin{cases}
    0 \,, & \text{if } \bx = \bzero \,, \\
    1 \,, & \text{if } \bx = \be_i \text{ for some $i$}\,,  \\
    \star \,, & \text{otherwise;}
  \end{cases}
  \]

Here, $\star$ denotes that any output is valid. Thus, the $\uor_N$ problem translates to the following: when $\bx$ is promised to be either $\bzero$ or $\be_i$ for some $i$, distinguish between the two cases. We shall use the following fact about the randomized (two-sided error) decision tree complexity $R^{\text{dt}}()$ of $\uor_N$. 

\begin{fact}[\cite{BuhrmanW02}] \label{fact:uor-lb}
  $R^{dt}(\uor_N)=\Omega(N)$
\end{fact}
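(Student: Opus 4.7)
The plan is to reduce the problem to $\uor_N$ with $N := 2^{\min(k,n)}$ and invoke the $\Omega(N)$ randomized decision-tree bound (\Cref{fact:uor-lb}). By Yao's minimax principle, it suffices to construct a family of digraphs $\{G_0\} \cup \{G_{\sigma^*} : \sigma^* \in \Sigma\}$ with $|\Sigma| = N$, all $k$-close to a tournament and all sharing the same non-edge set $F$, satisfying: (i) every completion of $G_0$ is strongly connected, so $\cO_{G_0} \equiv 1$ and $G_0$ itself is strongly connected; and (ii) for each $\sigma^* \in \Sigma$, exactly the $\sigma^*$-completion of $G_{\sigma^*}$ fails to be strongly connected, so $G_{\sigma^*}$ is not strongly connected and $\cO_{G_{\sigma^*}}$ agrees with $\cO_{G_0}$ except at the single input $\sigma^*$. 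An input drawn uniformly from $\{G_0\} \cup \{G_{\sigma^*}\}$ then encodes a $\uor_{|\Sigma|+1}$ instance in which each oracle query simulates exactly one bit-query on the UOR input, so the desired $\Omega(N) = \Omega(2^{\min(k,n)})$ lower bound follows.

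In the bulk regime $3 \le k \le (n-2)/2$, I would build this family via a ``bridge'' gadget on $2k + 2$ of the $n$ vertices, labelled $\{s, t, u_1, v_1, \ldots, u_k, v_k\}$, with $F = \{\{u_i,v_i\}\}_{i\in[k]}$ (padding, if $n > 2k+2$, can be folded into the within-side sub-tournaments below). For each $\sigma^* \in \{0,1\}^k$, designate one endpoint of $\{u_i,v_i\}$ as $L_i$ (``left'') and the other as $R_i$ (``right'') so that bridge $i$ is \emph{open} (the completion orients $L_i \to R_i$) precisely when $\sigma_i \ne \sigma^*_i$. Fix the remaining edges of $G_{\sigma^*}$ as follows: $s \to L_i$, $R_i \to t$, $t \to s$; every cross pair $\{L_i, R_j\}$ with $i \ne j$ as $R_j \to L_i$; the pairs $\{s, R_i\}$ and $\{t, L_i\}$ as $R_i \to s$ and $t \to L_i$; and within-$L$ and within-$R$ as strongly connected sub-tournaments (possible for $k \ge 3$). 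A short analysis then shows: (a) the only $s$-to-$t$ paths in any completion use an open bridge, so $s$ reaches $t$ iff some bridge is open iff $\sigma \ne \sigma^*$; (b) when some bridge is open, the within-side SCCs propagate reachability to yield full strong connectivity; and (c) when all bridges are closed (i.e., $\sigma = \sigma^*$), the ``left block'' $\{s\} \cup \{L_1, \ldots, L_k\}$ has no outgoing edge and the graph is not strongly connected. Taking $G_0$ to be any strongly connected digraph with non-edge set $F$ (for instance, one built around a Hamilton cycle avoiding $F$) completes the family and yields an $\Omega(2^k)$ bound in this regime.

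The main obstacle is the remaining regime $k > (n-2)/2$, where the bridge gadget does not fit in $n$ vertices yet we still need $2^{\min(k,n)}$ distinct ``bad'' orientations $\sigma^*$. I would handle this by letting single vertices participate in several non-edges: pick a bipartition $(A, B)$ of $V$ with $|A|\cdot|B| \geq \min(k,n)$, place $\min(k,n)$ of the $F$-edges across the cut $(A, B)$ to serve as bridges, and confine the remaining $k - \min(k,n)$ non-edges inside $A$ or inside $B$ as ``don't-care'' slots. The bridge analysis above then carries over provided the don't-care non-edges can be oriented so as not to create an alternate $s$-to-$t$ path in the $\sigma^*$-completion nor spoil within-side strong connectivity for any ``good'' orientation. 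Producing a single choice of within-$A$, within-$B$, and don't-care orientations that works simultaneously for every $\sigma^* \in \{0,1\}^{\min(k,n)}$ is the delicate combinatorial step; I expect a recursive construction built from strongly connected sub-tournaments on nested vertex sets to suffice.
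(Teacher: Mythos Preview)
Your proposal does not address the statement in question. \Cref{fact:uor-lb} asserts that the randomized decision-tree complexity of $\uor_N$ is $\Omega(N)$; it is a cited result from~\cite{BuhrmanW02} and the paper gives no proof of it. What you have written is instead a proof sketch for \Cref{thm:sconn-oracle-lb} (the oracle query lower bound for strong connectivity on near-tournaments), and indeed you explicitly \emph{invoke} \Cref{fact:uor-lb} as a black box in your first sentence. So you have proved the wrong statement.

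If the task were actually to prove \Cref{fact:uor-lb}, the argument is short and has nothing to do with graphs or strong connectivity: by Yao's minimax principle it suffices to lower-bound any deterministic decision tree against the input distribution that puts probability $1/2$ on $\bzero$ and probability $1/(2N)$ on each $\be_i$. A deterministic tree making $q$ queries sees only $q$ coordinates; conditioned on the input being some $\be_i$, the probability that the queried set contains $i$ is $q/N$, so for $q = o(N)$ the tree cannot distinguish $\bzero$ from a uniformly random $\be_i$ with advantage bounded away from $0$.

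If instead you intended to prove \Cref{thm:sconn-oracle-lb}, then your approach is the same as the paper's at the high level (reduce from $\uor$ by encoding bit positions as orientations of the non-edges), and your bridge gadget for the regime $k \le (n-2)/2$ is close in spirit to the paper's construction. The paper avoids your ``main obstacle'' in the large-$k$ regime by a cleaner device: it keeps only $n$ of the non-edges as the ``informative'' bridges $\{u_j,v_j\}$ and dumps the remaining $\Theta(k-n)$ non-edges as all pairs inside a clique of size $\Theta(\sqrt{k-n})$, with the oracle simply declaring ``strongly connected'' whenever any of those extra non-edges is oriented from $R$ to $L$. This sidesteps the delicate simultaneous-orientation issue you anticipate.
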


We now prove \Cref{thm:sconn-oracle-lb}.

\begin{proof}[Proof of \Cref{thm:sconn-oracle-lb}]

 \begin{figure}[H]
        \centering
        
        \includegraphics[scale=0.8]{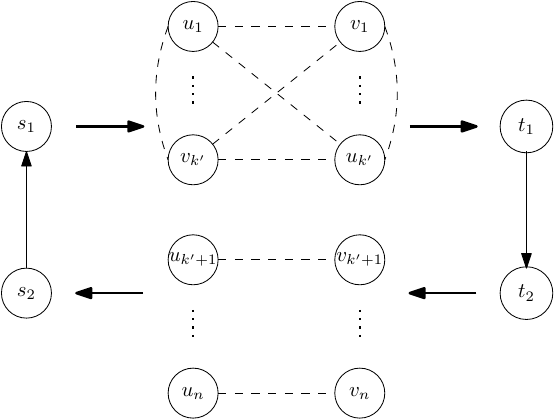}
        
        \caption{The graph $G_\bx$ constructed in the query lower bound for \sconn where bits 0, $k'+1$ and $n$ of the index $i$ are 0 and bit $k'$ of $i$ is 1. The dashed lines represent non-edges. Most edges are excluded for clarity, and go from left to right. Note that this is the construction for the $\Omega(2^n)$ case, but the $\Omega(2^k)$ case is very similar.}
        
        \label{fig:oracle-lb}
    \end{figure}
    
    We start by describing an $\Omega(2^k)$ query lower bound for $k \leq n$. We reduce from $\uor_{2^k}$. Let $\cA$ be an algorithm that detects strong connectivity of $G$ by making at most $q$ queries. We obtain a query algorithm for $\uor_{2^k}$ as follows. We run the algorithm $\cA$ on a graph $G = (V, E)$ with vertices $V = \{s_1, s_2, u_1, ..., u_n, v_1, ..., v_n, t_1, t_2\}$ and non-edges $F = \{\{u_j, v_j\} \: | \: j \in [k]\}$. Given an oracle $O'$ for \uor, we construct an oracle $\cO$ for strong connectivity by converting the orientations of edges in $F$ to an index query $i$ as follows. For all $j \in [k]$, if edge $j$ is oriented from $u_j$ to $v_j$ we set the $j$th bit of $i$ to 0, and if $j$ is oriented from $v_j$ to $u_j$ we set the $j$th but of $i$ to 1. We use $i$ as the index query to oracle $O'$. The oracle $\cO$ returns that the completion of $G$ is strongly connected if and only if $x_i = 0$. Finally, based on these queries, if $\mathcal{A}$ reports that the graph is strongly connected we output 0, otherwise we output 1.

    Let us show that for every vector $\bx$, (i) there is a graph $G_{\bx}$ such that the oracle $\cO$ is consistent on $G_\bx$, and (ii) our output for unique-or on input $\bx$, which is based on $\cA$'s output for $G_{\bx}$, is correct. The construction of $G_{\bx}$ will be defined in terms of two subsets of vertices, $L$ and $R$, defined in terms of $\bx$. If $\bx = \textbf{0}$, we add $u_j$ to $L$ and $v_j$ to $R$ for all $j \in [n]$. We then construct a graph $G_{\textbf{0}}$ with the following edges.

    \begin{itemize}
        \item $(s_2, s_1)$, $(s_1, \ell)$, $(\ell, s_2)$, $(s_1, r)$ and $(s_2, r)$ for all $\ell \in L$ and $r \in R$. 
        \item $(t_1, t_2)$, $(r, t_1)$, $(t_2, r)$, $(\ell, t_1)$ and $(\ell, t_2)$ for all $\ell \in L$ and $r \in R$.
        \item $(t_1, s_1)$ and arbitrary edges between all remaining pairs except those in $F$. \sahil{This can not be arbitrary when $\bx = \be_i$}
    \end{itemize}

    It is clear that $G_{\textbf{0}}$ is strongly connected since there is a path from $s_1$ to every vertex, a path from every vertex to $t_1$, and an edge from $t_1$ to $s_1$. Thus, irrespective of the orientation of the non-edges, we expect $\cO$ to return that the graph is strongly connected, which is satisfied since every query to $O'$ returns 0. This implies that $\cO$ is consistent on $G_{\textbf{0}}$. Moreover, note that $\mathcal{A}$ will correctly report that $G_{\textbf{0}}$ is strongly connected, so $\mathcal{A}$ will correctly output 0.

    Next, consider the case where $\bx = \mathbf{e}_i$. We construct $L$ and $R$ as follows. For all $j \in [k]$, if bit $j$ of $i$ is 0 we add $u_j$ to $L$ and $v_j$ to $R$, and if bit $j$ of $i$ is 1 we add $v_j$ to $L$ and $u_j$ to $R$. We then construct a graph $G_{\mathbf{e}_i}$ with the same edges as earlier, but with edge $(t_1, s_1)$ removed and edges $(s_1, t_1)$, $(s_2, t_1)$, $(s_1, t_2)$ and $(s_2, t_2)$ added. Note that $G_{\mathbf{e}_i}$ is no longer strongly connected, since there is no edges from vertices in $\{t_1, t_2\} \cup R$ to vertices in $\{s_1, s_2\} \cup L$. Moreover, note that the completion of $G_{\mathbf{e}_i}$ is not strongly connected if and only if every non-edge is oriented from $L$ to $R$, which is true if and only if the orientation of these edges matches the binary representation of the index $i$. This is precisely when $O'$ returns 1, so $\cO$ is consistent on $G_{\mathbf{e}_i}$. Moreover, note that $\mathcal{A}$ will correctly report that $G_{\textbf{0}}$ is not strongly connected, so $\mathcal{A}$ will correctly output 1. Thus, our reduction is correct.

    Finally, we know that $\uor_{2^k}$ has query complexity $\Omega(2^k)$ (\Cref{fact:uor-lb}) and our reduction makes $q$ queries to the oracle $O'$, so the query complexity for determining if $G$ is strongly connected must also be $\Omega(2^k)$ for $k \leq n$. 

    Next we show an $\Omega(2^n)$ query lower bound for $k \geq n$. We reduce from $\uor_{2^n}$. As before, we use an algorithm $\cA$ for strong connectivity with at most $q$ queries to design a query algorithm for \uor. The vertices of the graph remain the same, but the non-edges are now (see \Cref{fig:oracle-lb})
    $$F = \{\{u_j, v_j\} \: | \: j \in [n]\} \cup \{\{u_j, u_{j'}\}, \{v_j, v_{j'}\}, \{u_j, v_{j'}\} \: | \: j, j' \in [k']\}$$

    \noindent
    for some constant $k'$. Note that the number of missing edges is actually $\Theta(k'^2+n)$, which is $\Theta(k)$ for $k' = \Theta(\sqrt{k-n})$. Next, we construct the oracle $\cO$ as follows. We begin by constructing two sets of vertices $L$ and $R$. For every pair of vertices $u_j$ and $v_j$, if the edge in the query is oriented from $u_j$ to $v_j$ we add $u_j$ to $L$ and $v_j$ to $R$, and if the edge is oriented from $v_j$ to $u_j$ we add $v_j$ to $L$ and $u_j$ to $R$. If there is any edge in the query that goes from a vertex in $R$ to a vertex in $L$, we return that the graph is strongly connected. If there is no such edge we construct the index $i$ based on the edges as earlier, and have $\cO$ return that the graph is strongly connected if $x_i = 0$ and that the graph is not strongly connected if $x_i = 1$. Finally, based on these queries, if $\mathcal{A}$ reports that the graph is strongly connected we output 0, otherwise we output 1.

    Our construction of $G_{\bx}$ will be the same as earlier, excluding all the edges in $F$. Note that for $\bx = \textbf{0}$, $G_{\textbf{0}}$ is still strongly connected, so the proof follows. Thus, let us consider $\bx = \mathbf{e}_i$. As before, $G_{\mathbf{e}_i}$ is not strongly connected, and remains so if and only if every non-edge between $L$ and $R$ is oriented from $L$ to $R$. This holds if and only if (i) the orientation of edges $(u_j, v_j)$ matches the binary representation of index $i$ and (ii) the orientation of edges $(u_j, v_k)$ follows this orientation. This is precisely when $O'$ returns 1, so $\cO$ is consistent on $G_{\mathbf{e}_i}$. Thus, our reduction remains correct.

    Finally, we know that $\uor_{2^n}$ has query complexity $\Omega(2^n)$ (\Cref{fact:uor-lb}) and our reduction makes at most $q$ queries to the oracle $O'$, so the query complexity for determining if $G$ is strongly connected must also be $\Omega(2^n)$ for $k > n$. Combining the two bounds, we see that deciding whether $G$ is strongly connected requires $\Omega(2^{\min\{k, n\}})$ queries.  
 \end{proof}

\subsection{Other Applications of SCC Decomposition of Tournaments}\label{sec:otherapps}
  
\subsubsection{Hamiltonian Cycle}\label{subsec:hamcyc}

Using our SCC framework, we obtain the first streaming algorithm for \hcyct (see \Cref{sec:results} for formal definition). We build on the parallel algorithm for \hcyct in \cite{SorokerHamCycle} and adopt it in the streaming setting. Our algorithm is given by \Cref{alg:cycle}. There we use our \sccdect algorithm (\Cref{alg:sccgraph}) as a subroutine. 

\begin{restatable}{theorem}{hamcyc}\label{thm:hamcyc}
    There is a deterministic $O(\log n)$-pass $\tO(n)$-space algorithm for \hcyct.
\end{restatable}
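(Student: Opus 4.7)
The strategy is to simulate Soroker's parallel algorithm for Hamiltonian cycle in tournaments~\cite{SorokerHamCycle} in the semi-streaming model, using our single-pass \sccdect routine (\Cref{cor:sccdect}, \Cref{alg:sccgraph}) as the main primitive. Soroker's algorithm has a natural divide-and-conquer structure of $O(\log n)$ phases, where each phase performs a constant number of SCC decompositions on auxiliary digraphs of $O(n)$ vertices plus local bookkeeping on a collection of structures (e.g., disjoint paths or cycles) whose union covers $V$ and can therefore be stored in $\tO(n)$ memory. A pass budget of $O(1)$ per phase thus yields the claimed $O(\log n)$-pass $\tO(n)$-space bound.

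\textbf{Existence check.} A preliminary pass invokes \Cref{cor:strongconnt} on $T$. By Camion's theorem~\cite{camion1959chemins}, $T$ admits a Hamiltonian cycle iff it is strongly connected, so a negative answer lets me output \textsc{none} immediately. Otherwise I proceed to the construction.

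\textbf{Phase simulation.} At the start of each phase I hold in memory the current collection $\cC$ of vertex-disjoint substructures whose union is $V$; this takes $\tO(n)$ space. I then perform $O(1)$ streaming passes: one pass to construct, on the fly, an auxiliary digraph $H$ over the meta-vertices corresponding to members of $\cC$ (whose edges are derived from each streamed edge of $T$ together with the in-memory information about which member of $\cC$ each endpoint belongs to), and simultaneously to feed $H$ into \Cref{alg:sccgraph} to obtain its SCC decomposition; and a further pass in which, guided by that decomposition, I use sparse recovery (\Cref{fact:sp-rec}) on a $\tO(n)$-sparse vector to retrieve the specific crossing edges needed to perform the merges dictated by Soroker's procedure. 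Each phase reduces $|\cC|$ by a constant factor, so after $O(\log n)$ phases the collection collapses to a single cycle; by the invariant $\bigcup \cC = V$ it is a Hamiltonian cycle of $T$.

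\textbf{Main obstacle.} The technical heart of the proof is verifying that each of Soroker's phases, originally described for a CREW PRAM with random edge access, is faithfully emulable by $O(1)$ streaming passes. I need to show (a) the auxiliary digraph $H$ used at each level is definable using only stream updates plus the in-memory collection $\cC$ as context, so it can be built during the same pass in which \Cref{alg:sccgraph} consumes it; and (b) the merge step of the phase requires only $\tO(n)$ specific crossing edges, identifiable from $H$'s SCC decomposition, so that a single further sparse-recovery pass retrieves them. A secondary subtlety is that $H$ need not itself be a tournament (contractions may introduce parallel edges in opposite directions), so I would invoke the more general version of \Cref{thm:scc-main} that handles digraphs with no non-edges rather than the tournament-specific statement. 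Once the per-phase simulation is justified, the $O(\log n)$ pass count and $\tO(n)$ space bound follow immediately from the recursion depth.
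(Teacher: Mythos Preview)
Your high-level instinct—simulate Soroker's parallel algorithm using the single-pass \sccdect primitive—is exactly what the paper does, but the concrete structure you describe does not match Soroker's algorithm, and this is where the gap lies.

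Soroker's procedure is not a ``maintain a collection $\cC$ of substructures and merge them via an auxiliary meta-graph $H$'' scheme. It is a recursive divide-and-conquer: pick a \emph{mediocre} vertex $v$ (one with both in- and out-degree at least $\lfloor n/4\rfloor$), partition the remaining vertices into the in-neighbourhood $L(v)$ and out-neighbourhood $W(v)$, run SCC decomposition on the two induced sub-tournaments, find a couple of specific cross edges between their SCC chains, and then recurse (via a \texttt{RestrictedPath} subroutine) on pieces each of size at most $\tfrac{3}{4}n$. The paper's proof spells out exactly these passes: one pass to find $v$, one pass to run \Cref{alg:sccgraph} on $L(v)$ and $W(v)$, one pass to locate the cross edges $w_1,l_1,w_2,l_2$, and then parallel recursive calls on disjoint subgraphs together with the known $O(\log n)$-pass \hpatht routine. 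No auxiliary contracted digraph $H$, no sparse recovery, and the SCC primitive is always applied to genuine sub-tournaments, so the worry you raise about $H$ not being a tournament never arises.

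Because your sketch replaces this structure with an unspecified merge process on a meta-graph, the two points you flag as the ``main obstacle''—that $H$ is definable on the fly and that only $\tO(n)$ crossing edges are needed—are not just unproved, they are not the right statements to prove. In particular, you never identify the mediocre-vertex step that gives the $\tfrac{3}{4}$ size reduction (and hence the $O(\log n)$ depth), nor the \texttt{RestrictedPath}/\texttt{Cycle} mutual recursion that ties the pieces together. To turn this into a proof you would need to replace the generic ``collection $\cC$'' narrative with the actual Soroker recursion and then argue, as the paper does, that each recursion level costs $O(1)$ passes and the depth is $O(\log n)$.
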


 \begin{proof}\Cref{alg:cycle} describes the algorithm for finding a Hamiltonian cycle in an input tournament graph $G$, with \Cref{alg:restrictedpath} as a subroutine.

\begin{algorithm}[!ht]
  \caption{\:\texttt{Cycle}$(G)$: Finds a Hamiltonian cycle in $G = (V, E)$.
    \label{alg:cycle}}
  
  \begin{algorithmic}[1]

    \Statex \underline{\textbf{Initialize}:}
    \State $n \gets |V|$
    \If{$n = 1$}
    \State Return $v \in V$
    \EndIf

    \Statex

    \Statex \underline{\textbf{First pass}:}
    \State Find a vertex $v \in V$ such that $d_{\text{in}}(v) \leq \lfloor n/4 \rfloor$ and $d_{\text{out}}(v) \leq \lfloor n/4 \rfloor$

    \Statex
    \Statex \underline{\textbf{Second pass}:}
    \State Let $L(v) = \{u \: : \: (u, v) \in E\}$ and $W(v) = \{w \: : \: (v, w) \in E\}$
    \State Find SCCs $\langle L_1, \ldots, L_q \rangle$ of $L(v)$ and $\langle W_p, \ldots, W_1\rangle$ of $W(v)$

    \Statex
    \Statex \underline{\textbf{Third pass}:}
    \State $m \gets 0$, $k\gets 0$
    \For{$(u, v) \in E$}
        \If{$u \in W_1$ and $v \in L_i$ and $i > m$}
            \State $m \gets i$, $w_1 \gets u$, $l_1 \gets v$
        \EndIf
        \If{$u \in W_i$ and $v \in L_1$ and $i > k$}
            \State $k \gets i$, $w_2 \gets u$, $l_2 \gets v$
        \EndIf
    \EndFor

    \Statex
    \Statex \underline{\textbf{$O(\log n)$ passes (parallel)}:}
    \State $H_1 \gets \texttt{RestrictedPath}(W_1, \text{`end'}, w_1)$
    \State $H_2 \gets \texttt{RestrictedPath}(\{L_m, \ldots, L_q\}, \text{`start'}, w_1)$
    \State $H_3 \gets \texttt{RestrictedPath}(\{w_k, \ldots, W_p\}, \text{`end'}, w_2)$
    \State $H_4 \gets \texttt{RestrictedPath}(L_1, \text{`start'}, l_2)$
    \State $H_5 \gets \texttt{Path}(\{W_2, \ldots, W_{k-1}, L_2, \ldots, L_{m-1}\})$

    \Statex

    \Statex \underline{\textbf{Post-processing}:}
    \State Return $(v, H_1, H_2, H_3, H_4, H_5, v)$
  \end{algorithmic}
  \end{algorithm}

  \begin{algorithm}[!ht]
  \caption{\:\texttt{RestrictedPath}$(G, t, u)$: Finds a Hamiltonian path in $G = (V, E)$. If $t = \text{`start'}$ the path will start at $u$, and if $t = \text{`end'}$ then the path will end at $u$.
    \label{alg:restrictedpath}}
  
  \begin{algorithmic}[1]

    \Statex \underline{\textbf{Initialize}:}
    \State $n \gets |V|$
    \If{$n = 1$}
    \State Return $v \in V$
    \EndIf

    \Statex
    \Statex \underline{\textbf{First pass}:}
    \State Find SCCs $\langle C_1, \ldots, C_k \rangle$ of $G$

    \Statex

    \Statex \underline{\textbf{$O(\log n)$ passes (parallel)}:}
    \If{$t = \text{`start'}$}
    \State $H_1 \gets \texttt{Cycle}(C_1)$
    \State $H_2 \gets \texttt{Path}(\{C_2, \ldots, C_k\})$
    \State Remove the edge into $u$ from $H_1$
    \EndIf

    \If{$t = \text{`end'}$}
    \State $H_1 \gets \texttt{Path}(\{C_1, \ldots, C_{k-1}\})$
    \State $H_2 \gets \texttt{Cycle}(C_k)$
    \State Remove the edge out of $u$ from $H_2$
    \EndIf

    \Statex
    \Statex \underline{\textbf{Post-processing}:}
    \State Return $(H_1, H_2)$
  \end{algorithmic}
  
\end{algorithm}

 The subroutine $\texttt{Path}(G)$, referenced in both algorithms, finds a Hamiltonian path of $G$, and can be implemented using the algorithm in either \cite{ChakrabartiGMV20} or \cite{BawejaJW22}. This subroutine requires $\log n$ passes for semi-streaming space.

The proof of correctness follows directly from \cite{SorokerHamCycle}, which proposes a similar algorithm for finding a Hamiltonian path on tournaments in the setting of the parallel algorithms. Thus, we only show that the algorithm requires $O(\log n)$ passes. Let us start by considering $\texttt{Cycle}(G)$ on a tournament graph $G$ with $n$ vertices. The algorithm requires 3 initial passes, first to find a `mediocre' vertex, then to find the corresponding SCC graphs using our single-pass algorithm, and finally to find vertices $w_1$ and $w_2$ with the particular property required by the algorithm. Following this, we make 4 calls to the $\texttt{RestrictedPath}$ subroutine and 1 call to the $\texttt{Path}$ subroutine, each of which is run on a disjoint subgraph of $G$ and can therefore be run in parallel. We know that $\texttt{Path}$ runs in $\log n$ passes. Moreover, note that each of the subgraphs passed to $\texttt{RestrictedPath}$ have size at most $\frac{3}{4} n$ vertices, as shown in \cite{SorokerHamCycle}. Thus, the number of passes required by the algorithm is $O(\max(\log n, p))$, where $p$ is the number of passes required by $\texttt{RestrictedPath}$ on a graph of size $\frac{3}{4} n$.

Next, let us consider $\texttt{RestrictedPath}(G, t, u)$, where $G$ is an arbitrary tournament graph with $n$ vertices and $t$ and $u$ are arbitrary inputs. The algorithm requires an initial pass find the SCC graph, followed by 1 call to $\texttt{Cycle}$ and 1 call to $\texttt{Path}$. As before, we know that $\texttt{Path}$ requires $\log n$ passes. We also see that the subgraph passed to $\texttt{Cycle}$ can have up to $n$ vertices. Thus, the number of passes required by the algorithm is $O(\max(\log n, p))$, where $p$ is the number of passes required by $\texttt{Cycle}$ on a graph of size $n$.

Finally, note that for each pair of calls to $\texttt{Cycle}$ and $\texttt{RestrictedPath}$, the size of the subgraph being processed decreases by at least $\frac{3}{4}$. Thus, it is clear that there are at most $O(\log n)$ calls to both $\texttt{Cycle}$ and $\texttt{RestrictedPath}$, each of which require a constant number of initial passes, so our algorithm succeeds in $O(\log n)$ passes. 

\end{proof}

 In addition, we observe that if one used \cite{BawejaJW22}'s algorithm as a subroutine, then it would only yield an $O(\log^2 n)$-pass algorithm in the worst case. This is because we know that, in the worst case, a single vertex $v$ may be involved in $O(\log n)$ calls to $\texttt{Cycle}$, where in the $i$th call the size of the subgraph is $\left(\frac{2}{3}\right)^i \cdot n$. Thus, we must run the SCC algorithm on each of these subgraphs, so the number of passes required is
$$\sum_{i=0}^{c\log n} \log\left(\left(\frac{2}{3}\right)^i \cdot n\right) = \Omega(\log^2(n)).$$

\subsubsection{Hamiltonian Path and Feedback Arc Set}\label{subsec:smallscc}

Here, we show applications of \sccdect to design algorithms for \hpatht and \fast.

\begin{restatable}{theorem}{hampathfas}\label{thm:hampathfas}
    Given an input $n$-node tournament $T$ whose largest SCC has size $s$, for any $p\geq 1$, there are deterministic $(p+1)$-pass $\tO(ns^{1/p})$ space streaming algorithms for \hpatht and $(1+\eps)$-approximate \fast. In particular, there are $O(\log s)$-pass semi-streaming algorithm for each problem.      
\end{restatable}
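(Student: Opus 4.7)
The plan is to exploit our single-pass deterministic SCC decomposition (\Cref{thm:scc-main}) to reduce each problem to its analogue on every strongly connected component of $T$, and then solve these subproblems in parallel using the algorithms of \cite{BawejaJW22}.

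First, I would run \Cref{alg:sccgraph} in a single pass using $\tO(n)$ space to obtain the topologically ordered SCC decomposition $\langle V_1,\ldots,V_\ell\rangle$ of $T$, recording for each vertex which $V_i$ it lies in; by hypothesis $|V_i|\le s$ for every $i$. Two structural facts then reduce the global problems to local ones on the induced tournaments $T[V_i]$. For \hpatht: by Camion's theorem each strongly connected tournament $T[V_i]$ contains a Hamiltonian cycle, hence a Hamiltonian path $P_i$; since the SCC-graph of a tournament is an acyclic tournament, every edge between $V_i$ and $V_j$ with $i<j$ is oriented from $V_i$ to $V_j$, so the concatenation $P_1 P_2 \cdots P_\ell$ is itself a Hamiltonian path of $T$. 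For $(1+\eps)$-approximate \fast: any ordering of $V$ can be rearranged, without increasing the back-edge count, so that the $V_i$ appear as consecutive blocks in topological order, which gives $\mathrm{OPT}(T)=\sum_i \mathrm{OPT}(T[V_i])$; hence concatenating $(1+\eps)$-approximate orderings of the $T[V_i]$ in SCC order yields a $(1+\eps)$-approximation for $T$.

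With the decomposition in hand, I would execute the $(p+1)$-pass, $\tO(n^{1+1/p})$-space algorithms of \cite{BawejaJW22} for \hpatht and $(1+\eps)$-approximate \fast on each $T[V_i]$ in parallel, routing each streamed edge $(u,v)$ only to the instance corresponding to the common SCC of $u$ and $v$ (edges across components are discarded, as they are irrelevant to both problems). The per-component space is $\tO(|V_i|^{1+1/p})$, and
\[
\sum_i |V_i|^{1+1/p} \;=\; \sum_i |V_i|\cdot |V_i|^{1/p} \;\le\; s^{1/p} \sum_i |V_i| \;=\; n\,s^{1/p},
\]
so the total working space is $\tO(n s^{1/p})$ and the total pass count is $p+1$. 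Setting $p=\lceil \log s\rceil$ gives $s^{1/p}=O(1)$, yielding $\tO(n)$-space algorithms using $O(\log s)$ passes.

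The principal subtlety is to match the pass count exactly: \cite{BawejaJW22}'s $(p+1)$-pass algorithms themselves begin by computing an SCC decomposition in their first pass, so one must arrange that the single pass devoted to \Cref{alg:sccgraph} doubles as the first pass of their per-component execution, giving $p+1$ passes overall rather than $p+2$. This composition should be direct because \Cref{alg:sccgraph} only tracks degrees in $\tO(n)$ space and runs alongside any other single-pass subroutine; equivalently, once our decomposition is known, \cite{BawejaJW22}'s first (SCC-finding) pass on each $T[V_i]$ is redundant and can be skipped, so $p$ further passes suffice per component.
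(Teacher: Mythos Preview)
Your approach is essentially the same as the paper's: spend one pass on \Cref{alg:sccgraph} to obtain $\langle V_1,\ldots,V_\ell\rangle$, run \cite{BawejaJW22}'s \hpatht and $(1+\eps)$-\fast algorithms in parallel on the induced tournaments, and concatenate. Your space analysis via $\sum_i |V_i|^{1+1/p}\le s^{1/p}\sum_i |V_i|=ns^{1/p}$ is in fact cleaner than the paper's, which buckets the $V_i$ by dyadic size and sums $\sum_{i\le\log s} n\cdot 2^{i/p}$ to reach the same $\tO(ns^{1/p})$ bound. One minor point: the relevant \cite{BawejaJW22} algorithms for \hpatht and $(1+\eps)$-\fast are $p$-pass using $\tO(n^{1+1/p})$ space (it is their \sccdect algorithm that is $(p+1)$-pass), so the pass count is simply $1+p$ and your final paragraph about overlapping the SCC pass with their first pass is unnecessary.
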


 \cite{BawejaJW22} gave $p$-pass $\tO(n^{1+1/p})$-space algorithms for each of the above problems. Our algorithm uses $\tO(ns^{1/(p-1)})$ space in $p$ passes. Thus, we see that we have improved $p$-pass algorithms whenever $s < n^{(p-1)/p}$. 
 
\begin{proof}[Proof of \Cref{thm:hampathfas}]
    We begin by giving an improved algorithm for finding a Hamiltonian path by merging the $p$-pass $\tO(n^{1+1/p})$ space algorithm from \cite{BawejaJW22} and our single-pass SCC decomposition algorithm, which we will refer to as the original algorithm. Given a tournament graph $T = (V, E)$, in the first pass we run the SCC decomposition algorithm, from which we obtain the SCC graph $\langle V_1, \ldots, V_\ell \rangle$ of $T$. We then run the original algorithm on each component $V_1, \ldots, V_\ell$ in parallel, giving us Hamiltonian paths $H_1, \ldots, H_\ell$. We return $(H_1, \ldots, H_\ell)$, the concatenation of the individual Hamiltonian paths.

    Let us start by showing that our algorithm returns a Hamiltonian path of $T$. We know that for all $i \in [\ell]$, $H_i$ is a Hamiltonian path of $V_i$. By \Cref{lem:scc-cut-orient}, we know that for all $i \in [\ell-1]$, every edge between $V_i$ and $V_{i+1}$ is oriented from $V_i$ to $V_{i+1}$. Thus, $T$ contains an edge from the last vertex in $H_i$ to the first vertex in $H_{i+1}$, so $(H_1, \ldots, H_\ell)$ is a Hamiltonian path of $T$.

    Next, let us show that given $(p+1)$ passes, our algorithm requires $\tO(ns^{1/p})$ space. We use a single pass to compute the SCC decomposition in semi-streaming space, giving us $p$ passes to compute the Hamiltonian paths of SCCs $V_1, \ldots, V_\ell$. Let us begin by dividing the SCCs into groups $G_i = \{V_j \: | \: 2^i \leq |V_j| < 2^{i+1} \} $. For all $i$, it is clear that $|G_i| \leq \frac{n}{2^i}$. Moreover, we see that in $p$ passes, computing the Hamiltonian path of an SCC $V_j \in G_i$ requires at most $2^{(i+1)(1+1/p)}$. Thus, the total space required to compute the Hamiltonian paths of SCCs in $G_i$ is at most $\frac{n}{2^i} \cdot 2^{(i+1)(1+1/p)} = O(n \cdot 2^{i/p})$. Moreover, note that the size of the largest SCC is $s$, so there are at most $\log s$ non-empty groups $G_i$. Thus, we see that the total space required for the algorithm is at most
    $$\sum_{i=1}^{\log s} n \cdot 2^{i/p} = \tO(ns^{1/p}).$$

    We can use a very similar modification to obtain a $(1+\eps)$-approximation algorithm for FAS. In this case, the original algorithm is the $p$-pass $\tO(n^{1+1/p})$ space algorithm for FAS-T from \cite{BawejaJW22}. Given a tournament graph $T = (V, E)$, in the first pass we run the SCC decomposition algorithm, from which we obtain the SCC graph $\langle V_1, \ldots, V_\ell \rangle$ of $T$. We then run the original algorithm on each component $V_1, \ldots, V_\ell$ in parallel, giving us FAS orderings $\sigma_1, \ldots, \sigma_\ell$. We return $(\sigma_1, \ldots, \sigma_\ell)$, the concatenation of the individual FAS orderings.

    Let us start by showing that our algorithm returns a $(1+\eps)$-approximation for the FAS ordering of $T$. By \Cref{lem:scc-cut-orient}, we know that for all $i, j \in [\ell]$ such that $i < j$, every edge between $V_i$ and $V_j$ is oriented from $V_i$ to $V_j$. Thus, there are no back-edges between components in the ordering $(\sigma_1, \ldots, \sigma_\ell)$. This also implies that the optimal FAS ordering of $T$ is the concatenation of the optimal FAS orderings of $V_1, \ldots, V_\ell$. Thus, since each $\sigma_i$ is a $(1+\eps)$-approximation of the optimal FAS ordering of $V_i$, $(\sigma_1, \ldots, \sigma_\ell)$ is a $(1+\eps)$-approximation of the optimal FAS ordering of $T$. 

    The proof that our algorithm requires $\tO(ns^{1/p})$ space in $(p+1)$ passes follows directly from the proof for Hamiltonian path provided above, and is therefore not repeated.
\end{proof}

        \section{Lower Bounds for Feedback Arc Set and Shortest Distance on Tournaments}

In this section, we investigate problems that remain hard on tournaments. In particular, we show that exactly solving \fast, \fassizt, or \stdist requires storage of almost the entire graph.  We also give a generalized space-approximation tradeoff for \fast. 

    \subsection{Warm Up: A Lower Bound for Exact \fast}
    
    First we prove a lower bound for the exact version.

\begin{restatable}{lemma}{exactfast}\label{lem:exactfast}
     Any randomized single-pass algorithm that exactly solves \fast requires $\Omega(n^2)$ space.
\end{restatable}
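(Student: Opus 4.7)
The plan is to reduce from $\tedge_n$, which by \Cref{prop:tedge-lb} has one-way randomized communication complexity $\Omega(n^2)$. Given a purported single-pass $s$-space algorithm $\cA$ for \fast, Alice --- holding a tournament $T$ on $V=\{v_1,\ldots,v_n\}$ --- would feed the edges of $T$ into $\cA$ and send the $s$-bit memory state to Bob, who holds a query pair $(v_a,v_b)$ and must recover the orientation of that edge in $T$. To decode, Bob appends a ``dominator'' gadget to the stream: $n$ fresh vertices $w_1,\ldots,w_n$ with edges $w_i \to w_j$ for $i<j$ (so $W$ is acyclic), $w_i \to v_k$ for each $k \notin \{a,b\}$ (so $W$ dominates the non-queried $v$-vertices), and $v_a \to w_i$, $v_b \to w_i$ for each $i$ (so $v_a,v_b$ dominate $W$). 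He then completes the run of $\cA$ on the resulting $2n$-vertex tournament and reads off its output.

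The key structural claim I would establish is that every optimal \fast ordering of the augmented tournament must place $\{v_a,v_b\}$ in the first two positions, $W$ in the order $w_1,\ldots,w_n$ in the middle, and $V\setminus\{v_a,v_b\}$ at the end in some optimal internal order; moreover, within the first two positions, $v_a$ precedes $v_b$ if and only if $T$ contains the edge $v_a\to v_b$. Hence Bob reads the first two entries of $\cA$'s output and announces the orientation, yielding a one-way protocol for $\tedge_n$ of cost $s$; therefore $s=\Omega(n^2)$, and the tournament size $2n$ translates this to the claimed lower bound on $n$-vertex inputs.

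The proof of the structural claim is a back-edge accounting: in the canonical ordering above, every gadget edge is forward, and any deviation that crosses a gadget boundary --- moving a $w_i$ before $\{v_a,v_b\}$, a $v_k$ with $k\notin\{a,b\}$ before $W$, or $v_a$ or $v_b$ after some $w_i$ --- flips at least $n$ gadget edges into back-edges, while reorientations internal to $T$ can save at most $O(n)$ back-edges involving the moved vertex. Once the skeleton is forced, the only free choice that affects the $v_a$-vs-$v_b$ pair is their mutual order, which saves exactly one back-edge precisely when it agrees with the $T$-edge between them. The main technical obstacle is choosing the gadget size just right: $n$ dominators are precisely enough to make the net change strictly positive for every deviation (since the maximum saving from any single vertex's reorientation against $T$ is at most $n-1$), while keeping the tournament linear in $n$ --- if a larger gadget were needed, the resulting space bound would no longer be quadratic in the input size.
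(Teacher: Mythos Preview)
Your reduction is exactly the paper's: reduce from $\tedge_N$, let Alice stream her tournament $T$, and let Bob append an acyclic ``dominator'' block $W$ that beats $T'=T\setminus\{u,v\}$ and loses to $\{u,v\}$, so that any optimal \fast ordering of the augmented tournament is forced to place $u,v$ first, $W$ in the middle, and $T'$ last; the relative order of $u,v$ then reveals the queried edge.

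Where your sketch wobbles is in the accounting for the structural claim, and specifically in the choice $|W|=n$. Two of your three listed ``boundary crossings'' do \emph{not} flip $n$ gadget edges: moving a single $w_i$ before $\{v_a,v_b\}$ flips exactly $2$ edges, and moving $v_a$ past a single $w_i$ flips exactly $1$. More importantly, a local ``one-vertex move costs $\ge$ one-vertex saving'' argument does not by itself prove global optimality over all orderings. The paper's clean argument is: if \emph{any} $T'$-vertex $t$ precedes $u$ (or $v$), then every $w_i$ lies either before $u$ (back-edge $u\to w_i$) or after $t$ (back-edge $w_i\to t$), so the gadget alone contributes $\ge |W|$ back-edges on top of $\text{MIN-FAS}(T')$. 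For this to beat the canonical value $\text{MIN-FAS}(T')+\delta(T',u)+\delta(T',v)\le \text{MIN-FAS}(T')+2(N-2)$ one needs $|W|>2(N-2)$, which is why the paper takes $|W|=2N$ (total $3N$ vertices). With $|W|=n$ that inequality fails for $n\ge 5$, so the paper's argument does not transfer to your gadget; your claim may still be true, but it would need a noticeably more delicate case analysis than the sketch you give.

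Finally, your worry that ``if a larger gadget were needed, the resulting space bound would no longer be quadratic'' is unfounded: using $|W|=2n$ gives a $3n$-vertex tournament, and $\Omega(n^2)=\Omega((3n)^2)$ up to constants, so the quadratic lower bound survives intact. The simplest fix to your plan is therefore to double the dominator block and run the paper's one-paragraph counting argument verbatim.
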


\begin{proof}
\begin{figure}[H]
        \centering
        
        \includegraphics[scale=0.5]{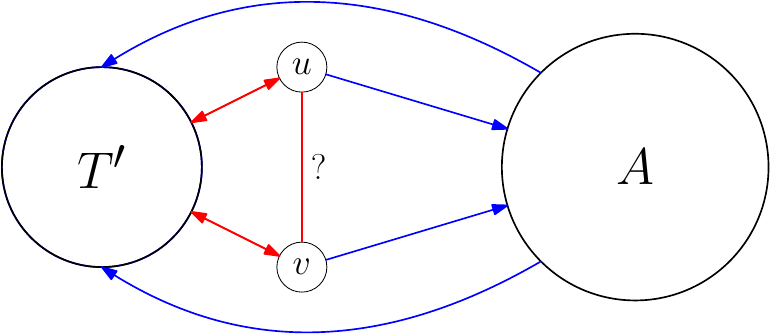}
        
        \caption{The graph constructed in the lower bound for \fast. Edges Alice adds are represented in red, and edges Bob adds are represented in blue. The `$?$' marks the edge Alice and Bob are trying to determine.}
        
        \label{fig:fas-t-lb}
    \end{figure}
        We reduce from $\tedge_N$. Alice and Bob construct a tournament $G = (V, E)$ with $n = 3N$ vertices (see \Cref{fig:fas-t-lb}). Alice, based on her input, initially constructs the tournament $T$ with $N$ vertices. Let $u$ and $v$ be the vertices between which Bob needs to identify the orientation of the edge, and let $T'$ be the subgraph of $T$ defined by the vertex set $V - \{u, v\}$. Bob creates a new acyclic tournament, $A$, on the remaining $2N$ vertices. He then adds edges from $u$ and $v$ to every vertex in $A$, and from every vertex in $A$ to every vertex in $T'$. This completes the construction of the graph $G$, and it is easy to check that $G$ is a tournament. Given the minimum FAS ordering of $G$, Bob will announce that the edge in $G$ is oriented from $u$ to $v$ if $u$ is before $v$ in the ordering, and from $v$ to $u$ otherwise. 

        Let us show that $G$ contains an edge oriented from $u$ to $v$ if and only if $u$ is before $v$ in any minimum FAS ordering. To do so, we will first show that if $G$ contains the edge $(u, v)$, then any minimum FAS ordering of $G$ is of the form $\pi = (u, v, A, T')$, where $A$ and $T'$ are ordered by some minimum FAS ordering of their induced subgraphs. To begin, note that $\text{FAS-SIZE}(\pi) = \delta(T', u) + \delta(T', v) + \text{MIN-FAS-SIZE}(T')$, where $\delta(T', u)$ and $\delta(T', v)$ are the number of edges from $T'$ to $u$ and $v$ respectively. We want to show that every other ordering of vertices has a larger FAS size. 
        
        Let $\sigma$ be an arbitrary ordering of vertices in $G$, and let $i$ and $j$ be the number of vertices of $T'$ that appear before $u$ and $v$ in $\sigma$ respectively. Assume that $i \geq 1$, and let $t$ be the left-most vertex of $T'$ before $u$ in $\sigma$. Consider an arbitrary vertex $a \in A$. If $a$ appears before $u$ in $\sigma$, it must contribute at least 1 to the FAS of $\sigma$, since the edge from $u$ to $a$ is a back-edge. Similarly, if $a$ is after $t$ in $\sigma$, it must contribute at least 1 to the FAS of $\sigma$, since the edge from $a$ to $t$ is a back-edge. Moreover, since $t$ is before $u$ in $\sigma$, $a$ must be either before $u$ or after $t$ (or both). Thus, every vertex $a \in A$ contributes at least 1 edge to the FAS of $\sigma$, so $A$ contributes at least $2N$ edges to the FAS of $\sigma$. Moreover, note that the edges within $T'$ must contribute at least $\text{MIN-FAS-SIZE}(T')$ edges to the FAS of $\sigma$, irrespective of how the vertices of $T'$ are ordered. Thus, the FAS size of $\sigma$ is at least $\text{MIN-FAS-SIZE}(T') + 2N$. Note, however, that $\pi$ has an FAS size of at most $\text{MIN-FAS-SIZE}(T') + 2N - 2$, since both $\delta(T', u)$ and $\delta(T', v)$ are at most $N-1$. Thus, if $i \geq 1$ we know that $\text{FAS}(\sigma) > \text{FAS}(\pi)$. By a symmetric argument, we see that the same applies if $j \geq 1$. Thus, the only way we can get an ordering with FAS less than or equal to that of $\pi$ is if $i = j = 0$.

        Since $i = j = 0$, we know that $u$ and $v$ must be before $T'$ in $\sigma$. Thus, it is clear that the minimum FAS is at least $\delta(T', u) + \delta(T', v) + \text{MIN-FAS-SIZE}(T')$, which matches $\text{FAS}(\pi)$. Moreover, note that this is the unique minimum FAS up to rearranging the vertices in $T'$, since moving any of the vertices in $A$ will always add edges to the FAS, and if $G$ contains the edge $(u, v)$ then swapping $u$ and $v$ in $\pi$ adds an edge to the FAS. Thus, $\pi$ is the minimum FAS ordering, and is unique up to rearranging the vertices in $T'$. By symmetry, we also see that if $G$ contains the edge $(v, u)$ then $\pi' = (v, u, A, T')$ is the minimum FAS ordering, and is unique up to rearranging the vertices in $T'$. Thus, we see that $G$ contains an edge oriented from $u$ to $v$ if and only if $u$ is before $v$ in the minimum FAS ordering.

        This implies that we can use a streaming algorithm for \fassizt on $n = 3N$ vertices to solve $\tedge_N$. Since $\tedge_N$ requires $\Omega(N^2)$ bits of one-way communication (see \ref{prop:tedge-lb}), \fassizt must require $\Omega(n^2)$ space in a single pass.
    \end{proof}

\subsection{A Generalized Lower Bound for Approximate \fast}

Here we establish the following generalization.

  \begin{restatable}{theorem}{fasmain}\label{thm:fas-main}
   Given any $\eps>0$, a single-pass $(1+\eps)$-approximation algorithm for \fast needs $\Omega(\min\{n^2,n/\sqrt{\eps}\})$ space. 
\end{restatable}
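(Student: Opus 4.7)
The plan breaks into two regimes based on $\eps$. For $\eps \leq 1/n^2$, any $(1+\eps)$-approximate FAS value has additive slack $\eps\cdot\mathrm{OPT} \leq \eps\binom{n}{2} < 1$, so the approximation is forced to be exact and the bound $\Omega(n^2)$ follows immediately from \Cref{lem:exactfast}. For the remaining regime $\eps > 1/n^2$, the plan is to pack many parallel \tedge gadgets into a single tournament and exploit a direct-sum-style lower bound to obtain $\Omega(n/\sqrt{\eps})$.

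Concretely, the plan is to set $m = \Theta(1/\sqrt{\eps})$ and $k = \Theta(n\sqrt{\eps})$ so that $km = \Theta(n)$, and partition the vertex set into $k$ disjoint blocks $B_1,\ldots,B_k$ of size $\Theta(m)$. Inside each $B_i$ the players plant the exact \fast gadget from \Cref{lem:exactfast} encoding an independent $\tedge_m$ instance $(T^{(i)},u_i,v_i)$; Alice holds the $T^{(i)}$'s and Bob holds the $(u_i,v_i)$'s. All inter-block edges between $B_i$ and $B_j$ with $i<j$ are oriented $B_i \to B_j$, input-independently. The minimum FAS then decomposes as $F^{*}=\sum_i F^{*}_i$ where $F^{*}_i=\Theta(m^2)$ is the in-block minimum computed in \Cref{lem:exactfast}, so $F^{*}=\Theta(km^{2})=\Theta(n/\sqrt{\eps})$.

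Given any approximate ordering $\pi$ of cost at most $(1+\eps)F^{*}$, decompose its back-edges into within-block back-edges $X(\pi)=\sum_i \mathrm{FAS}(\pi|_{B_i})$ and cross-block back-edges $Y(\pi)$. Each block sum satisfies $\mathrm{FAS}(\pi|_{B_i})\ge F^{*}_i$, so $X(\pi)\ge F^{*}$, and hence $\sum_i \Delta_i + Y(\pi) \le \eps F^{*}$ where $\Delta_i := \mathrm{FAS}(\pi|_{B_i}) - F^{*}_i \ge 0$. By the uniqueness established inside the proof of \Cref{lem:exactfast}, whenever $\Delta_i=0$ the in-block ordering $\pi|_{B_i}$ places $u_i$ before $v_i$ precisely when $(u_i,v_i)\in E(T^{(i)})$, so Bob recovers that \tedge instance correctly by simply reading off the relative order. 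Choosing constants so that $\eps F^{*}\le k/100$ forces at most $k/100$ blocks to have $\Delta_i\ge 1$, making Bob's aggregate protocol answer at least a $0.99$ fraction of the $k$ instances of $\tedge_m$ correctly.

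The final and hardest ingredient is a strong direct-sum lemma for \tedge: any randomized one-way protocol that correctly solves at least a $0.99$ fraction of $k$ independent $\tedge_m$ instances with constant overall success probability requires $\Omega(km^2)$ bits of communication; plugging in $km^2=\Theta(n/\sqrt{\eps})$ then completes the proof. This is where the ``stronger'' direct sum alluded to in the introduction appears: the classical CSWY-style framework \cite{chakrabartiswy01} handles the ``all $k$ copies'' variant via information cost, but to rule out protocols that simply skip a small constant fraction of copies, the plan is to amortize the per-coordinate information cost and invoke a Markov-style averaging step to extract $\Omega(m^2)$ bits from each of $\Omega(k)$ coordinates. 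A secondary subtlety is making the $X(\pi)\ge F^{*}$ inequality genuinely tight: because any swap of $a$ vertices of $B_j$ past $b$ vertices of $B_i$ incurs $ab$ additional back-edges in $Y(\pi)$, the budget $\eps F^{*}=O(k)$ permits only very limited cross-block reshuffling, and this must be quantified carefully so that the per-block uniqueness argument of \Cref{lem:exactfast} survives.
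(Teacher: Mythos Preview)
Your construction and overall strategy match the paper's proof almost exactly: pack $k=\Theta(n\sqrt{\eps})$ copies of the $\tedge$/\Cref{lem:exactfast} gadget of size $m=\Theta(1/\sqrt{\eps})$ in a chain, observe that $F^*=\sum_i F^*_i=O(km^2)$, and deduce that a $(1+\eps)$-approximate ordering can spoil at most an $\eps F^*=O(k/10)$ fraction of the per-block optima, so at least $9k/10$ of the $\tedge$ instances are recovered correctly.

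Two places where you are working harder than needed:
\begin{itemize}
\item The ``secondary subtlety'' about cross-block reshuffling is a non-issue. For any global ordering $\pi$, the restriction $\pi|_{B_i}$ is a bona fide ordering of block $i$, and its back-edges are a subset of $\pi$'s back-edges; hence $X(\pi)=\sum_i \mathrm{FAS}(\pi|_{B_i})\ge \sum_i F^*_i=F^*$ holds unconditionally, with no need to control $Y(\pi)$ or quantify interleaving. The uniqueness argument from \Cref{lem:exactfast} then applies verbatim to each $\pi|_{B_i}$ with $\Delta_i=0$.
\item For the ``solve $99\%$ of $k$ copies'' direct-sum lemma, the paper avoids information cost entirely: given a protocol $\Pi$ for the relaxed version, Alice and Bob publicly sample a random permutation $\sigma$ of the $k$ coordinates and run $\Pi$ on the permuted inputs. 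Then each fixed coordinate is solved with probability at least $\frac{1}{k}\sum_j \Pr[\Pi\text{ solves }j]\ge \frac{9}{10}\cdot\frac{10p}{9}=p$, reducing to the standard direct sum of \cite{chakrabartiswy01}. This is considerably simpler than the per-coordinate information-cost amortization you sketch, and it is worth adopting.
\end{itemize}
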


   To prove this, we define a relaxed version of direct sum \cite{chakrabartiswy01} and prove a lower bound for it. This is slightly stronger than the usual direct sum lower bound, and can be of independent interest. First recall the standard direct sum problem.

   \begin{definition}[Direct sum \cite{chakrabartiswy01}]
    Let $f$ be a function such that in the one-way communication model, Alice, holding input $x$, needs to send $\Omega(s)$ bits to Bob who holds input $y$, so that he can output $f(x,y)$ with probability at least $2/3$. Alice and Bob are given $k$ independent instances $(x_1,y_1),\ldots, (x_k,y_k)$ and they need to compute $f$ on \emph{each} of these instances with probability at least $2/3$.  
\end{definition}

Now consider the following relaxed version.

    \begin{definition} [Relaxed direct sum] 
        Let $f$ be a function such that in the one-way communication model, Alice, holding input $x$, needs to send $\Omega(s)$ bits to Bob who holds input $y$, so that he can output $f(x,y)$ with probability at least $p$. Alice and Bob are given $k$ independent instances $(x_1,y_1),\ldots, (x_k,y_k)$ and they need to compute $f$ on \emph{at least $9/10$} of these instances with probability at least $10p/9$.  
    \end{definition}

    \begin{lemma}
    The relaxed direct sum problem requires $\Omega(ks)$ communication for $p=2/3$.
    \end{lemma}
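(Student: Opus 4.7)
The plan is to lift the standard one-way direct sum lower bound via information complexity, bolting on an averaging step to absorb the ``$9/10$-fraction'' relaxation. Let $\Pi$ be any one-way protocol for the relaxed direct sum with communication $c$; draw the $k$ instances iid from the product hard distribution $\mu$ witnessing the single-instance $\Omega(s)$ bound, and let $M$ denote Alice's message. Writing $p_i := \Pr[\Pi\text{'s }i\text{th output is correct}]$, the hypothesis immediately gives $\sum_i p_i = \mathbb{E}[\#\text{correct}] \ge (9k/10)(10p/9) = pk$. Since each $p_i\le 1$, an elementary Markov-type computation then yields an absolute constant $\alpha > 1/2$ (e.g.\ $\alpha = 3/5$, giving at least $k/6$ such indices) for which $p_i \ge \alpha$ holds on $\Omega(k)$ indices~$i$.

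Using that Alice's inputs $X_1,\ldots,X_k$ are mutually independent under $\mu^k$, the standard chain-rule decomposition gives
\[
c \;\ge\; H(M) \;\ge\; I(X_1,\ldots,X_k;\,M) \;=\; \sum_{i=1}^k I(X_i;\,M\mid X_{<i}) \;\ge\; \sum_{i=1}^k I(X_i;\,M),
\]
where the last step uses that independence implies $I(X_i;M\mid X_{<i})\ge I(X_i;M)$. For each good index $i$, embed a fresh single-instance pair $(X,Y)\sim\mu$ at coordinate $i$ and let Alice and Bob fill the other coordinates from public randomness (feasible because $\mu$ is a product distribution, so each player samples their own marginal locally). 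Running $\Pi$ and returning its $i$th answer yields a single-instance one-way protocol $\Pi_i$ for $f$ that succeeds with probability $p_i > 1/2$ and whose external information cost is exactly $I(X_i;M)$. Invoking the information-cost version of the single-instance $\Omega(s)$ bound---standard for the canonical \idx-type problems to which this lemma is applied---gives $I(X_i;M) = \Omega(s)$ per good index, and summing over the $\Omega(k)$ good indices produces $c = \Omega(ks)$.

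The main obstacle is that last step: the hypothesis only promises a \emph{communication} lower bound, and one must convert it into an \emph{information-cost} lower bound at a success level bounded strictly above $1/2$. For product hard distributions on \idx-like problems (the intended use in this paper, cf.~\Cref{fact:idx-lb,fact:sparse-index}) this conversion is routine via standard message-compression arguments; in greater generality it amounts to a mild additional hypothesis on $f$. A secondary subtlety worth flagging is that the constants $9/10$ and $10p/9$ are calibrated precisely so that the averaging step forces the average per-coordinate success up to $p$, which is in turn exactly what is needed to guarantee a \emph{linear-in-$k$} (rather than merely positive) fraction of indices with per-coordinate success bounded away from $1/2$; any looser calibration would leave one short of $\Omega(ks)$.
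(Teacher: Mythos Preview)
Your approach is essentially correct for the intended applications (you rightly flag the extra hypotheses you need: a product hard distribution and an information-cost version of the single-instance bound), but it is considerably heavier than the paper's argument and takes a genuinely different route.

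The paper does not open up the information-complexity machinery at all. It gives a two-line black-box reduction from \emph{standard} direct sum to relaxed direct sum: given $k$ instances of the standard problem, Alice and Bob use shared randomness to apply a uniformly random permutation $\sigma$ to the coordinates, run the relaxed protocol $\Pi$ on the permuted inputs, and un-permute the answers. The same averaging you do in your first step shows $\sum_j \Pr[\Pi(j)\text{ correct}] \ge kp$; since each original instance lands in a uniformly random slot, its success probability is exactly $\tfrac{1}{k}\sum_j \Pr[\Pi(j)\text{ correct}] \ge p$. This solves every coordinate of the standard direct sum with probability $\ge p$, so the known $\Omega(ks)$ bound for standard direct sum \cite{chakrabartiswy01} applies directly. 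No Markov step, no chain rule, no embedding, and no need to assume anything beyond what the cited direct-sum theorem already needs.

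What your approach buys is self-containment: you re-derive the direct-sum bound rather than citing it, and your argument makes explicit which structural features of $f$ are being used. What the paper's approach buys is brevity and generality at the level of the lemma statement---the random-permutation trick converts the ``$9/10$-fraction'' guarantee into a per-coordinate guarantee with zero loss, so relaxed direct sum is exactly as hard as direct sum, full stop.
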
\label{lem:rel-dir-sum}

    \begin{proof}
        Given a communication protocol $\Pi$ for relaxed direct sum, we can use it to solve direct sum as follows. Given $k$ copies of $f$, where Alice is given the bits $A_1, \ldots, A_{k}$ and Bob is given the bits $B_1, \ldots, B_{k}$, they first use shared randomness to generate a random permutation $\sigma$ of $(1, \ldots, k)$. Next, they will run $\Pi$ on $A_{\sigma(1)}, \ldots, A_{\sigma(k)}$ and $B_{\sigma(1)}, \ldots, B_{\sigma(k)}$. Let $x_{\sigma(1)}, \ldots, x_{\sigma(k)}$ be the solutions computed by $\Pi$. Alice and Bob will output $x_1, \ldots, x_k$ as the solution to the direct sum problem.

        Let us show that every instance $(A_i, B_i)$ in the direct sum problem is solved with probability at least $p$. Let $\Pr [\Pi(j)]$ be the probability that the protocol $\Pi$ solves instance $j$ of the relaxed direct sum. Note that $\sum_{j \in [k]} \Pr [\Pi(j)]$ is equal to the expected number of instances that $\Pi$ solves. Thus, 
        $$\sum_{j \in [k]} \Pr [\Pi(j)] \geq \frac{9k}{10} \cdot \frac{10p}{9} = k \cdot p.$$

        By the law of total probability, we know that instance $(A_i, B_i)$ in the direct sum problem is solved correctly with probability $\sum_{j \in [k]} \Pr[\sigma(i) = j] \cdot \Pr[\Pi(j)]$. We know that $\Pr[\sigma(i) = j] = \frac{1}{k}$ for all $i$ and $j$. Thus, the probability that instance $(A_i, B_i)$ is solved correctly simplifies to $\frac{1}{k} \cdot \sum_{j \in [k]} \Pr [\Pi(j)] \geq \frac{1}{k} \cdot k \cdot p = p$. Thus, we see that every instance $(A_i, B_i)$ in the direct sum problem is solved with probability at least $p$. We know that the direct sum problem requires $\Omega(ks)$ bits of communication for $p=2/3$ \cite{chakrabartiswy01}, so the relaxed direct sum problem must also require $\Omega(ks)$ bits of communication.
    \end{proof}

    \begin{corollary} \label{cor:tedge-weak-dir-sum}
        Given $k$ independent copies of $\tedge_N$, solving $9/10$ of them with probability at least $2/3$ each requires $\Omega(kN)$ communication. 
    \end{corollary}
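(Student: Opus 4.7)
The plan is to obtain Corollary \ref{cor:tedge-weak-dir-sum} as an immediate instantiation of Lemma \ref{lem:rel-dir-sum} applied to $f = \tedge_N$. The lemma's hypothesis is an $\Omega(s)$ lower bound on the one-way randomized communication complexity of the single-copy problem at some constant success probability $p$, and this is exactly what Proposition \ref{prop:tedge-lb} delivers: it provides $R^{\to}(\tedge_N) = \Omega(N^2)$ via a one-way reduction from $\idx$ at the usual $2/3$ success probability. Hence the hypothesis of Lemma \ref{lem:rel-dir-sum} is met with $s$ at least $N$ (in fact much more) and with $p = 2/3$, which matches the per-instance success probability stated in the corollary.

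Plugging these choices into Lemma \ref{lem:rel-dir-sum} directly gives an $\Omega(ks)$ lower bound on the total communication needed to solve at least a $9/10$ fraction of $k$ independent $\tedge_N$ instances within the stated probability regime, yielding the claimed $\Omega(kN)$ bound. Formally, one would write: ``suppose for contradiction that a protocol $\Pi$ solves the $k$-copy problem with $o(ks)$ communication; then $\Pi$ witnesses a communication-efficient relaxed-direct-sum protocol for $f = \tedge_N$, contradicting Lemma \ref{lem:rel-dir-sum}.''

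There is no real technical obstacle at this step; the heavy lifting has been done in proving Lemma \ref{lem:rel-dir-sum} (via a random-permutation reduction to the standard direct sum of \cite{chakrabartiswy01}) and Proposition \ref{prop:tedge-lb} (via the reduction from $\idx$). The only care required is bookkeeping to align the probability parameters between the two statements --- verifying that Lemma \ref{lem:rel-dir-sum}'s $10p/9$ threshold for the multi-instance protocol, with $p = 2/3$, is compatible with the per-instance $2/3$-success formulation used in the corollary. This is automatic from how the random-permutation argument in the lemma converts an ``average 9/10 correct with probability $10p/9$'' guarantee into a ``per-instance correct with probability $p$'' guarantee via the law of total probability.
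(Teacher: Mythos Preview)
Your proposal is correct and matches the paper's own proof, which simply says the corollary follows from Lemma~\ref{lem:rel-dir-sum} combined with Proposition~\ref{prop:tedge-lb}. Your observation that the single-instance bound is actually $\Omega(N^2)$ (so the corollary understates what is provable) and your remark about aligning the probability parameters are both apt; the paper is equally informal on these points.
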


    \begin{proof}
        Follows from \Cref{lem:rel-dir-sum} and \Cref{prop:tedge-lb}.
    \end{proof}

We are now ready to prove \Cref{thm:fas-main}

    \begin{proof}[Proof of \Cref{thm:fas-main}]
        We reduce from the relaxed direct sum of $N = n/k$ independent copies of $\tedge_k$ (see \Cref{cor:tedge-weak-dir-sum}). The reduction involves creating $n/k$ copies of the construction in \Cref{lem:exactfast} to form a single larger tournament. In particular, Alice and Bob construct a tournament $G = (V, E)$ with $3n$ vertices. Alice, based on her input tournaments, constructs $N$ disjoint tournaments $T_1, \ldots, T_{N}$ in $G$, each with $k$ vertices. Let $T'_i$ be the subgraph of $T_i$ defined by the vertex set $T_i-\{u_i, v_i\}$ for all $i \in [N]$. Bob constructs $N$ disjoint acyclic tournaments $A_1, \ldots, A_{N}$, each with $2k$ vertices. He then adds edges from $u_i$ and $v_i$ to all vertices in $A_i$ and from all vertices in $A_i$ to all vertices in $T'_i$ for all $i \in [N]$. Finally, he adds edges from all vertices in $\{T_i, A_i\}$ to all vertices in $\{T_j, A_j\}$ for all $1 \leq i < j \leq N$. This completes the construction of the graph $G$, and it is easy to check that $G$ is a tournament. Given the approximate minimum FAS ordering $\pi$ of $G$, for every $i \in [N]$, Bob will announce that the edge in $T_i$ is oriented from $u_i$ to $v_i$ if $\pi(u_i) < \pi(v_i)$, and from $v_i$ to $u_i$ otherwise. 

        Let us show that a $(1+\epsilon)$ approximation algorithm for FAS solves at least a $9/10$ fraction of the \tedge instances correctly in the reduction above, where $\epsilon = 1/(10k^2)$. First, as shown in \Cref{lem:exactfast}, note that the optimal FAS ordering for a given sub-tournament $T_i$ is $\pi_i = (u_i, v_i, A_i, T'_i)$ if $T_i$ contains an edge from $u_i$ to $v_i$ and $\pi_i = (v_i, u_i, A_i, T'_i)$ otherwise, where $A_i$ and $T'_i$ are ordered by some minimum FAS ordering of their induced subgraphs. Moreover, ordering the sub-tournaments $T_i$ by index ensures that there are no back edges from one sub-tournament to another. Thus, the optimal FAS ordering for the complete tournament $G$ is of the form $\pi = (u_1, v_1, A_1, T'_1, \ldots, u_{N}, v_{N}, A_{N}, T'_{N})$, where $u_i$ and $v_i$ are swapped if $T_i$ contains an edge from $v_i$ to $u_i$. 

        Next, note that the size of the minimum FAS of $G$ is at most $Nk^2$. This is because there are no back edges between sub-tournaments, each sub-tournament has $k^2$ edges, and there are $N$ sub-tournaments. Thus, a $(1+\epsilon)$ approximation can have at most $\epsilon \cdot Nk^2 = N/10$ more back edges than the minimum FAS ordering. 

        Let $\sigma$ be the FAS ordering returned by the $(1+\epsilon)$ approximation algorithm. If $G$ contains an edge from $u$ to $v$ but $\sigma(u) > \sigma(v)$, or if $G$ contains an edge from $v$ to $u$ but $\sigma(v) > \sigma(u)$, then $\sigma$ contains a back-edge that $\pi$ does not, but the optimal ordering of the rest of the graph remains the same. Thus, every such inversion adds 1 to the FAS size of $\sigma$ as compared to that of $\pi$, so the number of such inversions in $\sigma$ is at most $N/10$. This implies that $9N/10$ of the sub-tournaments $T_i$ are in the same order in $\pi$ as in optimal FAS ordering $\sigma$, so as shown in \Cref{lem:exactfast}, the edges between $u$ and $v$ in these tournaments are correctly identified. Thus, at least a $9/10$ fraction of the \tedge instances are solved.

        This implies that we can use a $(1+\epsilon)$ approximation streaming algorithm for \fast on $n = Nk$ vertices to solve the relaxed direct sum of $N$ independent copies of $\tedge_k$. Since this requires $\Omega(Nk^2)$ bits of one-way communication (see \Cref{cor:tedge-weak-dir-sum}), the \fast approximation algorithm must require $\Omega(nk) = \Omega(n/\sqrt{\epsilon})$ space in a single pass.
    \end{proof}

Next, we show that solving exact \fassizt also requires $\Omega(n^2)$ space in a single pass. Note that, given an FAS ordering, we can find the corresponding number of back-edges in one more pass, but a single-pass lower bound for \fassizt does not imply the same lower bound for \fas.

\begin{restatable}{theorem}{fastsizelb}\label{thm:fastsize-lb}
       Solving \fassizt exactly in a single pass requires $\Omega(n^2)$ space.
   \end{restatable}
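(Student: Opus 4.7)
The plan is to reduce from $\tedge_N$, building on the construction from the proof of \Cref{lem:exactfast}. In that construction the minimum FAS size of $G$ equals $X := \delta(T', u) + \delta(T', v) + \text{MIN-FAS-SIZE}(T')$ \emph{regardless of the orientation of $(u, v)$}, which is precisely why it works for \fast but not directly for \fassizt. To inject direction sensitivity, I would augment the graph with one new vertex $w$ and Bob-added edges $(u, w)$ and $(w, v)$, together with $(w, a)$ for all $a \in A$ and $(w, t)$ for all $t \in T'$. The triangle on $\{u, v, w\}$ is then acyclic iff $(u, v) \in E(T)$, so I expect the minimum FAS size of the resulting graph $G_1$ to be $X$ in the $(u,v)$-case and $X+1$ in the $(v,u)$-case. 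The catch is that Bob does not know $X$, since it depends on Alice's entire tournament, so a single FAS value is not yet enough to recover the orientation.

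To overcome this I would run the algorithm on \emph{two} graphs in parallel: $G_1$ as above, and $G_2$ which is identical except that Bob's edges incident to $w$ are mirrored, namely $(v, w)$ and $(w, u)$. A symmetric analysis then gives $\text{MIN-FAS-SIZE}(G_2) = X + \indic[(u, v) \in E(T)]$, so the signed difference $\text{MIN-FAS-SIZE}(G_1) - \text{MIN-FAS-SIZE}(G_2) \in \{-1, +1\}$ pins down the orientation. The communication reduction then proceeds as follows: given any streaming algorithm for \fassizt using $s$ bits of space, Alice streams $T$ and sends the algorithm's state (of size $s$) to Bob. Bob simulates two continuations of the algorithm \emph{from the same state} --- one streaming the edges of $G_1 \setminus T$ and one streaming the edges of $G_2 \setminus T$ --- and decides from the sign of the difference. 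Since Alice's message is just a bit-string that Bob can duplicate and feed into two independent simulations, and any factor-two loss in success probability is absorbed by standard constant-factor amplification of the algorithm, this is a valid one-round protocol for $\tedge_N$ using $O(s)$ bits. With $n = 3N + 1$, \Cref{prop:tedge-lb} then gives $s = \Omega(n^2)$.

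The main obstacle is verifying the claimed MIN-FAS values, i.e., ruling out non-obvious orderings. The structural argument in the proof of \Cref{lem:exactfast} already shows that any deviation from the shape $\pi(u, v, w), A, T'$ (with $T'$ arranged by its own optimum) pays at least $2N$ extra back-edges, since placing any $T'$-vertex before an $A$-vertex triggers $\Omega(N)$ back-edges via the $A \to T'$ cut. The same argument should extend to forbid placing $w$ anywhere except in the first three slots, because $w$ has outgoing edges to all of $A \cup T'$. What remains is a finite case check: enumerate the six permutations of $\{u, v, w\}$ and count the triangle's back-edge contribution, confirming it is $0$ in the acyclic case and exactly $1$ in the cyclic case.
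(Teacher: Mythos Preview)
Your proposal is correct and follows the same high-level strategy as the paper --- reduce from $\tedge_N$ by constructing two tournaments $G_1,G_2$ that agree on Alice's edges and differ only in Bob's gadget, then compare their minimum FAS sizes --- but the gadget you use is genuinely different and more economical. The paper separates $u$ and $v$ by inserting an entire $2N$-vertex acyclic block $A_1$ between them (so the optimal ordering becomes $u,A_1,v,B_1,T'$), which forces the relative order of $u$ and $v$ regardless of the edge orientation; swapping the roles of $u$ and $v$ gives $G_2$, and the two graphs have sizes $5N$ each. Your construction instead adds a single vertex $w$ with edges $(u,w),(w,v)$ (and $w\to A\cup T'$), creating a $3$-cycle on $\{u,v,w\}$ precisely when the edge is $(v,u)$; the mirror $G_2$ flips the $w$-edges. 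This yields the same $\pm 1$ difference in FAS size with only $3N+1$ vertices.

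Your correctness sketch is sound: once the original $2N$-block argument forces $u,v$ ahead of $T'$, the contributions from $T'\to u$, $T'\to v$, and the internal $T'$ edges already total $X$, and the three edges among $\{u,v,w\}$ are disjoint from those and contribute at least $\indic[\text{$3$-cycle}]$ in any ordering; the exhibited orderings $u,w,v,A,T'$ (resp.\ $v,w,u,A,T'$) realise these lower bounds. Your remark about forking Bob's continuation from a single Alice state is a nice touch the paper does not use (it simply runs two independent copies), and your handling of the success-probability loss via constant-factor amplification is fine. The only place to tighten the write-up is that the forcing argument does \emph{not} by itself pin down $w$'s position; what you actually need, and what you implicitly use, is just the lower bound $X+\indic[\text{cycle}]$ conditional on $u,v$ preceding $T'$, together with the explicit ordering achieving it.
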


 \begin{proof}

    \begin{figure}[H]
        \centering
        
        \includegraphics[scale=0.6]{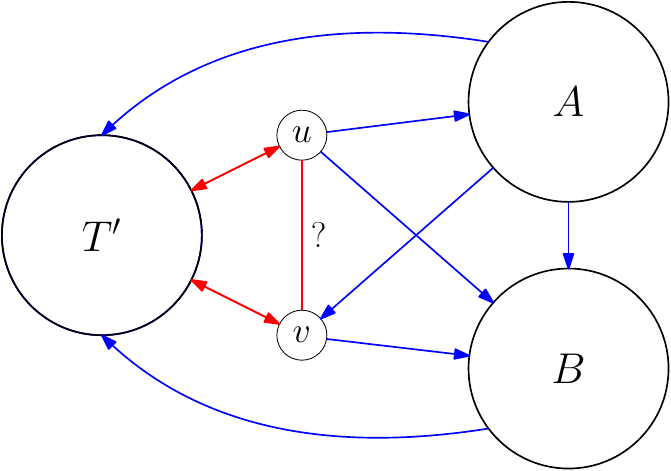}
        
        \caption{The graph $G_1$ constructed in the lower bound for \fassizt. Edges Alice adds are represented in red, and edges Bob adds are represented in blue. The `$?$' marks the edge Alice and Bob are trying to determine. Note that, in $G_2$, $u$ and $v$ are swapped.}
        
        \label{fig:fassiz-t-lb}
    \end{figure}
    
        We reduce from $\tedge_N$ using 2 instances of the streaming algorithm. Alice and Bob construct 2 tournaments $G_1 = (V_1, E_1)$ and $G_2 = (V_2, E_2)$ each with $5N$ vertices. In both $G_1$ and $G_2$, Alice, based on her input, constructs tournaments $T_1$ and $T_2$ with $N$ vertices each. Let $u$ and $v$ be the vertices between which Bob needs to identify the orientation of the edge, and let $T'$ be the subgraph of $T$ defined by the vertex set $V - \{u, v\}$. In $G_1$, Bob creates 2 new acyclic tournaments, $A_1$ and $B_1$, each with $2N$ vertices. He then adds edges from $u$ to $A_1$ and $B_1$, from $A_1$ to $v$, $B_1$ and $T'$, from $v$ to $B_1$, and from $B_1$ to $T'$. Similarly, in $G_2$, Bob creates 2 new acyclic tournaments, $A_2$ and $B_2$, each with $2N$ vertices. He then adds edges from $v$ to $A_1$ and $B_1$, from $A_1$ to $u$, $B_1$ and $T'$, from $u$ to $B_1$, and from $B_1$ to $T'$. Note that the only change in the two instances is that $u$ and $v$ are swapped. It is easy to verify that both $G_1$ and $G_2$ are tournaments. See \Cref{fig:fassiz-t-lb} for an illustration of $G_1$. Given the minimum FAS size of both $G_1$ and $G_2$, Bob can announce that the edge in $G$ is oriented from $u$ to $v$ if $G_1$ has a smaller FAS size than $G_2$, and from $v$ to $u$ otherwise. 
    
        Let us start by showing that $\pi_1 = \{u, A_1, v, B_1, T'\}$ is a minimum FAS ordering of $G_1$, where $A_1$, $B_1$ and $T'$ are ordered by a minimum FAS ordering of their subgraphs. The proof of this is very similar to that of \Cref{lem:exactfast}, and will borrow some notation. To begin, note that $\text{FAS}(\pi_1) \leq \delta(T', u) + \delta(T', v) + \text{MIN-FAS-SIZE}(T') + 1$, where equality holds when $G$ contains an edge oriented from $v$ to $u$. We want to show that every other ordering of vertices has at least this FAS size.

        First, note that as long as $u$ and $v$ are before $T'$ in the ordering, the FAS size is at least $\delta(T', u) + \delta(T', v) + \text{MIN-FAS-SIZE}(T')$. Thus, the only way to get a smaller FAS size than $\pi_1$ is by placing $v$ before $u$ in the case where $G$ contains an edge oriented from $v$ to $u$. However, this forces every vertex of $A_1$ to be either before $u$ or after $v$ in the ordering, adding a total of at least $4N$ back edges. Thus, placing $v$ before $u$ in this ordering gives us an FAS size of at least $\delta(T', u) + \delta(T', v) + \text{MIN-FAS-SIZE}(T') + 4N$, which is too large.

        This implies that the only way to reduce the FAS size is at least one of  $u$ and $v$ comes after some vertices of $T'$ in the ordering. However, note that the subgraph of $G$ formed by the vertices $u$, $v$, $B$ and $T'$ is isomorphic to the construction in \Cref{lem:exactfast}. Thus, we know that if at least one of  $u$ and $v$ comes after some vertices of $T'$ in the ordering, the FAS of the ordering is at least $\text{MIN-FAS-SIZE}(T') + 2N$, which is too large. Thus, we can conclude that $\pi_1$ is a minimum FAS ordering of $G_1$. By symmetry, we also see that $\pi_2 = \{v, A_2, u, B_2, T'\}$ is a minimum FAS ordering of $G_2$. This implies that if the edge in $G$ is oriented from $u$ to $v$, the minimum FAS size of $G_1$ is $\delta(T', u) + \delta(T', v) + \text{MIN-FAS-SIZE}(T')$ and the minimum FAS size of $G_2$ is $\delta(T', u) + \delta(T', v) + \text{MIN-FAS-SIZE}(T') + 1$. The opposite holds if the edge in $G$ is oriented from $v$ to $u$. Thus, Bob will correctly identify the orientation of the edge between $u$ and $v$ in $G$. 

        Finally, let us see how we can use a streaming algorithm $\mathcal{A}$ for \fassizt to solve $\idx_N$. As stated above, Alice will construct two streams, one for $G_1$ and for for $G_2$, and run an instance of $\mathcal{A}$ on each. She will then send the memory state of both instances to Bob, who will continue the construction of $G_1$ and $G_2$ as described above to obtain the minimum FAS size of each. Finally, he will compare the two values; if $G_1$ has a smaller minimum FAS size than $G_2$, he will declare that the edge in $T$, Alice's input, is oriented from $u$ to $v$, and if $G_2$ has the smaller minimum FAS size he will declare that the edge is oriented from $v$ to $u$. Thus, we can use 2 copies of a streaming algorithm for \fassizt on $n = 5N$ vertices to solve $\tedge_N$. Since $\tedge_N$ requires $\Omega(N^2)$ bits of one-way communication (see \Cref{prop:tedge-lb}), $\fassizt$ must require $\Omega(n^2)$ space in a single pass.  
    \end{proof}

\subsection{Shortest Distance}

We proved that NP-hard problems like \fas and \fast need $\Omega(n^2)$ space in a single pass. Now we prove that classical polynomial-time solvable problems such as \stdist also need $\Omega(n^2)$ space on tournaments.

 \begin{restatable}{theorem}{stdistthm}\label{thm:stdist}
    Any randomized single-pass algorithm that exactly computes the distance between two input nodes in a tournament requires $\Omega(n^2)$ space.
\end{restatable}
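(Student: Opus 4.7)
The plan is to prove the $\Omega(n^2)$ single-pass space lower bound via a one-way communication reduction from $\tedge_N$, which has one-way randomized complexity $\Omega(N^2)$ by \Cref{prop:tedge-lb}. The core idea is to embed Alice's tournament $T$ on $N$ vertices into a larger tournament $G$ on $n = N + 3$ vertices so that the exact $s$-$t$ distance in $G$ reveals the orientation of the specific edge $\{u,v\}\in T$ that Bob queries.

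First I would construct $G$ on the vertex set $V_G = V_T \cup \{s,t,d\}$, where $V_T = \{v_1,\ldots,v_N\}$ carries Alice's tournament. I would force $s$'s only out-neighbor in $V_G$ to be $u$ by orienting $s\to u$, $v_i\to s$ for every $v_i\neq u$, $d\to s$, and $t\to s$. Symmetrically, I would force $t$'s only in-neighbor to be $v$, via $v\to t$, $t\to v_i$ for every $v_i\neq v$, and $t\to d$. The dummy vertex $d$ provides a guaranteed length-4 fallback: I would add $u\to d$ and $d\to v$, with all other edges between $d$ and $V_T\setminus\{u,v\}$ oriented as $v_i\to d$. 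Alice streams the edges within $V_T$ (which depend only on her tournament $T$), sends the algorithm's memory state to Bob, who completes the stream with the remaining edges (all depending on $u,v$) and reads out the distance.

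Next I would verify the key claim: the $s$-$t$ distance equals $3$ if $u\to v$ in $T$, and equals $4$ if $v\to u$ in $T$. Since every $s$-$t$ path must begin $s\to u$ (the only out-neighbor of $s$) and end $v\to t$ (the only in-neighbor of $t$), no path of length $\leq 2$ exists. A length-$3$ path must have the form $s\to u\to x\to t$ with $x=v$, which exists if and only if the edge $u\to v$ is present in $T$. If instead $v\to u$ in $T$, then no length-$3$ path exists, but the path $s\to u\to d\to v\to t$ always achieves length $4$ by construction, so the distance is exactly $4$. Hence knowing the exact $s$-$t$ distance lets Bob recover the orientation of $\{u,v\}$, solving $\tedge_N$.

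The main things requiring care are (i) checking that $G$ is a well-defined tournament, which I would do via a brief case analysis over pairs of vertex types to confirm that each unordered pair carries exactly one directed edge, and (ii) ruling out unintended short $s$-$t$ paths through Alice's tournament edges---this follows immediately from the ``funnel'' structure at $s$ and $t$, since any path of length $\leq 3$ is forced to use the direct edge $u\to v$. Combining the reduction with \Cref{prop:tedge-lb}, any single-pass algorithm for \stdistt needs $\Omega(N^2)=\Omega(n^2)$ bits of space, as claimed.
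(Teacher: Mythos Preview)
Your proposal is correct and follows essentially the same approach as the paper: a one-way reduction from $\tedge_N$ in which the orientation of the queried edge $\{u,v\}$ determines whether the $s$-$t$ distance is $3$ or $4$, with extra vertices providing a guaranteed length-$4$ fallback path. Your construction is in fact slightly more economical than the paper's (three extra vertices versus five, and a single out-neighbor of $s$ and in-neighbor of $t$ rather than two each), which makes the case analysis for ruling out short paths a bit cleaner.
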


\begin{proof}

    \begin{figure}[H]
        \centering
        
        \includegraphics[scale=0.7]{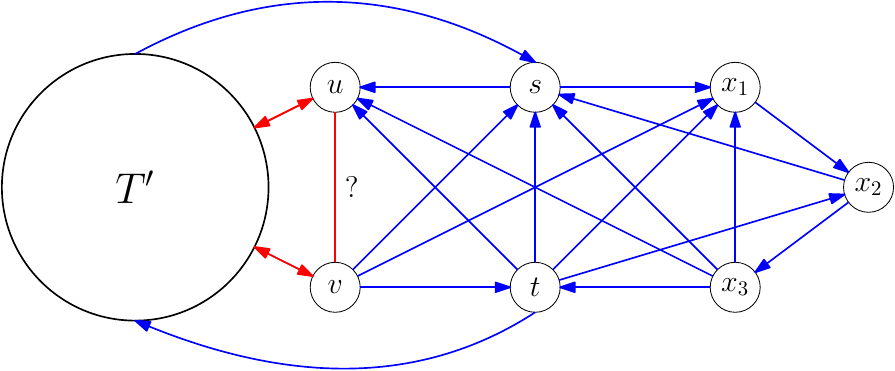}
        
        \caption{The graph constructed in the lower bound for \stdistt. Edges Alice adds are represented in red, and edges Bob adds are represented in blue. Edges independent of the input have been excluded for clarity. The `$?$' marks the edge Alice and Bob are trying to determine.}
        
        \label{fig:stdist-t-lb}
    \end{figure}
    
        We reduce from $\tedge_N$. Alice and Bob construct a tournament $G = (V, E)$ with $n = N+5$ vertices (see \Cref{fig:stdist-t-lb}). Alice, based on her input, initially constructs the tournament $T$ with $N$ vertices. Let $u$ and $v$ be the vertices between which Bob needs to identify the orientation of the edge, and let $T'$ be the subgraph of $T$ defined by the vertex set $V - \{u, v\}$. Let $s$, $t$, $x_1$, $x_2$ and $x_3$ be the remaining vertices. Bob adds edges $(s, u)$, $(v, t)$ and $(t, s)$. He also adds edges $(s, x_1)$, $(x_1, x_2)$, $(x_2, x_3)$ and $(x_3, t)$. Next, he adds edges $(x_3, u)$, $(v, x_1)$ and $(x_3, x_1)$. Finally, for every vertex $z$ without an edge with $s$, he adds edge $(z, s)$, and for every vertex $z$ without an edge with $t$, he adds the edge $(t, z)$. Any remaining edges are oriented arbitrarily to complete the tournament. Note that $G$ is guaranteed to contain a path from $s$ to $t$, namely $s \to x_1 \to x_2 \to x_3 \to t$.
        
        Let us show that $G$ contains an edge oriented from $u$ to $v$ if and only if the distance from $s$ to $t$ is less than 4. If $G$ contains the edge $(u, v)$, we see that there is a path from $s$ to $t$ of length 3, namely, $s \to u \to v \to t$. Thus, the distance from $s$ to $t$ is less than 4. Next, assume that $G$ does not contain the edge $(u, v)$. By our construction of $G$, there is no edge from $s$ to $t$, so the distance can not be 1. Moreover, note that the only edges leaving $s$ are $(s, u)$ and $(s, x_1)$, and the only edges entering $t$ are $(v, t)$ and $(x_3, t)$. We know that there is no edge from $\{u, x_1\}$ to $t$, or from $s$ to $\{v, x_3\}$, so the distance can not be 2. Finally, there are no edges from $\{u, x_1\}$ to $\{v, x_3\}$, so the distance can not be 3. Thus, the distance from $s$ to $t$ must be at least 4.

        This implies that we can use a streaming algorithm for \stdistt on $n = N+5$ vertices to solve $\tedge_N$. Since $\tedge_N$ requires $\Omega(N^2)$ bits of one-way communication (see \Cref{prop:tedge-lb}), \stdistt must require $\Omega(n^2)$ space in a single pass.
    \end{proof}

    \section{Resolving the Streaming Complexities of Acyclicity Testing and Sink Finding}

\subsection{Testing Acyclicity of Tournaments}
   
  Acyclicity testing is closely related to FAS problems, since lower bounds on acyclicity testing give imply corresponding lower bounds on FAS approximation algorithms. In particular, a graph (or tournament) has FAS size 0 if and only if it is acyclic, so any multiplicative approximation to FAS size must also test acyclicity. In \cite{ChakrabartiGMV20}, the authors show that \acyct requires $\Omega(n/p)$ space in $p$ passes. We first give an algorithm matching this lower bound and settle the streaming complexity of \acyct. Then we present a simpler proof of the lower bound.

 \begin{restatable}{theorem}{acyctub}\label{thm:acyc-t-ub}
     Given an input tournament $T$, for any $p\geq 1$, there is a deterministic $p$-pass $\tO(n/p)$-space algorithm for detecting whether $T$ is acyclic or not. 
\end{restatable}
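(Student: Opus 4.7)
The plan is to reduce acyclicity testing to checking a single numerical identity on the indegree sequence, and then to compute that sum in batches of $n/p$ vertices per pass.

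\textbf{Characterization.} For any tournament $T$, each unordered triple of vertices $\{a,b,c\}$ induces either a directed $3$-cycle or a transitive triangle. In the transitive case (say with sink $c$), the pair of in-neighbors $\{a,b\}$ of $c$ contributes exactly $1$ to $\binom{\din(c)}{2}$ and nothing to $\binom{\din(a)}{2}$ or $\binom{\din(b)}{2}$; in the cyclic case, no vertex of the triple receives two in-edges from within the triple, so the contribution to the sum is $0$. Summing over all triples, $\sum_v \binom{\din(v)}{2}$ equals the number of transitive triangles, which is $\le \binom{n}{3}$ with equality iff $T$ has no $3$-cycle, i.e.\ iff $T$ is acyclic. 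Using $\sum_v \din(v)=\binom{n}{2}$, this rewrites as the moment identity
\[
\sum_v \din(v)^2 \;=\; 2\binom{n}{3}+\binom{n}{2} \;=\; \frac{n(n-1)(2n-1)}{6},
\]
which is precisely the sum of squares of the (unique) indegree sequence $(0,1,\ldots,n-1)$ of an acyclic tournament.

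\textbf{Algorithm.} Partition $V$ into groups $V_1,\ldots,V_p$ each of size at most $\lceil n/p\rceil$, fixed in advance and known to all passes. Maintain a single global accumulator $S$, initialized to $0$. In pass $i$, allocate one integer counter $c_v$ for every $v\in V_i$; on reading an edge $(u,v)$, increment $c_v$ if $v\in V_i$, else ignore it. At the end of the pass, set $S \gets S + \sum_{v\in V_i} c_v^2$ and deallocate the counters. After pass $p$, output ``acyclic'' iff $S = \frac{n(n-1)(2n-1)}{6}$.

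\textbf{Analysis.} At the end of pass $i$ we have $c_v=\din(v)$ for every $v\in V_i$, so after pass $p$ the accumulator satisfies $S=\sum_{v\in V}\din(v)^2$; correctness then follows from the characterization. The space is $O((n/p)\log n)$ for the per-pass counters plus $O(\log n)$ for $S$, i.e.\ $\tO(n/p)$; the pass count is exactly $p$; and the procedure is fully deterministic. The only non-routine ingredient is the cherry-counting identity above, which is a few lines; I do not foresee any substantive obstacle beyond writing that identity carefully.
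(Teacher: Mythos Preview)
Your proposal is correct and takes essentially the same approach as the paper: partition $V$ into $p$ groups, spend one pass per group computing degrees, accumulate $\sum_v d(v)^2$, and test it against $\frac{n(n-1)(2n-1)}{6}$. The only cosmetic differences are that the paper tracks outdegrees rather than indegrees (which gives the same sum of squares, since $\dout(v)=n-1-\din(v)$ and $\sum_v \din(v)=\binom{n}{2}$) and that the paper simply cites Moon's \emph{Topics on Tournaments} for the characterization, whereas you supply the short transitive-triangle counting argument yourself.
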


     \begin{proof}
        The algorithm is given by \Cref{alg:acyc}. 
        
         \begin{algorithm}[H]
      \caption{Acyclicity testing in $p$ passes and $\tO(n/p)$ space
        \label{alg:acyc}}
      \begin{algorithmic}[1]
      \Statex \textbf{Input}: Stream of edge insertions of an $n$-vertex tournament graph $G=(V,E)$ ($p$ passes)
      \Statex
        \Statex \underline{\textbf{Initialize}:}
        \State Partition $V$ into subsets $V_1, \ldots, V_p$ of size at most $\lceil n/p \rceil$ each
        \State $S \gets 0$

      \Statex 
      \Statex\underline{\textbf{Process} (pass $i$:)}
      \State $d(v) \gets 0$ for each $v\in V_i$
      \For{$(u, v) \in E$}
      \If{$u \in V_i$}
      \State $d(u) \gets d(u)+1$
      \EndIf
      \EndFor
      
    \Statex 
          \Statex \underline{\textbf{Post-processing} (pass $i$):}
          \For{$v \in V_i$}
          \State $S \gets S + d(v)^2$
          \EndFor

    \Statex
    \Statex \underline{\textbf{Post-processing} (final)}
    \State \If{$S = \frac{n(n-1)(2n-1)}{6}$}
    \State Return true
    \Else
    \State Return false
    \EndIf
          
      \end{algorithmic}
    \end{algorithm}

        It is clear that the algorithm requires $\tO(n/p)$ space, since every pass requires storing $n/p$ degree values of size at most $n$ and one sum of size at most $n^3$. The correctness of the algorithm follows directly from \cite{Moon1968TopicsOT} (Theorem 9), which states that a tournament graph $T$ is acyclic if and only if the sum of our degrees of the vertices is exactly $\frac{n(n-1)(2n-1)}{6}$.
    \end{proof}

The following theorem was proven in \cite{ChakrabartiGMV20}, but we provide a simpler proof here. 

    \begin{theorem}[\cite{ChakrabartiGMV20}]
 \label{acyct-lbsimple}
        Any randomized $p$-pass algorithm for acyclicty testing on tournaments requires $\Omega(n/p)$ space.
    \end{theorem}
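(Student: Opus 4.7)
The plan is to reduce from $\disj_N$ (the unique-disjointness version, whose randomized communication complexity is $\Omega(N)$ by \Cref{fact:disj-lb}). For each $i \in [N]$, I introduce a constant-size gadget $G_i$ on three vertices $\{u_i, v_i, w_i\}$, and I linearly order the gadgets $G_1, G_2, \ldots, G_N$. Inside $G_i$, the edge $(u_i, v_i)$ is placed independently of the input (one player adds it). Alice, based on her bit $\bx_i$, orients the edge between $u_i$ and $w_i$: she inserts $(w_i, u_i)$ if $\bx_i = 1$ and $(u_i, w_i)$ otherwise. Bob, based on his bit $\by_i$, orients the edge between $v_i$ and $w_i$: he inserts $(v_i, w_i)$ if $\by_i = 1$ and $(w_i, v_i)$ otherwise. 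Finally, for every $i < j$, all nine edges between $G_i$ and $G_j$ are oriented from $G_i$ to $G_j$; these can be inserted by a single player without any input dependence, so they cost no communication.

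The key claim is that the resulting tournament $T$ on $n = 3N$ vertices is acyclic iff $\bx$ and $\by$ are disjoint. A straightforward case analysis of the three edges inside $G_i$ shows that $G_i$ contains a directed triangle $u_i \to v_i \to w_i \to u_i$ exactly when $\bx_i = \by_i = 1$; in each of the other three cases one of $u_i, v_i, w_i$ becomes a local source or sink inside $G_i$, ruling out a cycle inside the gadget. Moreover, since every inter-gadget edge is oriented forward in the fixed ordering $G_1 < \cdots < G_N$, no directed cycle of $T$ can visit two distinct gadgets. Therefore $T$ has a cycle iff some gadget does iff $\bx_i = \by_i = 1$ for some $i$ iff $\bx$ and $\by$ intersect.

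Any $p$-pass streaming algorithm $\cA$ for \acyct using space $s$ can then be converted, via the standard simulation, into a $(2p{-}1)$-round randomized communication protocol for $\disj_N$ of cost $O(ps)$: Alice runs $\cA$ on her edges, sends the memory state to Bob, who continues on his edges; they swap the state between passes. By \Cref{fact:disj-lb} we obtain $ps = \Omega(N) = \Omega(n)$, i.e.\ $s = \Omega(n/p)$, completing the lower bound.

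The only real subtlety is certifying that the inter-gadget edges cannot conspire with intra-gadget edges to create a new cycle; this is handled cleanly by the uniform forward orientation between gadgets, which guarantees that every directed cycle is confined to a single gadget. Everything else — the gadget case analysis and the communication-to-streaming translation — is routine.
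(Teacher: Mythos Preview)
Your proposal is correct and takes essentially the same approach as the paper: both reduce from $\disj_N$ using $N$ three-vertex gadgets with all inter-gadget edges oriented forward, so that a gadget forms a triangle iff $\bx_i=\by_i=1$. The only difference is a cosmetic relabeling of which intra-gadget edge is fixed and which two are controlled by Alice and Bob.
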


\begin{proof}

    \begin{figure}[H]
        \centering
        
        \includegraphics[scale=0.8]{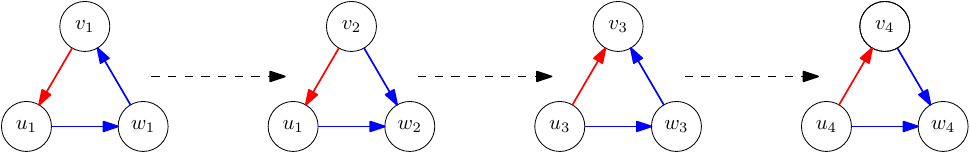}
        
        \caption{The tournament constructed in the lower bound for \acyct when Alice has the string $\bx = 1100$ and Bob has the string $\by = 1010$. Edges Alice adds are represented in red and edges Bob adds are represented in blue. Edges between clusters are represented by dashed arrows for clarity, and always go from left to right. Note that only the first cluster forms a cycle.}
        
        \label{fig:asyc-t-lb}
    \end{figure}
    
        We reduce from $\disj_N$. Alice and Bob construct a tournament $T = (V, E)$ with $n = 3N$ vertices, where vertices are labeled $\{u_1, v_1, w_1, \ldots, u_N, v_N, w_N\}$ (see \Cref{fig:asyc-t-lb}). Let $C_i$ refer to the subgraph induced by $u_i, v_i$ and $w_i$. Alice, based on her input $\bx$, adds the edge $(v_i, u_i)$ if $\bx_i = 1$, and edge $(u_i, v_i)$ if $\bx_i = 0$. Similarly, Bob, based on his input $\by$, adds edges $(u_i, w_i)$ and $(w_i, v_i)$ if $\by_i = 1$, and edges $(u_i, w_i)$ and $(v_i, w_i)$ if $\by_i = 0$. Bob also adds edges from every vertex in $C_i$ to every vertex in $C_j$ for all $1 \leq i < j \leq n$. This completes the construction of $T$. Note that $T$ is a tournament, since Alice and Bob together construct an edge between every pair of vertices within every cluster $C_i$ and between every pair of vertices in two different clusters $C_i$ and $C_j$.  

        Let us show that $T$ is acyclic if and only if $\bx \cap \by = \emptyset$. If $\bx$ and $\by$ are not disjoint, let $i$ be an index at which they intersect. Note that Alice adds edge $(v_i, u_i)$, and Bob adds edges $(u_i, w_i)$ and $(w_i, v_i)$. These 3 vertices therefore form a cycle in $T$, so $T$ is not acyclic. Thus, if $T$ is acyclic, $\bx \cap \by = \emptyset$. 
        
        Next, assume that $\bx \cap \by = \emptyset$. First, note that a cycle can not contain vertices from two distinct clusters $C_i$ and $C_j$, since if $i<j$ we see that there is no way to get from a vertex in $C_j$ to a vertex in $C_i$. It is also easy to check that if $\bx_i = 0$ or $\by_i = 0$, $C_i$ does not form a cycle; this can be seen in \Cref{fig:asyc-t-lb}, which illustrates all possible combinations. Thus, if $\bx \cap \by = \emptyset$, $T$ is acyclic. 

        This implies that we can use a streaming algorithm for \acyct on $n = 3N$ vertices to solve $\disj_N$. Since $\disj_N$ requires $\Omega(N)$ total bits of communication (see \ref{fact:disj-lb}), \acyct must require $\Omega(n/p)$ space in $p$ passes.
    \end{proof}

\subsection{Finding Sink (or Source) Nodes in DAGs}

Our final result is a lower bound on \sink. An $O(n/p)$ space upper bound can be obtained in $p$ passes by the following simple algorithm: partition the vertex set into $p$ groups. In the $i$th pass, keep a bit for each node in the $i$th group that keeps track of whether the node has an out-neighbor. If, after any of the passes, there is a vertex with no out-neighbor, the graph has a sink. If not, there is no sink. Our next result says that this is essentially the best we can do. 

\begin{restatable}{theorem}{sinkfind}\label{thm:sinkfind-lb}
     Any randomized $p$-pass algorithm that finds a sink node in a DAG requires $\Omega(n/p)$ space.
\end{restatable}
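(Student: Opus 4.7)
The plan is to reduce from \setint$_N$, which by \Cref{fact:disj-lb} has randomized communication complexity $\Omega(N)$ under unrestricted rounds. Given $\bx,\by$ with unique intersection index $i^*\in[N]$, Alice and Bob will collectively stream the edges of a graph $G$ on $n=N$ vertices $v_1,\ldots,v_N$. For each $i\in[N]$, let $e_i := (v_i, v_{(i \bmod N)+1})$; Alice streams $e_i$ iff $x_i=0$, and Bob streams $e_i$ iff $y_i=0$. (When $x_i=y_i=0$ the same edge is inserted twice, but this does not alter the underlying simple graph.) By the unique-intersection promise, the resulting edge set is exactly $E=\{e_i : i\neq i^*\}$.

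The core structural claim is that $G=([N],E)$ is a DAG whose \emph{unique} sink is $v_{i^*}$. Observe that $E$ is obtained from the directed $N$-cycle $v_1\to v_2\to\cdots\to v_N\to v_1$ by deleting the single edge $e_{i^*}$. What remains is the Hamiltonian path
\[
v_{(i^*\bmod N)+1}\to v_{(i^*\bmod N)+2}\to\cdots\to v_{i^*-1}\to v_{i^*},
\]
which is acyclic and whose only out-degree-zero vertex is its terminus $v_{i^*}$. Hence any algorithm $\mathcal{A}$ that solves \sink on $G$ must return $v_{i^*}$, from which Bob reads off the answer to \setint$_N$.

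A $p$-pass streaming algorithm with space $s$ yields a communication protocol exchanging $O(ps)$ bits: Alice and Bob alternately process their halves of the stream and forward $\mathcal{A}$'s memory state at each pass boundary. Combining this with $R(\setint_N)=\Omega(N)$ forces $ps=\Omega(N)$, i.e., $s=\Omega(n/p)$.

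The main conceptual hurdle is precisely the one flagged in the paper's preceding discussion: unlike the general-digraph setting of \cite{HRR98}, where a sink may be absent and one can use its nonexistence as a hidden bit, the DAG promise guarantees at least one sink, so the adversary must both make every hard instance acyclic \emph{and} arrange that the resulting unavoidable sink reveals the hidden information. The broken-cycle gadget resolves this in one stroke: the acyclicity is immediate from the Hamiltonian-path observation, and uniqueness of the sink ensures $\mathcal{A}$'s output is forced to encode $i^*$. Verifying the Hamiltonian-path structure (including the wraparound case $i^*=N$) is the only delicate part; the remainder of the argument is a standard streaming-to-communication reduction.
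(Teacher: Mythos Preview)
Your reduction from $\setint_N$ and the broken-cycle idea are clean, and the structural claim is correct: deleting $e_{i^*}$ from the directed $N$-cycle leaves a Hamiltonian path whose unique sink is $v_{i^*}$. At the high level this is the same strategy as the paper's proof, which also reduces from $\setint_N$ and arranges for the unique intersecting index to be the unique sink of a DAG.

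There is, however, a genuine technical gap. Whenever $x_i=y_i=0$, both Alice and Bob insert the \emph{same} edge $e_i$, so the stream they jointly produce contains duplicates. You acknowledge this but dismiss it because ``this does not alter the underlying simple graph.'' That is not enough. In the model of this paper (see \Cref{sec:prelims}: ``the elements of $E$ are inserted sequentially''; and \Cref{sec:impli}, where the communication version explicitly requires $E_A\cap E_B=\emptyset$), a \sink algorithm is only promised to behave correctly on streams in which each edge of the simple graph appears exactly once. Its behavior on a stream with repeated edges is undefined, so you cannot conclude it outputs $v_{i^*}$. As written, your argument only lower-bounds the strictly easier class of algorithms that also tolerate duplicate insertions.

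The paper avoids this by giving Alice and Bob \emph{disjoint} edge sets on $3N$ vertices $\{v_i,a_i,b_i\}_{i\in[N]}$: Alice orients the $\{v_i,a_i\}$ pair according to $x_i$, Bob orients the $\{v_i,b_i\}$ pair according to $y_i$ and adds forward edges to the next block, so that $v_i$ has an outgoing edge iff $x_i=0$ or $y_i=0$. Your broken-cycle idea can be repaired in the same spirit: replace each arc $v_i\to v_{i+1}$ by two parallel length-two paths $v_i\to a_i\to v_{i+1}$ and $v_i\to b_i\to v_{i+1}$, with the $a_i\to v_{i+1}$ and $b_i\to v_{i+1}$ edges always present (so $a_i,b_i$ are never sinks), while Alice adds $v_i\to a_i$ iff $x_i=0$ and Bob adds $v_i\to b_i$ iff $y_i=0$. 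Then the edge sets are disjoint, the graph is a DAG, and $v_{i^*}$ is the unique sink. After this patch, your construction is essentially a variant of the paper's.
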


    \begin{proof}

    \begin{figure}[H]
        \centering
        
        \includegraphics{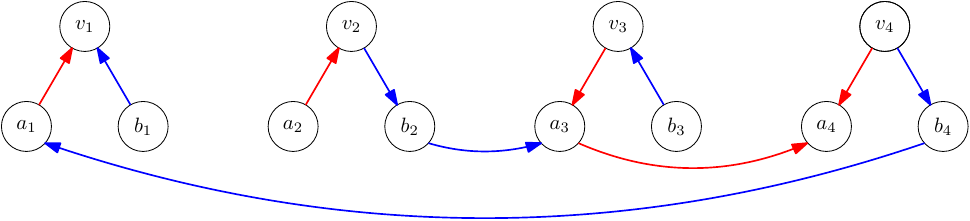}
        
        \caption{The graph constructed in the lower bound for \sink when Alice has the string $\bx = 1100$ and Bob has the string $\by = 1010$. Edges Alice adds are represented in red and edges Bob adds are represented in blue. Note that the sink $v_1$ corresponds to the shared first bit.}
        
        \label{fig:sink-lb}
    \end{figure}
    
        We reduce from $\setint_N$. Alice and Bob construct a graph $G = (V, E)$ with $n = 3N$ vertices, where vertices are labeled $\{v_1, a_1, b_1, \ldots, v_N, a_N, b_N\}$ (see \Cref{fig:sink-lb}). Let $j = i+1 \mod N$. Alice, based on her input $\bx$, adds edge $(a_i, v_i)$ if $\bx_i = 1$, and edge $(v_i, a_i)$ if $\bx_i = 0$. Similarly, Bob, based on his input $\by$, adds edge $(b_i, v_i)$ if $\by_i = 1$, and edge $(v_i, b_i)$ is $\by_i = 0$. If $\by_i = 0$, he also adds edge $(b_i, b_j)$ if $\by_j = 0$, and edge $(b_i, a_j)$ if $\by_j = 1$. This completes the construction of $G$. 
        
        Assume without loss of generality that $\bx$ and $\by$ intersect at index 1. We want to show that $v_1$ is the unique sink in $G$. By our construction, we know that Alice adds the edge $(a_1, v_1)$ and Bob adds the edge $(b_1, v_1)$. However, note that neither add any edge leaving $v_1$. Thus, $v_1$ is a sink in $G$. Moreover, note that $v_1$ is the only sink in $G$. This is because for all indices $j$, if $\bx_j = 1$ then $a_j$ has an outgoing edge to $v_j$, and if $\bx_j = 0$ then $a_j$ has an outgoing edge to either $a_{j+1 \mod N}$ or $b_{j+1 \mod N}$. Similarly, for all indices $j$, $b_j$ has an outgoing edge. Finally, since the intersection of $\bx$ and $\by$ is unique, we know that for all indices $j \neq 1$ either $\bx_j = 0$ or $\by_j = 0$, so $v_j$ must have an outgoing edge to either $a_j$ or $b_j$. Thus, $v_1$ is the unique sink in $G$. 

        For our construction to work, we must also show that $G$ is acyclic. To do so, we will construct a topological ordering on $G$. The ordering begins with section 1, which contains all vertices $a_i$ and $b_j$, where $2 \leq i, j \leq n$, such that $\bx_i = 1$ and $\by_j = 1$ respectively. These vertices are ordered in lexicographical order by index. Next, section 2 contains all vertices $v_i$, where $2 \leq i \leq n$, ordered by index. This is followed by section 3, namely all vertices $a_i$ and $b_j$, where $2 \leq i, j \leq n$, such that $\bx_i = 0$ and $\by_i = 0$ respectively, once again ordered by in lexicographical order by index. Finally, section 4 contains $a_1$, $b_1$ and $v_1$. For the example in~\ref{fig:sink-lb}, the ordering would be $\{a_2, b_3, v_2, v_3, v_4, b_2, a_3, a_4, b_4, a_1, b_1, v_1\}$. 
        
        Let us show that this is a topological ordering by verifying that every edge in our graph is a forward edge in the ordering. The first set of edges that Alice and Bob add are forward edges, since they either go from section 1 to section 2, from section 2 to section 3, or from $a_1$ and $b_1$ to $v_1$. The second set of edges they add always go from a vertex with index $i$ to a vertex with index $j = i+1 \mod N$, and unless $j = 1$, $i < j$ and both vertices must belong to section 3. Thus, all such edges where $j \neq 1$ are forward edges within section 3, and all such edges where $j = 1$ are forward edges from section 3 to section 4. Since the graph admits a topological ordering, it must be acyclic, completing the proof. 

        This implies that we can use a streaming algorithm for \sink on $n = 3N$ vertices to solve $\setint_N$. Since $\setint_N$ requires $\Omega(N)$ total bits of communication (see \ref{fact:disj-lb}, \sink must require $\Omega(n/p)$ space in $p$ passes.
    \end{proof}

\section{Implications of Our Results in Communication Complexity and Graph Theory}\label{sec:impli}

In this section, we note some important implications of our results in Communication Complexity and Combinatorial Graph Theory. These results might be of independent interest and relevant for future research on tournaments.

\subsection{Communication Complexity}
As a consequence of our results, the communication complexity of multiple tournament problems are resolved (up to polylogarithmic factors). For a graph problem $P$, we define its communication version as the following standard formulation in the two-party communication model: Alice and Bob hold graphs $G_A = (V, E_A)$ and $G_B = (V, E_B)$ respectively (with $E_A\cap E_B = \emptyset$), and they need to communicate with each other (in unrestricted rounds) so as to solve the problem $P$ on the union graph $G:= (V, E_A \sqcup E_B)$. The goal is to minimize the total number of bits communicated. Mande et al.~\cite{mande2024communication} studied the communication complexity of tournament problems such as finding \emph{kings} and maximum outdegree nodes. We add to this literature with the following results.

\begin{theorem}
    The deterministic and randomized communication complexity of each of the following problems is $\tTheta(n)$. 
    \begin{itemize}
        \item \sccdect 
        \item \reacht
        \item \sconnt
    \end{itemize}
\end{theorem}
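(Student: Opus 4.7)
The plan is to obtain the upper bounds by converting our single-pass semi-streaming algorithms into two-party protocols, and to obtain the lower bounds by observing that our streaming lower bound reductions are already communication reductions.

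For the $\tO(n)$ upper bound, the standard streaming-to-communication simulation applies. Alice treats her edge set $E_A$ as a prefix of the input stream and Bob treats his edge set $E_B$ as the suffix. Alice runs our deterministic single-pass semi-streaming algorithm from \Cref{alg:sccgraph} (as invoked by \Cref{cor:sccdect,cor:reacht,cor:strongconnt}) on $E_A$, then sends the resulting memory state to Bob, who resumes the same algorithm on $E_B$ and outputs the answer. Since the algorithm uses $O(n\log n)$ space, the total communication is $O(n\log n) = \tO(n)$ for each of the three problems. This bound holds deterministically, so it automatically applies to the randomized model as well. Note that for \sccdect we may need one extra message so that both parties (if so required) agree on the output, but this costs only another $O(n\log n)$ bits since the output itself is an ordering of $V$.

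For the $\tOmega(n)$ lower bounds on \reacht and \sconnt, we observe that the reductions in the proofs of \Cref{thm:reach-lb,thm:strongconn-lb} are stated as streaming reductions but are in fact native two-party communication reductions: Alice and Bob each add edges based only on their own $\disj_N$ inputs $\bx$ and $\by$, and any protocol solving \reacht (resp. \sconnt) on the constructed tournament $T$ with $n = 2N+4$ vertices can be used verbatim to solve $\disj_N$. Invoking \Cref{fact:disj-lb} gives $\Omega(N) = \Omega(n)$ bits of randomized (and hence deterministic) communication. For \sccdect, a single-pass communication lower bound of $\Omega(n)$ follows either from a trivial output-size argument (the SCC-graph representation needs $\Omega(n\log n)$ bits to describe) or, more conceptually, by noting that \reacht reduces to \sccdect in zero additional communication: given the topological ordering of the SCC-graph, reachability between any two designated vertices is immediately determined.

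The one minor subtlety is that the reductions in \Cref{thm:reach-lb,thm:strongconn-lb} include some ``input-independent'' edges (e.g., the backward edges from $t$ to $s$ and the edges between $a_i, b_j$ for $j < i$). In the two-party setting these are simply added by Bob (or split arbitrarily between the two players, as long as $E_A$ and $E_B$ remain disjoint), so the reduction goes through without modification. I do not expect any real obstacle here; the proof is essentially a packaging of results we have already established, and the main task is just to state the streaming-to-communication correspondence cleanly and cite the relevant theorems.
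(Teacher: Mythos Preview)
Your proposal is correct and follows essentially the same approach as the paper: upper bounds via the standard streaming-to-communication simulation of the single-pass semi-streaming algorithms, and lower bounds by recognizing that the $\disj_N$ reductions in \Cref{thm:reach-lb,thm:strongconn-lb} are already two-party communication reductions, with the \sccdect lower bound inherited from \reacht (or \sconnt). The extra details you add (handling of input-independent edges, the alternative output-size argument for \sccdect) are fine but not needed beyond what the paper states.
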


\begin{proof}
    The communication upper bounds immediately follow from our streaming upper bounds given by \Cref{cor:sccdect,cor:reacht,cor:strongconnt} using the standard streaming-to-communication protocol simulation. The lower bounds for \reacht and \sconnt follow from the proofs of \Cref{thm:reach-lb} and  \Cref{thm:strongconn-lb} respectively. These imply the lower bound for \sccdect since solving it also solves the former two problems. Further, since our upper bounds are deterministic and lower bounds are randomized, we resolve both deterministic and randomized communication complexities of the problems.
\end{proof}

We also settle the communication complexity of \sink.

    \begin{theorem}
        The deterministic and randomized communication complexity of \sink is $\Theta(n)$.
    \end{theorem}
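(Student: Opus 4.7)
The plan is to prove matching $O(n)$ upper and $\Omega(n)$ lower bounds on the randomized communication complexity of \sink; since the upper-bound protocol will be deterministic and the lower bound will hold against randomized protocols, this simultaneously settles both the deterministic and randomized complexities up to constant factors.

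For the upper bound, I would give a simple deterministic one-way protocol. Alice computes, for each vertex $v \in V$, the indicator bit $b_v = 1$ iff $v$ has at least one outgoing edge in $G_A$, and sends the resulting $n$-bit vector to Bob. Since $E_A \cap E_B = \emptyset$, a vertex $v$ is a sink of $G = (V, E_A \sqcup E_B)$ if and only if $b_v = 0$ and $v$ has no outgoing edge in $G_B$. Bob scans the vertices and outputs any $v$ satisfying both conditions; at least one exists because $G$ is promised to be a DAG. The total communication is $n + O(\log n) = O(n)$ bits.

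For the lower bound, I would reuse the reduction from $\setint_N$ used in the proof of \Cref{thm:sinkfind-lb}. In that construction on $n = 3N$ vertices, Alice's edges (the ones of the form $(a_i, v_i)$ or $(v_i, a_i)$ determined by $\bx$) are disjoint from Bob's edges (those incident to the $b_i$ vertices and the inter-cluster connectors determined by $\by$). Hence the edge partition is already compatible with the two-party communication model, and any protocol that finds the unique sink $v_{i^*}$ of the constructed DAG immediately recovers the intersection index $i^*$ where $\bx_{i^*} = \by_{i^*} = 1$. Combined with $R(\setint_N) = \Omega(N)$ from \Cref{fact:disj-lb}, this gives the $\Omega(n)$ randomized communication lower bound for \sink, which also implies the same deterministic lower bound.

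I do not foresee any real obstacle: both directions simply assemble results already established in the paper. The only minor check is that Alice and Bob indeed add disjoint edge sets in the reduction, which is evident from inspecting the construction in the proof of \Cref{thm:sinkfind-lb}. Putting the two bounds together yields the claimed $\Theta(n)$.
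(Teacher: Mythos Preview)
Your proposal is correct and matches the paper's proof essentially verbatim: the upper bound is the same one-way protocol (Alice sends the characteristic vector of her sinks, Bob intersects with his), and the lower bound is exactly the $\setint_N$ reduction from the proof of \Cref{thm:sinkfind-lb}. Your explicit check that Alice's and Bob's edge sets in that construction are disjoint is a small but appropriate addition, since the streaming reduction must be interpreted in the two-party model here.
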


    \begin{proof}
        The randomized lower bound is given in the proof of \Cref{thm:sinkfind-lb}. The matching deterministic upper bound follows from the protocol where Alice sends Bob the $n$-bit characteristic vector of the set of sinks in her graph. Bob checks which of them is a sink in his graph as well.
    \end{proof}

\subsection{Combinatorial Graph Theory}

We get the following graph theoretic facts from our results. We believe that that they are significant on their own and should find further applications in combinatorial analysis and algorithm design for tournaments.

\begin{fact}\label{thm:combone}
    For two nodes $s,t$ in a tournament $T$, if $\din(s)\leq \din(t)$, then there is a directed path from $s$ to $t$ in $T$.
\end{fact}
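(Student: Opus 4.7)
The plan is to reduce the fact to the structural properties of the SCC decomposition already established in \Cref{lem:scc-cut-orient} and \Cref{lem:scc-inc-deg}. Let $\langle V_1, \ldots, V_\ell\rangle$ denote the SCC-graph of $T$ (which exists since $T$, being a tournament, has no non-edge). Suppose $s \in V_i$ and $t \in V_j$. I would split into three cases depending on how $i$ and $j$ compare.

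First, if $i = j$, then $s$ and $t$ lie in the same strongly connected component, so by definition a directed path from $s$ to $t$ exists. Second, if $i < j$, consider the prefix SCC-cut $(U, V \setminus U)$ with $U := V_1 \cup \cdots \cup V_i$; then $s \in U$ and $t \in V \setminus U$. By \Cref{lem:scc-cut-orient}, every edge crossing the cut is oriented from $U$ to $V \setminus U$, and in particular there is a directed edge from $s$ to every vertex in later SCCs, giving a path from $s$ to $t$ (for instance $s \to t$ directly, since between any two vertices of a tournament there is an edge).

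The remaining case $i > j$ is where the hypothesis $\din(s) \leq \din(t)$ must be invoked, and this is the only step that requires care. Here the plan is to derive a contradiction. Consider the prefix SCC-cut $(U', V \setminus U')$ with $U' := V_1 \cup \cdots \cup V_j$; then $t \in U'$ while $s \in V \setminus U'$ (since $i > j$). By \Cref{lem:scc-inc-deg}, this forces $\din(t) < \din(s)$, contradicting the assumption $\din(s) \leq \din(t)$. Hence the case $i > j$ cannot occur, and in the remaining two cases we have already exhibited a directed path from $s$ to $t$.

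I do not anticipate any serious obstacle: the heavy lifting is done by the two cited lemmas, and the proof amounts to a clean case analysis on the relative order of the SCCs of $s$ and $t$ in the topological ordering of $T_\scc$.
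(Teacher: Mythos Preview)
Your proposal is correct and takes essentially the same approach as the paper: both arguments hinge on \Cref{lem:scc-inc-deg} to rule out the possibility that $t$'s SCC precedes $s$'s in $T_\scc$. The paper merely streamlines your three-case analysis into a direct contrapositive (``if there is no path from $s$ to $t$, then the SCC of $t$ precedes that of $s$, so $\din(t)<\din(s)$''), but the content is identical.
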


\begin{proof}
    If there is no path from $s$ to $t$, then the SCC containing $t$ appears before the SCC containing $s$ in $T_\scc$. Then, there is an SCC-cut $(U, V\setminus U)$ such that $t\in U$ and $s\in V\setminus U$. By \Cref{lem:scc-inc-deg}, $\din(t)< \din(s)$, which is a contradiction. Hence, there must be a path from $s$ to $t$.
\end{proof}

\begin{fact}
    The SCCs of a tournament are completely determined by the set of its indegrees (or outdegrees). 
\end{fact}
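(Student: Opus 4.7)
The plan is to derive this fact as a direct corollary of the correctness of Algorithm~\ref{alg:sccgraph} (Theorem~\ref{thm:scc-main}) together with the degree identity that holds in tournaments. Since $T=(V,E)$ is a tournament, every pair of distinct vertices shares exactly one directed edge, so $\dout(v)=n-1-\din(v)$ for every $v\in V$. Hence knowing the multiset of indegrees is equivalent to knowing the multiset of outdegrees, and both are available as soon as either is given.

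First, I would sort the vertices as $v_1,\ldots,v_n$ in nondecreasing order of $\din$ (breaking ties arbitrarily). By Lemma~\ref{lem:scc-inc-deg}, for any SCC-cut $(U,V\setminus U)$ every vertex of $U$ has strictly smaller indegree than every vertex of $V\setminus U$. Consequently each SCC of $T$ occupies a contiguous block in the sorted order, and the only remaining question is where the block boundaries lie.

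Second, I would invoke Lemma~\ref{lem:scc-cut-eqn}: a prefix $U=\{v_1,\ldots,v_k\}$ of the sorted order is a union of a prefix of SCCs of $T$ if and only if
\[
\sum_{u\in U}\bigl(\dout(u)-\din(u)\bigr)\;=\;|U|\cdot|V\setminus U|.
\]
This test depends only on the indegrees of the vertices in $U$ (via the identity $\dout=n-1-\din$), so for each $k\in[n-1]$ it is fully determined by the indegree data whether the $k$-th boundary is an SCC boundary. Running through $k=1,2,\ldots,n$ in order and cutting at every $k$ for which the identity holds therefore yields the complete list of SCC blocks, in their topological order. This is exactly what the post-processing phase of Algorithm~\ref{alg:sccgraph} does (with the bookkeeping on line $14$--$17$ shifting indegrees by the size of each completed block), and Theorem~\ref{thm:scc-main} guarantees that the output is the correct SCC decomposition of $T$.

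Thus every step of this recovery procedure is a deterministic function of the indegree multiset of $T$, which proves that the SCCs of $T$ are completely determined by its indegree sequence, and symmetrically by its outdegree sequence. There is no real obstacle: the statement is a clean restatement of what Algorithm~\ref{alg:sccgraph} already proves, once one observes that in a tournament the indegrees alone suffice as input.
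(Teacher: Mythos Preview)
Your proposal is correct and takes essentially the same approach as the paper: the paper's proof is the one-liner ``Follows from \Cref{alg:sccgraph} and its correctness,'' and you have simply unpacked that, invoking the tournament identity $\dout(v)=n-1-\din(v)$, \Cref{lem:scc-inc-deg}, \Cref{lem:scc-cut-eqn}, and \Cref{thm:scc-main} to spell out why the post-processing of the algorithm is a deterministic function of the indegree data. One small terminological wrinkle: you call the input the ``indegree multiset,'' but your procedure sorts \emph{labeled} vertices by indegree; this is harmless here because \Cref{lem:scc-inc-deg} forces vertices of equal indegree into the same SCC, so arbitrary tie-breaking cannot affect the output partition.
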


\begin{proof}
    Follows from \Cref{alg:sccgraph} and its correctness.
\end{proof}

\begin{fact}
    A tournament $T$ is strongly connected if and only if there is a path from $v_{\text{max}}$ to $v_{\text{min}}$, where  $v_{\text{max}}$ and $v_{\text{min}}$ are its max-indegree and min-indegree nodes respectively.
\end{fact}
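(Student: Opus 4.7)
The forward direction is immediate: if $T$ is strongly connected, then by definition every vertex is reachable from every other, so in particular there is a path from $v_{\text{max}}$ to $v_{\text{min}}$. The real content lies in the converse, and the plan is to prove it via the contrapositive using the SCC-structure results already developed in the paper, in particular \Cref{lem:scc-inc-deg}.

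Assume $T$ is not strongly connected, so its SCC-graph is the chain $\langle V_1, \ldots, V_\ell \rangle$ with $\ell \geq 2$. The first step is to locate $v_{\text{max}}$ and $v_{\text{min}}$ in this chain. For any $1 \leq i \leq \ell - 1$, the pair $(V_1 \cup \cdots \cup V_i,\; V_{i+1} \cup \cdots \cup V_\ell)$ forms an SCC-cut by \Cref{def:scc-cut}, and by \Cref{lem:scc-inc-deg} every node on the left has strictly smaller indegree than every node on the right. Applying this with $i = 1$ forces $v_{\text{min}} \in V_1$, and applying it with $i = \ell - 1$ forces $v_{\text{max}} \in V_\ell$.

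The second step is to rule out a path from $v_{\text{max}}$ to $v_{\text{min}}$. By \Cref{lem:scc-cut-orient}, every edge crossing any SCC-cut points from the left side to the right side, so no edge (and hence no directed path) in $T$ can go from $V_\ell$ back to $V_1$ when $\ell \geq 2$. Thus there is no path from $v_{\text{max}}$ to $v_{\text{min}}$, completing the contrapositive.

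I do not anticipate any real obstacle: the whole argument is a two-line consequence of \Cref{lem:scc-inc-deg} (which pins down where the extreme-indegree vertices sit in the SCC chain) together with \Cref{lem:scc-cut-orient} (which rules out back-paths across an SCC-cut). The only minor care needed is in the degenerate case $\ell = 1$, which is exactly the strongly connected case handled by the forward direction, so it causes no issue.
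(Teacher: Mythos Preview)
Your proof is correct, but it takes a different route from the paper's. The paper proves the ``if'' direction directly rather than by contrapositive: given any two vertices $s,t$ with $\din(s)\leq\din(t)$, it invokes the preceding fact (\Cref{thm:combone}) to get a path $s\to t$, and then builds the reverse path as $t\to v_{\text{max}}\to v_{\text{min}}\to s$, each hop again justified by \Cref{thm:combone} together with the assumed path $v_{\text{max}}\to v_{\text{min}}$. Your argument instead works structurally: assuming $T$ is not strongly connected, you use \Cref{lem:scc-inc-deg} to place $v_{\text{min}}$ in the first SCC and $v_{\text{max}}$ in the last, then invoke \Cref{lem:scc-cut-orient} to rule out any backward path. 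Both approaches ultimately rest on \Cref{lem:scc-inc-deg} (the paper's \Cref{thm:combone} is itself a corollary of it), but yours bypasses \Cref{thm:combone} entirely and is a bit more self-contained, while the paper's version is more constructive in that it explicitly exhibits the two-way paths witnessing strong connectivity.
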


\begin{proof}
    The ``only if'' direction follows from the definition of strong connectivity. For the ``if'' direction, consider any two nodes $s$ and $t$ in $T$, where $\din(s)\leq \din(t)$ WLOG. By \Cref{thm:combone}, there is a directed path from $s$ to $t$. Again, since $\din(t)\leq \din(v_{\text{max}})$ and $\din(v_{\text{min}})\leq \din(s)$, there is a path from $t$ to $v_{\text{max}}$, from $v_{\text{max}}$ to $v_{\text{min}}$, and from $v_{\text{min}}$ to $s$, i.e., from $t$ to $s$.  
\end{proof}
    
    \section*{Acknowledgements}

    We are extremely grateful to Srijon Mukherjee for letting us know about the IOITC question on determining strong connectivity of a tournament from its set of indegrees, solving and building on which we obtained our results for SCC decomposition. We also thank Sepehr Assadi, Amit Chakrabarti, and Madhu Sudan for helpful discussions. Finally, we thank the organizers of the DIMACS REU program for making this collaboration possible.  

\bibliography{refs}
\bibliographystyle{alpha}

    \end{document}